\documentclass[12pt]{article}
\usepackage{amsfonts,latexsym,amsthm,amssymb,amsmath,amscd,euscript}
\usepackage{framed}
\usepackage{array}
\usepackage{subcaption}
\usepackage{natbib}
\usepackage{pgfplots}
\pgfplotsset{compat = 1.18}
\usepgfplotslibrary{colormaps}
\usepackage{standalone}
\usepackage{fullpage}
\usepackage{comment}
\usepackage{algorithm}
\usepackage{algpseudocode}
\usepackage{rotating}
\usepackage{soul}
\usepackage{xcolor}
\usepackage{graphicx}

\usepackage{caption}
\usepackage{stmaryrd}
\usepackage{hyperref}
    \hypersetup{colorlinks=true,citecolor=blue,urlcolor =black,linkbordercolor={1 0 0}}
\usepackage[margin=1in]{geometry}
\usepackage{dsfont}
\usepackage{ stmaryrd }
\usepackage{listings}
\usepackage{setspace}
\usepackage{float}
\usepackage[shortlabels]{enumitem}
\usepackage{dcolumn}
\usepackage{refcount}
\makeatletter
\newcommand{\primecounter}[2]{%
  \expandafter\renewcommand\csname the#1\endcsname{\getrefnumber{#2}$'$}%
  \expandafter\def\csname theH#1\endcsname{#2prime}%
}
\newcommand{\dprimecounter}[2]{%
  \expandafter\renewcommand\csname the#1\endcsname{\getrefnumber{#2}$''$}%
  \expandafter\def\csname theH#1\endcsname{#2dprime}%
}
\makeatother

\allowdisplaybreaks[1]
\numberwithin{equation}{section}
\usepackage{tabularx}

\newtheorem{theorem}{Theorem}

\newtheorem{proposition}{Proposition}
\newtheorem{lemma}{Lemma}
\newtheorem{corollary}{Corollary}

\theoremstyle{definition}
\newtheorem{defn}{Definition}
\newtheorem{intuition}{Intuition}
\newtheorem{assumption}{Assumption}
\newtheorem{example}{Example}
\newtheorem{runningex}{Example}

\newtheorem{remark}{Remark}
\theoremstyle{remark}

\usepackage{bm}
\let\Lstroke\L
\renewcommand{\L}{\mathcal{L}}
\newcommand{\Y}{\mathcal{Y}}
\newcommand{\EE}{\mathcal{E}}
\newcommand{\B}{\mathcal{B}}
\newcommand{\Z}{\mathbb{Z}}
\newcommand{\U}{\mathcal{U}}
\newcommand{\R}{\mathbb{R}}
\newcommand{\N}{\mathbb{N}}
\newcommand{\C}{\mathbb{C}}
\newcommand{\PP}{\mathcal{P}}
\newcommand{\Leb}{\operatorname{Leb}}
\newcommand{\F}{\mathcal{F}}
\newcommand{\M}{\mathcal{M}}
\newcommand{\E}{\mathbb{E}}
\newcommand{\V}{\mathcal{V}}
\newcommand{\G}{\mathcal{G}}
\newcommand{\T}{\mathcal{T}}
\renewcommand{\P}{\mathds{P}}
\newcommand{\1}{\mathds{1}}
\newcommand{\dist}{\operatorname{dist}}
\renewcommand{\d}{\delta}
\newcommand{\Var}{\operatorname{Var}}
\newcommand{\g}{\gamma}
\renewcommand{\b}{\beta}
\renewcommand{\a}{\alpha}
\newcommand{\e}{\epsilon}
\renewcommand{\t}{\theta}

\title{Singular Equilibrium and Selection of Narratives}
\author{David J. Jin\thanks{\protect\setstretch{1}Department of Economics, Yale University. I am indebted to Kai Hao Yang for his invaluable mentorship. I thank Marina Halac, Navin Kartik, Tianhao Liu, Stephen Morris, Philipp Strack, and Daniel Windisch for helpful conversations. This project would not have been possible without Sumio Watanabe's textbook \textit{Algebraic Geometry and Statistical Learning Theory}. This material is based upon work supported by the National Science Foundation Graduate Research Fellowship Program under Grant No. DGE-2139841. Any opinions, findings, and conclusions or recommendations expressed in this material are those of the author and do not necessarily reflect the views of the National Science Foundation.}}
\date{\today}
\begin{document}
\maketitle

\begin{abstract}
\noindent An agent learns from data that many parameters of her model explain equally well. Berk's theorem states that her belief concentrates on the parameters that best fit the data, but does not select between them. We show that, with a continuum of parameters, the posterior concentrates on the best-fitting parameters with the smallest \textit{local learning coefficient}. We propose a definition of \textit{narratives}, subsets of the best-fitting parameters on which the learning coefficient is constant. Narratives are characterized by the number of coincidences they require (\textit{complexity}) and how exactly each must hold (\textit{robustness}). When the agent's actions affect her data, every long-run action is a \textit{singular equilibrium}, a best reply to a belief supported on the least complex and most robust narratives. Singular equilibrium refines Berk-Nash and self-confirming equilibrium by restricting overly complex or knife-edge off-path beliefs, and its strict uniform version characterizes uniformly stable actions.
 
\medskip
\noindent\textbf{Keywords:} Learning, complexity, Berk-Nash equilibrium, self-confirming equilibrium, singular learning theory
\end{abstract}

\newpage

\section{Introduction}

\subsection{Motivation and Overview}

We study a Bayesian agent who uses a subjective model to learn from data. In the exogenous learning framework, the agent observes outcomes drawn from an objective distribution and updates her belief over parameters, each of which specifies a subjective distribution of outcomes. In the endogenous learning framework, the agent also takes actions, which govern the objective distribution of the outcome. We accommodate both \textit{correctly specified} models, in which some parameter induces the true distribution of outcomes under every action, and \textit{misspecified} models, in which none does.

A large body of work studies the case in which the agent has finitely many parameters, each of which induces a distinct subjective distribution. By contrast, we assume the agent's parameter set is a continuum and several parameters may correspond to the same subjective distribution. This allows us to model agents whose data cannot distinguish economically distinct explanations. Examples include a monopolist whose sales cannot distinguish demand curves that agree at its price \citep{mclennan}, a policymaker whose data cannot distinguish natural from unnatural rate theories \citep{sargent1976}, and a worker who cannot tell whether her poor outcomes reflect her ability or her circumstances.

We propose a definition of \textit{narratives}, which formalize these economically distinct explanations. A narrative is a set of parameters that best fit the data for the same reason, described by coincidences, restrictions on the parameters that must hold exactly. We characterize narratives by their \textit{complexity}, the number of coincidences they require, and their \textit{robustness}, how exactly each must hold. For example, a worker may attribute her poor outcome solely to her lack of ability (one coincidence), or she may maintain that she is productive but her output is neutralized by her circumstances (two coincidences). 

In the exogenous learning framework, Berk's theorem implies that the agent's beliefs concentrate on the parameters that best fit her data, but does not say which of them she comes to believe \citep{berk}. Our first main result, Theorem \ref{main1}, refines Berk's theorem in the setting with a continuum of parameters and shows that her beliefs concentrate on the narratives with the smallest \textit{local learning coefficient}, a quantity from singular learning theory \citep{watanabe}. The worker comes to attribute her poor outcome solely to her lack of ability, although the second narrative fits her data equally well.

We show that the set of best-fitting parameters admits a \textit{narrative decomposition}, a partition into finitely many narratives, on each of which the local learning coefficient is constant. Our second main result, Theorem \ref{twosided}, then characterizes the local learning coefficient of narratives by their complexity and robustness. The complexity of a narrative is the number of coincidences it requires, and its robustness measures how exactly each coincidence must hold for nearby parameters to fit almost as well. Higher complexity raises the local learning coefficient and higher robustness lowers it, so the posterior favors simpler and more robust narratives. This is the sense of complexity in the principle of parsimony known as \textit{Occam's razor}, as in the first of Newton's rules of reasoning: ``No more causes of natural things should be admitted than are both true and sufficient to explain their phenomena" \citep[Book III, Rule I]{newton}. While most previous work in economics assumes complexity aversion as a preference of the agent, we show that it naturally emerges from Bayesian updating.

We then define an equilibrium solution concept, \textit{singular equilibrium}, an action that is a myopic best reply to a belief supported on the narratives with the smallest local learning coefficient. Singular equilibrium refines Berk-Nash equilibrium \citep{bne} and self-confirming equilibrium \citep{sce} by requiring that the agent's belief be supported on the least complex and most robust narratives. Our third main result, Theorem \ref{main2}, which refines \citet{fls} with a continuum of parameters, shows that every long-run action sustained by a forward-looking agent must be a singular equilibrium. Conversely, our fourth main result, Theorem \ref{main3}, shows that learning converges to an action from every prior that favors its selected narratives if and only if the action is a strict uniform singular equilibrium, extending the stability result of \citet{fls}. Thus, singular equilibrium suffices to predict long-run behavior.

A singular equilibrium need not exist, which implies the existence of \textit{cycles}. Surprisingly, cycles may persist even in correctly specified settings. \citet{ek} show that a correctly specified Bayesian agent's beliefs converge for almost every true parameter under her prior. Under correct specification, some parameter best fits the data under every action, and if it also has the smallest local learning coefficient under every action, a singular equilibrium exists. Singular equilibrium can therefore fail to exist only if, under some action, another best-fitting parameter has a smaller local learning coefficient. This happens only for a prior-measure zero set of parameters, so it may be contained in the null set of \citet{ek}. To our knowledge, ours is the first paper to analyze how cycles emerge under correct specification in the endogenous learning framework with a finite-dimensional parameter space and a smooth prior.

We apply our results in four settings in which several parameters fit the agent's data equally well for economically distinct reasons. First, in models of unobserved heterogeneity, an agent who allows for more types than the population has can explain her data by absent types or by identical types. Second, in the style of \citet{hks}, we analyze the behavior of an underconfident agent who is also ``hesitant." Third, an arbitrarily small misspecification can change which narrative the agent adopts. Fourth, singular equilibrium describes the behavior of an agent who takes actions based on predictions of a neural network, such as a large language model.

In the Supplemental Appendix, we extend our results to general outcome spaces, our first and third main results to settings in which several subjective distributions fit the data equally well, and our second main result to narratives in which the fit worsens only when several coincidences fail simultaneously. We bracket the smallest set to which the posterior converges without a rate. This set can break ties between narratives with the same learning coefficient. Lastly, we show that an agent who switches between models, as in \citet{ba}, settles on the model with the smaller learning coefficient.

\subsection{Related Literature}

\paragraph{Learning and misspecified learning.}
Our paper builds on the literature that derives equilibrium from learning, including \citet{sce}, \citet{fk1993,fk1995}, \citet{fl1993b}, \citet{dfl2004}, \citet{fl2006}, and \citet{bcmm2015}.. Its misspecified branch, which has grown since \citet{bne} defined Berk-Nash equilibrium, includes \citet{frs2017}, \citet{hks,hks2021}, \citet{ep2021}, \citet{epy2021}, \citet{bohrenhauser}, \citet{fii2023}, \citet{fl2023}, \citet{helibgober}, and \citet{ba}. Our paper is most similar to \citet{fls}, who show that every limit action is a uniform Berk-Nash equilibrium and that an action is uniformly stable if and only if it is a uniformly strict Berk-Nash equilibrium. Their equilibrium concepts restrict beliefs only to the parameters that minimize the \textit{Kullback-Leibler divergence} from the true distribution.

\paragraph{Singular learning theory.} Singular learning theory is the study of statistical learning in singular models, in which the map from parameters to distributions is not one-to-one or the Fisher information matrix is degenerate.
To our knowledge, ours is the first paper to use singular learning theory to study learning and equilibrium in economics. Watanabe developed singular learning theory in a series of papers and compiled them in a textbook \citep{watanabe}, which uses the algebro-geometric resolution of singularities \citep{hironaka} to derive asymptotic expansions of the marginal likelihood and the generalization error in singular models. Much of this theory assumes correct specification (realizability), and \citet{renorm} extends it to the misspecified setting (unrealizability), under a renormalizability condition, which is our Assumption \ref{fc2}(iv). \citet{watanabe} defines the key global quantity, the learning coefficient, which equals the \textit{real log canonical threshold} of algebraic geometry, and \citet{lau2025local} define the local learning coefficient around a fixed minimum, assuming no nearby minimizer is more degenerate. Because the learning coefficient replaces half the number of parameters in these expansions, it is mainly used as a criterion for selecting between singular models by their marginal likelihood or their predictive accuracy. We instead use the local learning coefficient to select among parameters of a single model that explain the data equally well. \citet{almost} discusses how almost all models used in machine learning have distinct parameters that induce the same distribution and the local learning coefficient varies over them. Singular models include layered neural networks, large language models, normal mixtures, binomial mixtures, Bayes networks, hidden Markov models, Boltzmann machines, stochastic context-free grammars, and reduced-rank regressions.

\paragraph{Cycles.}
\citet{nyarko} shows that a misspecified agent's actions may cycle forever, because the subjective distributions that best fit the data generated by each action make another action optimal. \citet{frs2017} show that a patient agent's beliefs may fail to converge even when a myopic agent's converge, \citet{epy2021} characterize the asymptotic behavior of the frequency of actions, and \citet{lanzani2025} shows that an endogenous concern for misspecification induces cycles. \citet{fls} obtain Nyarko's nonconvergence as a corollary of their necessity result, as a uniform Berk-Nash equilibrium need not exist. Unlike the cycles in past work, ours are induced by parameters, instead of by subjective distributions, and can persist even in correctly specified settings.

\paragraph{Narratives.}
\citet{shiller} describes narratives as simple stories that explain events and argues that they drive economic fluctuations. In \citet{es2020} and \citet{egs2025}, a narrative is a causal model fit to data, and the narratives that prevail are those that maximize anticipatory utility or political mobilization. In \citet{ss2021}, persuaders propose models, and receivers favor those that better explain the data. \citet{flynn} analyze narratives in macroeconomics. \citet{fm2026} show that penalizing complex models can eliminate the cycles in \citet{lanzani2025}. To our knowledge, ours is the first paper to show that Bayesian updating alone selects among narratives that explain the data equally well, by their complexity and robustness and independently of the prior.

\paragraph{Complexity.}

Economists have modeled complexity with tools from computer science, from finite-automaton strategies \citep{rubinstein1986,ar1988} to the computational hardness of finding regularities in data \citep{agps2005}. A recent literature studies how people respond to complex decisions \citep{oprea2020,eg2023,gg2024,puri2025}, and \citet{gabaix2025} models complexity aversion as dread from anticipated mistakes, which, when strong, makes optimal prices, taxes, and contracts simple. Closest to us is work on the complexity of models. Inductive inference without a subjective criterion such as simplicity overfits \citep{gs2012}, agents with simpler models are the most confident when data are scarce \citep{moopp2022}, and \citet{fm2026} add a penalty on complex worst-case models to an endogenous concern for misspecification. In most of this work, complexity is a cost or preference the modeler assumes. While the advantage of simplicity emerges from estimation \citep{moopp2022}, it vanishes as data accumulate. In our paper, complexity and robustness emerge from Bayesian updating and govern beliefs asymptotically. The Bayesian Occam's razor \citep{jb1992,mackay1992} selects between models. Ours selects among the parameters of one model, and robustness can override complexity.

\section{Model}\label{model}

\subsection{Endogenous Learning}

In each time period $t\in \{1,2,\hdots\}$, an agent chooses an action $a_t$ from a finite action set $A$, which induces an objective probability distribution $P^{a_t}_*$ over the finite set of possible outcomes $Y$. The chosen action $a_t$ and the realized outcome $y_t$ drawn from $P^{a_t}_*$ together determine the agent's payoff for that period via her utility function $u:A\times Y\to \R$.

The agent is uncertain about the objective distributions $\{P^a_*\}_{a\in A}$, but possesses a \textit{model}, which is a triple $(\Theta, \{Q^a_\t\}_{\t\in \Theta, a\in A}, \mu_0)$ consisting of her parameter space $\Theta\subseteq \R^d$, subjective distributions $Q^a_\t\in \Delta(Y)$ for all $a\in A$ and $\t\in \Theta$, and a prior belief over parameters $\mu_0\in \Delta(\Theta)$. We refer to the agent's model and the objective distributions together as the \textit{environment} $(\Theta, \{Q^a_\t\}_{\t\in \Theta, a\in A}, \mu_0, \{P^a_*\}_{a\in A})$. We introduce the first of our two assumptions maintained throughout.

\begin{assumption}\label{fc1}
    The environment $(\Theta, \{Q^a_\t\}_{\t\in \Theta, a\in A}, \mu_0, \{P^a_*\}_{a\in A})$ satisfies
    \begin{enumerate}[(i)]
        \item (Common support): $\{y\in Y : P^a_*(y)>0\} = \{y\in Y : Q^a_\t(y)>0\}$ for all $\t\in \Theta$ and $a\in A$.
        \item (Compact parameter space): $\Theta$ is compact and convex, and $\Theta=\{\t\in \Theta^{(R)}: q_1(\t) \geq 0, \hdots, q_k(\t) \geq 0\}$ where $q_1,\hdots, q_k \not\equiv 0$ are real analytic on some connected open set $\Theta^{(R)}\subseteq \R^d$.
        \item (Positive smooth prior): $\mu_0$ is absolutely continuous with respect to the Lebesgue measure on $\R^d$, with a density that is $C^\infty$ and strictly positive on $\Theta^{(R)}$.
    \end{enumerate}
\end{assumption}

Assumption \ref{fc1}(i) requires that the supports of the agent's subjective distributions be the same as the supports of the objective distributions---this assumption is identical to Assumption 1(i) in \citet{fls}. Assumption \ref{fc1}(ii) allows any closed ball, box, simplex, or convex polytope as the parameter space. Assumption \ref{fc1}(iii) restricts attention to smooth, strictly positive prior beliefs. In particular, the agent entertains a continuum of parameters and $\{Q^a_\t\}_{\t\in \Theta}$ is infinite (uncountable) for all $a\in A$.

The agent updates her prior belief $\mu_0$ over parameters using Bayes' rule. Let $\mu_t(\cdot\mid (a^t,y^t))$ be the agent's posterior belief over her parameter set $\Theta$ at the end of period $t\in \N$ after choosing actions $a^t=(a_s)_{s\leq t}$ and observing outcomes $y^t = (y_s)_{s\leq t}$. That is, her posterior belief assigns a probability to $U\subseteq \Theta$:
\[
\mu_t(U \mid (a^t,y^t)) = \frac{\int_U\left(\prod_{s\leq t} Q^{a_s}_\t(y_s)\right)\mu_0(\t) d\t}{\int_\Theta\left(\prod_{s\leq t} Q^{a_s}_\t(y_s)\right)\mu_0(\t) d\t}.
\]
Next, we establish the action dynamics. A \textit{policy} is a mapping from histories to actions $\sigma: \bigcup_{t\geq 0} (A\times Y)^t\to A$. The agent's objective is to choose an \textit{optimal} policy $\sigma$, i.e., one that maximizes her expected discounted value of per-period utility with discount factor
$\b \in [0,1)$,
\[
V(\sigma, \mu_t) = (1-\b)\E_{\sigma,\mu_t}\left[\sum_{s\geq 1} \b^{s-1} u(a_s,y_s)\right],
\]
at every $t \geq 0$ along every history for which $\mu_t$ is well defined. Standard dynamic programming arguments (Lemma \ref{dp}) guarantee that an optimal policy exists and depends on the history only through the belief $\mu_t$, so we restrict attention to belief-based policies $\pi : \Delta(\Theta) \to A$ when analyzing optimal play. Given a belief-based policy, the action process is defined recursively as $a_{t+1} = \pi(\mu_t)$ and $y_{t+1}\sim P^{a_{t+1}}_*$ with $\mu_{t+1}$ the Bayes update of $\mu_t$ after $(a_{t+1},y_{t+1})$. The objective distributions $\{P^a_*\}_{a\in A}$, the prior $\mu_0$, and a belief-based policy $\pi$ induce a probability measure $\P^{\mu_0}_\pi$ on the space of infinite histories $(a_\tau, y_\tau)_{\tau\geq 1}$.

The \textit{exogenous learning framework} is the special case with only one possible action $A=\{a\}$, in which case the agent observes i.i.d. draws $y_1,y_2,\hdots \sim P_*$, so only her belief-updating process matters. All objects defined in this paper for the endogenous learning framework have exogenous counterparts, which we denote by suppressing the action $a$. For example, $P^a_*$ is the objective distribution associated with action $a$ in the endogenous setting, while $P_*$ is the objective distribution in the exogenous setting.

\subsection{Kullback-Leibler Divergence}

The Kullback-Leibler (KL) divergence from objective distributions to the agent's subjective distributions is $D_{KL}(P\|Q) := \sum_y P(y) \log(P(y)/Q(y))$. For each $a\in A$, we define the set of minimizers $\Theta^*(a) \subseteq \Theta$ by,
\begin{equation}\label{eq:klminimizers}
\Theta^*(a) := \arg\min_{\t' \in \Theta} D_{KL}(P^a_*\|Q^a_{\t'}).
\end{equation}
In the exogenous learning framework, we denote the KL-minimizing set by $\Theta^*$. The environment is \textbf{correctly specified} if there exists a $\t\in \Theta$ with $Q^a_\t(y) = P^a_*(y)$ for all $y\in Y$ and $a\in A$, and \textbf{misspecified} otherwise. In words, the agent's model is correctly specified when a single parameter in her model exactly matches the objective distribution of every action she can take. All results in this paper hold for both the correctly specified and misspecified settings, unless stated otherwise. We therefore work with the \textit{recentered KL divergence}
\[
K^a(\t) := D_{KL}(P^a_*\|Q^a_\t) - \min_{\t'\in \Theta} D_{KL}(P^a_* \|Q^a_{\t'}),
\]
the minimum of which is always zero. We abuse terminology and refer to $K^a(\t)$ simply as the KL divergence. Note that $\Theta^*(a) = \arg\min_{\t'\in \Theta}K^a(\t')$.

A central theorem that has been applied in the economic learning literature is Berk's theorem \citep{berk}, which asserts that, in the exogenous learning framework, the posterior belief over $\Theta$ asymptotically concentrates on the KL minimizers $\Theta^*$. We introduce our central running example, which we enrich throughout the paper and in which an agent learns about her ability and her circumstances.

\begin{runningex}[Ability and circumstances]\label{staticexo}
    Consider the exogenous setting. The agent observes an unfavorable objective distribution $P_* = N(0,1)\otimes N(0,1)$, e.g., an inert career. The agent aims to explain this using a model with parameter space $\Theta = [0,1]^2$ with typical element $\t= (q,c) \in \Theta$ and subjective distributions $Q_\t = N(f_1(\t),1) \otimes N(f_2(\t),1)$ with
    \[
    f_1(q,c) = q(c-d),\quad f_2(q,c) = q(q-c)
    \]
    for some constant $d\in (0,1)$. Here, $q$ represents the agent's ability, and $c$ represents her circumstances, e.g., managerial support, quality of her team/firm, or fit between her skills and her role. She aims to explain the first coordinate of the poor outcome according to her output $f_1(q,c)$, which combines her ability and circumstances. She explains the second coordinate of the outcome according to her output net of the circumstances $f_2(q,c)$. One can compute the KL divergence:
    \[
    K(\t) = K(q,c) = \frac{1}{2}q^2[(c-d)^2 + (q-c)^2].
    \]
    Notice that the environment is correctly specified, and $\min_{\t\in \Theta}K(\t) = 0$ is attained on
    \[
    \Theta^* = \underbrace{\{(0,c) : c\in [0,1]\}}_{I_1} \cup \underbrace{\{(d,d)\}}_{I_2}.
    \]
    We call $I_1$ and $I_2$ \textbf{narratives}, which intuitively are explanations for \textit{why} the KL divergence is minimized. Throughout, we refer to narratives of the type $I_1$ as \textit{self-blame}: $I_1$ attributes her unsuccessful career entirely to having zero ability $q=0$, while the effects of her circumstances, whether favorable or unfavorable, are overshadowed by her inability. We refer to narratives of the type $I_2$ as \textit{circumstance-blame}: $I_2$ maintains that her ability is positive but exactly canceled out by unfavorable circumstances. In Section \ref{narrative}, we formalize narratives and show that $\Theta^*$ always admits a \textit{narrative decomposition}.
\end{runningex}

We now introduce the second of our two assumptions that we maintain throughout the paper.

\begin{assumption}\label{fc2}
    The environment $(\Theta, \{Q^a_\t\}_{\t\in \Theta, a\in A}, \mu_0, \{P^a_*\}_{a\in A})$ satisfies
    \begin{enumerate}[(i)]
        \item (Weak identification): For all $a\in A$, if $K^a(\t) = K^a(\t') = 0$ for $\t,\t'\in \Theta$, then $Q^a_\t(y) = Q^a_{\t'}(y)$ for all $y\in Y$.
        \item (Richness): KL divergence is not identically 0, i.e.,  $K^a\not\equiv 0$ on $\Theta^{(R)}$ for all $a\in A$.
        \item (Analyticity): For each $a\in A$, let $f^a(y,\t) := \log(Q^a_{\t_0}(y) / Q^a_\t(y))$ where $\t_0 \in \Theta^*(a).$ There exists an open set $\Theta^{(C)}\subseteq \C^d$ such that
        \begin{enumerate}[(a)]
            \item $\Theta^{(R)}\subseteq \Theta^{(C)}$.
            \item $\t\mapsto f^a(\cdot,\t)$ admits an extension $\Theta^{(C)} \ni \t\mapsto f^a(\cdot,\t)$ that is an $L^2(P^a_*)$-valued complex analytic function.
            \item $M^a(y) := \sup_{\t\in \Theta^{(C)}} |f^a(y,\t)|$ belongs to $L^2(P^a_*).$ 
        \end{enumerate}
        \item (Renormalizability): There exists $\e>0$ such that for all $a\in A$, there exists an open set $V_a\subseteq \Theta^{(R)}$ with $\Theta^*(a)\subseteq V_a$ and a constant $C>0$ such that
        \[
        \E_{P^a_*}[f^a(y,\t)] \geq C\E_{P^a_*}[f^a(y,\t)^2]
        \]
        for all $\t\in V_a\cap \{\t\in \Theta : K^a(\t) \leq \e\}$ where $K^a(\t) := \E_{P^a_*}[f^a(y,\t)]$ on $\Theta^{(R)}$.
    \end{enumerate}
\end{assumption}

Assumption \ref{fc2}(i) focuses our attention on settings where, if two distinct parameters minimize KL divergence, then they correspond to the same subjective distribution. We emphasize that the selection in our paper is among best-fitting parameters (narratives), instead of best-fitting subjective distributions. Supplemental Appendix \ref{supp:weakid} relaxes weak identification and extends Theorems \ref{main1} and \ref{main2}. Assumption \ref{fc2}(ii) rules out trivial models where every parameter is a best fit. Assumption \ref{fc2}(iii) is mainly technical and guarantees that $\t\mapsto Q^a_\t(y)$ is analytic, and is satisfied by a wide class of standard specifications in economics (linear/probit/logit signal structures, Gaussian signals, exponential families, etc.). Assumption \ref{fc2}(iv) is drawn from \citet{renorm} and intuitively requires that the KL divergence vanish no faster than the distribution moves as $\t$ converges to a KL minimizer in $\Theta^*(a)$. Weak identification and renormalizability are automatically satisfied in correctly specified environments.

\begin{remark}
    In the main text of this paper, we assume that the outcome space $Y$ is finite. With suitable adjustments to Assumptions \ref{fc1} and \ref{fc2} and an integrable utility, all results also hold when the outcome space is a standard Borel space with a $\sigma$-finite measure (Supplemental Appendix \ref{supp:general}), which covers the Gaussian signals in our examples.
\end{remark}

\section{Narratives and Exogenous Convergence}\label{narrsec}

This section develops a way to select between parameters within the KL-minimizing set, in the exogenous learning framework. Section \ref{llc} defines the \textit{local learning coefficient} (LLC) $\lambda(\t)$ of each parameter $\t \in \Theta^*$, and our first main result, Theorem \ref{main1}, shows that the agent's posterior belief over parameters distinguishes parameters in $\Theta^*$ at polynomial rates only through their LLCs. We then propose a formalization of \textit{narratives} as the connected pieces of $\Theta^*$ on which $\lambda$ is constant. Our second main result, Theorem \ref{twosided}, characterizes $\lambda$ by the number of \textit{coincidences} a narrative requires (its \textit{complexity}) and how exactly each coincidence holds (its \textit{robustness}).

\subsection{Local Learning Coefficient}\label{llc}

\begin{defn}[Near-explanations]\label{nearexp}
    Fix $a\in A$ and $\t\in \Theta^*(a)$. For $r>0$ and $\e>0$, the \textbf{near-explanations} of $\t$ at tolerance $\e$ are
    \[
    N^a_\t(r,\e) := \{\t'\in \Theta\cap B(\t,r) : K^a(\t')\leq \e\},
    \]
    and $V^a_\t(r,\e) := \Leb(N^a_\t(r,\e))$ is their volume, where $\Leb$ denotes the Lebesgue measure on $\R^d$.
\end{defn}

\begin{defn}[Local learning coefficient and concentration set]\label{concentrationset}
    Fix $a\in A$. The \textbf{local learning coefficient} of $\t\in \Theta^*(a)$ is
    \[
    \lambda(\t\mid a) := \lim_{r\downarrow 0} \liminf_{\e\downarrow 0} \frac{\log V^a_\t(r,\e)}{\log \e}.
    \]
    The \textbf{concentration set} $\Theta^{**}(a)\subseteq \Theta^*(a)$ is defined by
    \[
    \Theta^{**}(a) := \left\{\t\in \Theta^*(a) : \lambda(\t\mid a) = \underline{\lambda}(a)\right\}
    \]
    where $\underline{\lambda}(a) := \min_{\t\in \Theta^*(a)}\lambda(\t\mid a)$. For $U\subseteq \Theta^*(a)$, we define $\lambda(U\mid a) := \min_{\t\in U}\lambda(\t\mid a)$. Lemma \ref{lambdaprops} shows that the limit exists, that the minimum is attained, and that $\Theta^{**}(a)$ is compact.
\end{defn}

The concentration set is the subset of KL-minimizing parameters that moreover minimize the local learning coefficient $\lambda(\t\mid a)$. We defer the full economic analysis of $\lambda(\t\mid a)$ to Section \ref{candr} and paraphrase the core intuition of Definition \ref{concentrationset} in the following.

\begin{intuition}[Local learning coefficient]\label{rlctintuition}
    The local learning coefficient measures the concentration of parameters \textit{near} $\t$ with similar KL divergence. It is determined by \textbf{complexity} and \textbf{robustness}, which we define in Definitions \ref{simpdef} and \ref{robustdef}.

    Complexity is the codimension of the set of perfect explanations onto which the near-explanations shrink. As $\e\downarrow 0$, the near-explanations shrink onto the parameters near $\t$ that explain the data perfectly, and the set's dimension counts how many coordinates this set does not restrict. In Example \ref{staticexo}, the perfect explanations near a point of $I_1$ form the line $q=0$, so that ability must be zero and circumstances are unrestricted. The perfect explanations near $(d,d)$ form the single point $(d,d)$, so that ability and circumstances must both be $d$. Each coordinate that is fixed is a coincidence that must hold, so a lower dimension of perfect explanations requires more coincidences, which corresponds to higher complexity and a larger LLC.

    Robustness is the size of the set of near-explanations. In Example \ref{staticexo}, the fit at $(d,d)$ is perfect and worsens with the square of the distance from $(d,d)$, so the near-explanations of $(d,d)$ form a neighborhood of radius $\sqrt{\e}$. If instead the fit worsened with the fourth power of the distance, the neighborhood would have radius $\e^{1/4}$, which is larger for $\e<1$, and would shrink more slowly as $\e\downarrow 0$. A more robust parameter corresponds to a smaller LLC.
\end{intuition}

\begin{proposition}[Invariance]\label{invariance}
    Fix $a\in A$. The local learning coefficient $\lambda(\cdot\mid a)$ and the concentration set $\Theta^{**}(a)$ do not depend on the prior $\mu_0$, and are unchanged if the parameters are relabeled by a real analytic isomorphism $\varphi$, i.e., if $\t\mapsto Q^a_\t$ is replaced by $\t\mapsto Q^a_{\varphi(\t)}$.
\end{proposition}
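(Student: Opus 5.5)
Prior-independence is immediate from the definitions. The near-explanations $N^a_\t(r,\e)$ and their volume $V^a_\t(r,\e)$ are built only from $\Theta$, $K^a$ and Lebesgue measure. $K^a$ in turn depends only on $P^a_*$ and $\t\mapsto Q^a_\t$. So $\lambda(\t\mid a)$, $\underline{\lambda}(a)$, $\Theta^*(a)$ and $\Theta^{**}(a)$ never involve $\mu_0$. The content of the proposition is therefore the invariance under reparametrization.

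For the relabeling, let $\varphi$ be a real analytic isomorphism from a new parameter domain $\tilde\Theta$ onto $\Theta$, with analytic inverse on neighborhoods of the relevant sets. Set $\tilde Q^a_{\tilde\t}:=Q^a_{\varphi(\tilde\t)}$. Then $\tilde K^a=K^a\circ\varphi$, since the minimum of the KL divergence over the two parameter sets is the same. Hence $\tilde\Theta^*(a)=\varphi^{-1}(\Theta^*(a))$. The claim to prove is $\tilde\lambda(\tilde\t\mid a)=\lambda(\varphi(\tilde\t)\mid a)$. Given this, $\underline{\lambda}(a)$ is preserved and $\tilde\Theta^{**}(a)=\varphi^{-1}(\Theta^{**}(a))$.

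Fix $\tilde\t$ and put $\t=\varphi(\tilde\t)$. Because $\varphi$ is a $C^1$ diffeomorphism near the compact closure of a small ball, it is locally bi-Lipschitz. So for small $r$ there are constants $L\ge 1$ and $c\ge1$ with $B(\t,r/L)\subseteq\varphi(B(\tilde\t,r))\subseteq B(\t,Lr)$ and $c^{-1}\le|\det D\varphi|\le c$ on $B(\tilde\t,r)$. The image of $\tilde N^a_{\tilde\t}(r,\e)$ under $\varphi$ is $\{\t'\in\Theta\cap\varphi(B(\tilde\t,r)):K^a(\t')\le\e\}$. The change of variables formula then gives the sandwich
\[
c^{-1}V^a_\t(r/L,\e)\le \tilde V^a_{\tilde\t}(r,\e)\le c\,V^a_\t(Lr,\e).
\]
Take logs and divide by $\log\e<0$. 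The constants $\log c/\log\e$ vanish as $\e\downarrow0$, and dividing by a negative number reverses the inequalities. This yields
\[
\liminf_{\e\downarrow 0}\frac{\log V^a_\t(Lr,\e)}{\log\e}\le\liminf_{\e\downarrow 0}\frac{\log\tilde V^a_{\tilde\t}(r,\e)}{\log\e}\le\liminf_{\e\downarrow 0}\frac{\log V^a_\t(r/L,\e)}{\log\e}.
\]
Let $r\downarrow0$. By Lemma \ref{lambdaprops} the outer limits exist and both equal $\lambda(\t\mid a)$, because $Lr$ and $r/L$ both tend to $0$. The inequality holds for every small $r$, so the middle quantity has the same limit.

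The step needing most care is this limit exchange together with the boundary of $\Theta$. One must check that $\liminf_\e$ is monotone in $r$ (volumes are increasing in $r$), so that the limit in $r$ can be taken along the rescaled radii $Lr$ and $r/L$. One must also check that $\varphi$ maps $\tilde\Theta\cap B(\tilde\t,r)$ onto $\Theta\cap\varphi(B(\tilde\t,r))$ when $\t$ lies on $\partial\Theta$. I would handle the latter by defining $\tilde\Theta:=\varphi^{-1}(\Theta)$, so that the intersection with the parameter space commutes with $\varphi$ and the sandwich above holds verbatim.
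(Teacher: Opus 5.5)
Your proposal is correct and follows the paper's proof essentially step for step. You read prior-independence off the definitions, set $\tilde\Theta=\varphi^{-1}(\Theta)$, use local bi-Lipschitz and Jacobian bounds to get the sandwich $c^{-1}V^a_\t(r/L,\e)\le\tilde V^a_{\tilde\t}(r,\e)\le c\,V^a_\t(Lr,\e)$, and pass to the limit via the monotonicity in $r$ from Lemma \ref{lambdaprops}(a), concluding $\tilde\Theta^{**}(a)=\varphi^{-1}(\Theta^{**}(a))$ exactly as in Appendix \ref{app:lambda}.
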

\begin{proof}
    See Appendix \ref{app:lambda}.
\end{proof}

This proposition states that the local learning coefficient does not depend on the prior, and it is determined by the parameter space and the KL divergence alone. Relabeling the parameters as in Proposition \ref{invariance} changes the coordinates of every KL-minimizing parameter, but not its learning coefficient, so if two agents' models differ only by such a relabeling, the relabeling maps one agent's concentration set onto the other's.\footnote{If the prior density in Assumption \ref{fc1}(iii) is only the product of a strictly positive smooth function and a nonnegative real analytic function $\mu_2$ (Definition \ref{fc}(iii)), all results hold with the prior-weighted learning coefficient in place of $\lambda(\cdot\mid a)$, except that Theorem \ref{twosided}, Corollary \ref{occam}, and Theorem \ref{jointtwosided} require $\mu_2>0$ on the narratives involved. The first claim of Proposition \ref{invariance} weakens to: the prior-weighted learning coefficient and the associated concentration set are unchanged if $\mu_0$ is multiplied by a strictly positive smooth function. The reparametrization statement is unchanged.}

\subsection{Exogenous Convergence}\label{exosec}

We now present our first main result of convergence in the exogenous learning framework, which is a refinement of Berk's theorem \citep{berk} in the setting with a continuum of parameters. Berk's theorem asserts that the posterior concentrates on the KL-minimizing set $\Theta^*$. The theorem says nothing further about which parameters within $\Theta^*$ the agent comes to believe. Our result shows that the posterior concentrates on the subset $\Theta^{**}\subseteq \Theta^*$, and that this is tight up to a polynomial rate. To avoid confusion with endogenous convergence, we write $\Theta^{**}(a) = \Theta^{**}$ for the following exogenous convergence results.

\begin{theorem}[Exogenous convergence]\label{main1}
    Consider the exogenous learning framework.
    \begin{enumerate}[(i)]
        \item For every open set $U\supseteq \Theta^{**}$, there is $\e(U) > 0$ such that almost surely $\mu_t(\Theta\setminus U) \leq t^{-\e(U)}$ for all large $t$. In particular, $\mu_t(U)\to 1$ almost surely as $t\to \infty.$
        \item If $F\subseteq \Theta$ is closed and, for every open $U\supseteq F$, there is $\e>0$ such that almost surely $\mu_t(\Theta\setminus U)\leq t^{-\e}$ for all large $t$, then $F\supseteq \Theta^{**}.$
    \end{enumerate}
\end{theorem}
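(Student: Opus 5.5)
Write the posterior of a Borel set $S$ as a ratio of local marginal likelihoods,
\[
\mu_t(S)=\frac{Z_t(S)}{Z_t(\Theta)},\qquad Z_t(S):=\int_S \exp\bigl(-tK_t(\t)\bigr)\mu_0(\t)\,d\t,
\]
where $K_t(\t)=\frac1t\sum_{s\le t}f(y_s,\t)$ is the empirical log-likelihood ratio against a fixed $\t_0\in\Theta^*$; by weak identification (Assumption \ref{fc2}(i)) this does not depend on the choice of $\t_0$. The plan is to show that, for each compact piece of $\Theta$, $\log Z_t$ behaves like $-\lambda\log t$ up to $O(\log\log t)$ terms, with $\lambda$ the smallest LLC on that piece. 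This is the free-energy asymptotics of singular learning theory, and I would carry it out locally.

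\textbf{Step 1 (local resolution).} Lemma \ref{lambdaprops} gives compactness and attainment of $\lambda$. For each $\t\in\Theta^*$, apply Hironaka's resolution as in \citet{watanabe} to $K$ on a small neighborhood, including the analytic boundary functions $q_i$ from Assumption \ref{fc1}(ii). On each chart this gives $K\circ g=u^{2k}$ with a normal-crossing Jacobian $|u^h|b(u)$. Here renormalizability (Assumption \ref{fc2}(iv)) is what lets the empirical process be written as $K_t\circ g=u^{2k}-t^{-1/2}u^k\xi_t(u)$ in the misspecified case \citep{renorm}. The local zeta function $\int_{B(\t,r)}K^z\mu_0$ has largest pole $-\lambda$. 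I would identify this pole with the volume definition $\lambda(\t)$ by showing that, via the Mellin transform, the local state density satisfies $V_\t(r,\e)\asymp \e^{\lambda}(\log 1/\e)^{m-1}$. Positivity and smoothness of the prior (Assumption \ref{fc1}(iii)) make the pole independent of $\mu_0$, consistent with Proposition \ref{invariance}.

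\textbf{Step 2 (two-sided bounds).} From Step 1, together with a law of the iterated logarithm or a uniform almost-sure bound on $\sup_u|\xi_t(u)|$ (using Assumption \ref{fc2}(iii), the $L^2$ envelope), obtain almost surely for large $t$
\[
t^{-\lambda(W)}(\log t)^{-C}\le Z_t(W)\le t^{-\lambda(W)}(\log t)^{C},
\]
for small neighborhoods $W$ of points of $\Theta^*$. Away from $\Theta^*$, $K\ge\delta>0$, and a uniform strong law gives $Z_t(\Theta\setminus W)\le e^{-t\delta/2}$.

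\textbf{Step 3 (conclusion of (i) and (ii)).} For (i), cover the compact set $\Theta^*\setminus U$ by finitely many neighborhoods on which $\lambda\ge\underline\lambda+\eta$. This uses lower semicontinuity of $\lambda$ on $\Theta^*$: the pole order can only drop nearby, and $\Theta^*\setminus U$ misses $\Theta^{**}$. Combining the upper bound on these pieces with the lower bound $Z_t(\Theta)\ge Z_t(W_0)\ge t^{-\underline\lambda}(\log t)^{-C}$ for a neighborhood $W_0$ of a point of $\Theta^{**}$ gives $\mu_t(\Theta\setminus U)\le t^{-\eta/2}$. For (ii), argue by contradiction: suppose some $\t\in\Theta^{**}\setminus F$, and pick an open $U\supseteq F$ whose closure avoids a ball $W$ around $\t$. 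Then
\[
\mu_t(\Theta\setminus U)\ge\frac{Z_t(W)}{Z_t(\Theta)}\ge\frac{t^{-\underline\lambda}(\log t)^{-C}}{C' t^{-\underline\lambda}(\log t)^{C}},
\]
which is not $\le t^{-\e}$ for any $\e>0$.

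\textbf{Main obstacle.} The hardest step is the almost-sure lower bound in Step 2 in the misspecified, boundary-constrained case. The fluctuation term $t^{1/2}u^k\xi_t$ must not destroy the $t^{-\lambda}$ order pathwise, rather than only in expectation. I would first try restricting to the region $u^{2k}\ge t^{-1}\log t$, or its complement, where the fluctuation is dominated by $\sqrt{\log t}\,u^k\sup|\xi_t|$. I would then use an almost-sure bound $\sup|\xi_t|=O(\sqrt{\log\log t})$ from a LIL for Banach-valued analytic processes, which costs only polylogarithmic factors.
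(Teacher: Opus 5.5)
Your proposal is correct and follows essentially the paper's proof. The shared steps are resolution including the boundary functions, the standard form from renormalizability, a compact LIL giving $\|\xi_t\|_\infty=O(\sqrt{\log\log t})$, two-sided evidence bounds $t^{-\lambda}$ up to polylogarithmic factors, and a ratio argument. There are two cosmetic differences. The paper identifies the exponent with the volume-based $\lambda$ by direct monomial integrals (Lemma \ref{monomialvolume}, Proposition \ref{volumeequiv}) rather than by zeta-function poles. It also writes the fluctuation term with $u^{\lceil k\rceil}\le u^{k}$, because boundary charts can have half-integer $k$.

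For (i), you localize by covering $(\Theta\setminus U)\cap\Theta^*$ with small balls centered at points with $\lambda>\underline{\lambda}$. The paper instead shows chart by chart that the coordinates attaining $\underline{\lambda}$ stay bounded away from zero on $g^{-1}(\Theta\setminus U)$; both work. For the lower bound, the natural region is $\{u^{2k}\le 1/t\}$ rather than your tentative $t^{-1}\log t$ threshold. There $tK_t(g(u))\le 1+\|\xi_t\|_\infty$, so the fluctuation costs only a factor $e^{-1-\|\xi_t\|_\infty}$, which is at least $e^{-1}(\log t)^{-1}$ for large $t$.
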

\begin{proof}
    See Supplemental Appendix \ref{supp:main1}.
\end{proof}

Theorem \ref{main1} establishes almost sure concentration on $\Theta^{**}$, and, outside any neighborhood, the posterior mass decays polynomially in $t$. Thus, no limit point of the posterior places any mass on parameters that minimize KL divergence from the objective distribution but do not minimize the local learning coefficient $\lambda(\t)$. Part (ii) guarantees that $\Theta^{**}$ is the smallest closed set that the posterior concentrates on at a polynomial rate.

\begin{proof}[Proof sketch of Theorem \ref{main1}]
    Write $\mu_t(V) = Z_t(V) / Z_t(\Theta)$ where $Z_t(V) := \int_V e^{-t K_t(\t)}\mu_0(\t) d\t$ is the ``evidence" of a set $V$ and $K_t(\t) := t^{-1}\sum_{s\leq t}\log(Q_{\t_0}(y_s)/Q_\t(y_s))$, $\t_0\in \Theta^*$, is the empirical KL divergence. A resolution of singularities \citep{hironaka,watanabe} covers the parameters with small KL divergence with finitely many charts $[0,b]^d$ on which the KL divergence becomes a monomial $u^{2k}$ and the prior times the Jacobian becomes $\psi(u) u^h$ with $\psi$ bounded above and away from zero (Theorem \ref{partitiontheta2}). On each chart, the empirical KL divergence takes the \textit{standard form} $K_t = u^{2k} - t^{-1/2} u^k \xi_t(u)$ of \citet{watanabe}, where $\xi_t$ is an empirical process whose supremum grows no faster than $\sqrt{\log \log t}$ by a compact law of iterated logarithm from \citet{ledouxtalagrand}. Integrating $e^{-tu^{2k}}u^h$ over a chart gives $t^{-\lambda}$ up to logarithmic factors, where $\lambda$ is the smallest ratio $(h_j+ 1)/2k_j$, and Proposition \ref{volumeequiv} shows that these ratios compute the local learning coefficient of Definition \ref{concentrationset}. Hence, the evidence of a small ball around a point of $\Theta^{**}$ is at least $t^{-\underline{\lambda}}$ up to logarithmic factors. On every chart, the coordinates that attain $\underline{\lambda}$ vanish only on the preimage of $\Theta^{**}$, so they are bounded away from zero on the preimage of $\Theta\setminus U$, and the evidence of $\Theta\setminus U$ is at most $t^{-\lambda'}$ up to logarithmic factors for some $\lambda' > \underline{\lambda}$. Their ratio gives part (i). For part (ii), the evidence of any ball around a point of $\Theta^{**}$ is a fraction of $Z_t(\Theta)$ that decays at most logarithmically, so no closed set that excludes such a point attracts the posterior at a polynomial rate.
\end{proof}

\begin{intuition}\label{intuition1}
    The posterior belief concentrates on parameters that are the least complex and most robust, as intuitively defined in Intuition \ref{rlctintuition}. After $t$ observations, the likelihood of a parameter $\t'$ is approximately $e^{-tK(\t')}$, so the posterior effectively ignores parameters with $K(\t')$ much larger than $1/t$ and spreads its mass over the near-explanations $N_\t(r,1/t)$. The posterior mass near $\t$ is therefore of the order of $V_\t(r,1/t)$, which is $t^{-\lambda(\t)}$ up to logarithmic factors.

    \begin{enumerate}[(1)]
        \item \textbf{Complexity.} Consider two KL minimizers, one lying on a line of KL minimizers, as $I_1$ in Example \ref{staticexo}, and one isolated, as $I_2$. At time $t$, the posterior mass near the first is spread over a strip of width $t^{-1/2}$, and the mass near the second is spread over a disc of radius $t^{-1/2}$. The strip loses mass like $t^{-1/2}$ and the disc like $t^{-1}$, so the posterior odds of the first against the second grow like $t^{1/2}$. In general, a parameter on a set of perfect explanations of dimension $d-c$ loses mass in $c$ directions, at rate $t^{-c/2}$ when the fit is quadratic in each. The posterior favors the parameters that require fewer coincidences.
        \item \textbf{Robustness.} Now hold the dimension fixed and consider two isolated KL minimizers, one around which the fit deteriorates with the square of the distance and one around which it deteriorates with the fourth power. At time $t$, the near-explanations extend to distance $t^{-1/2}$ around the first and $t^{-1/4}$ around the second, so the second retains more mass. The posterior favors more robust parameters.
    \end{enumerate}

    The local learning coefficient combines the complexity and robustness into the single rate $t^{-\lambda(\t)}$. The posterior odds between two KL minimizers with coefficients $\lambda_1<\lambda_2$ are of the order $t^{\lambda_2-\lambda_1}$ up to logarithmic factors, which is the polynomial rate in Theorem \ref{main1}. The posterior odds depend on the prior only through the ratio of its densities at the two parameters, which is a constant.
\end{intuition}

We continue Example \ref{staticexo} and compute the local learning coefficients to find the parameters that are limits of exogenous learning.

\begin{runningex}[Computing the local learning coefficient]\label{resex}
    We compute the local learning coefficients in Example \ref{staticexo}. Since $\lambda(\t)$ is the limit of $\log V_\t(r,\e)/\log\e$, constant factors do not affect it, so $V_\t(r,\e)\asymp \e^s$ for small $r$ implies $\lambda(\t) = s$.\footnote{For nonnegative $f$ and $g$, we write $f \asymp g$ on $U$ if there are constants $0 < c \leq C$ such that $cg(u) \leq f(u) \leq Cg(u)$ on $U$, or for all sufficiently large $t$ or small $\e$.} Near a point $(0,c_0)$ of $I_1$, the bracket $(c-d)^2+(q-c)^2$ in $K$ is bounded above and away from zero, so $K\asymp q^2$ on $B((0,c_0),r)$ for small $r$. Then, the near-explanations $N_{(0,c_0)}(r,\e)$ contain and are contained in strips $\{0\leq q\leq c_1\sqrt{\e}\}$ and $\{0\leq q\leq C\sqrt{\e}\}$ inside $B((0,c_0),r)\cap\Theta$, for constants $0<c_1\leq C$. A strip of width $w$ and fixed length has volume $\asymp w$, so $V_{(0,c_0)}(r,\e)\asymp \sqrt{\e}$, and $\lambda((0,c_0)) = 1/2$.

    Near $(d,d)$, the factor $q^2$ is bounded above and away from zero, so $K\asymp (c-d)^2+(q-c)^2$ on $B((d,d),r)$ for small $r$. Then, the near-explanations $N_{(d,d)}(r,\e)$ contain and are contained in the discs of radius of order $\sqrt{\e}$. A disc of radius $\varrho$ has volume $\asymp \varrho^2$, so $V_{(d,d)}(r,\e)\asymp \e$, and $\lambda((d,d)) = 1$. The concentration set is therefore $\Theta^{**} = I_1\subseteq \Theta^*$. By Theorem \ref{main1}, the posterior concentrates on $I_1$, not on $I_2$, even though $I_2$ also minimizes KL divergence.
\end{runningex}

Narrative $I_1$ in Examples \ref{staticexo} and \ref{resex} explains the poor outcome through zero ability (self-blame), while $I_2$ explains it through ability and circumstances coinciding at a knife-edge. Neither is more accurate, as both lie in $\Theta^*$ and perfectly match the data. However, more parameters nearly explain the data near $I_1$ than near $I_2$. Narrative $I_1$ is a line within $\Theta$, and $I_2$ is a point, so the near-explanations of $I_1$ form a strip while those of $I_2$ form a disc, and the strip shrinks more slowly as the data accumulate. The narrative $I_1$ requires only one coincidence on ability, while $I_2$ requires two separate coincidences on ability and circumstances.

\subsection{Narratives}\label{narrative}

The narrative structure we saw in Example \ref{staticexo} holds in general. In every environment, the set of best-fitting parameters is a finite union of connected analytic manifolds, on each of which the local learning coefficient is constant. We call these manifolds \textit{narratives}.

\begin{defn}[Narrative]\label{narrdef}
    Fix $a\in A$. A \textbf{narrative} for $a$ is a connected analytic manifold $I\subseteq\Theta^*(a)$ on which $\lambda(\cdot\mid a)$ is constant. We write $\lambda(I\mid a)$ for the common value of $\lambda(\cdot\mid a)$ on $I$.\footnote{Precisely, what we call a connected analytic manifold contained in $\Theta^*(a)$ need not itself be a manifold. It is a set of the form $M\cap \Theta$, where $M$ is a connected real analytic submanifold of $\Theta^{(R)}$, defined in Assumption \ref{fc1}, with $M\subseteq \{\t\in \Theta^{(R)} : K^a(\t) = 0\}$, such that $M\cap\Theta$ is connected and is the closure, relative to $M$, of its interior relative to $M$. We define it in this way because $\Theta$ has a boundary, so a narrative may fail to be a manifold at $\partial \Theta$.}
\end{defn}
 
\begin{proposition}[Narrative decomposition]\label{narrdecomp}
    Fix $a\in A$ and, for $\ell\in\R$, let $L_\ell:=\{\t\in\Theta^*(a):\lambda(\t\mid a)=\ell\}$.
    \begin{enumerate}[(a)]
        \item (Existence) There exist finitely many pairwise disjoint narratives $I_1,\hdots,I_n$ for $a$ such that $\Theta^*(a)=I_1\cup\cdots\cup I_n$ and, for every $S\subseteq\{1,\hdots,n\}$ with $|S|\geq 2$, the set $\bigcup_{i\in S}I_i$ is not a narrative. Consequently, $\Theta^{**}(a)$ is the union of the $I_i$ with $\lambda(I_i\mid a)=\underline{\lambda}(a)$.
        \item (Uniqueness) Suppose that any two narratives $I,J$ for action $a$ with $\lambda(I\mid a)=\lambda(J\mid a)$ and $\overline{I}\cap\overline{J}\neq\emptyset$ are contained in a common narrative. Then the collection $\{I_1,\hdots,I_n\}$ in (a) is unique.
    \end{enumerate}
\end{proposition}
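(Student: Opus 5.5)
The plan has three parts: build a stratification of $\Theta^*(a)$ compatible with $\lambda$, merge its pieces into maximal narratives, and then prove uniqueness from maximality.

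\emph{Step 1: a $\lambda$-compatible stratification.} $\Theta^*(a)$ is the zero set of the real analytic function $K^a$ on $\Theta^{(R)}$, intersected with the semianalytic set $\Theta$. Hence it is a compact semianalytic set. Standard stratification theory (\L ojasiewicz) partitions it into finitely many connected analytic submanifolds. Relative to $\Theta$, each stratum can be taken as the closure in its ambient analytic manifold of its relative interior, which is the form required in the footnote to Definition \ref{narrdef}.

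Next we show $\lambda(\cdot\mid a)$ takes finitely many values and is constant on strata after refinement. Here we use the resolution of singularities from the proof sketch of Theorem \ref{main1} (Theorem \ref{partitiontheta2}) and Proposition \ref{volumeequiv}. Finitely many charts cover a neighborhood of $\Theta^*(a)$. On each chart, $K^a$ is a monomial $u^{2k}$ and the prior times the Jacobian is $\psi(u)u^h$. Then $\lambda(\t\mid a)$ is the minimum of the ratios $(h_j+1)/2k_j$ over the charts and coordinate faces whose image contains points arbitrarily near $\t$ inside the fiber.

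So each level set $L_\ell$ is a finite Boolean combination of images of coordinate faces under proper analytic maps. These images are subanalytic sets. Only finitely many $\ell$ occur, and each $L_\ell$ is subanalytic. We then take a subanalytic stratification of $\Theta^*(a)$ compatible with the finite family $\{L_\ell\}$. Every stratum is a connected analytic manifold with constant $\lambda$, so every stratum is a narrative. This yields finitely many pairwise disjoint narratives covering $\Theta^*(a)$.

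\emph{Step 2: maximality.} Among all finite partitions of $\Theta^*(a)$ into narratives, we pick one with the minimal number $n$ of pieces; such partitions exist by Step 1. Suppose $\bigcup_{i\in S}I_i$ were a narrative for some $|S|\ge2$. Replacing the pieces indexed by $S$ with their union gives a partition into fewer narratives, which contradicts minimality. This proves the maximality clause of (a).

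The consequence in (a) is immediate. $\Theta^{**}(a)$ is the set of points with $\lambda=\underline\lambda(a)$, and $\lambda$ is constant on each $I_i$. Hence $\Theta^{**}(a)$ is the union of those $I_i$ with $\lambda(I_i\mid a)=\underline\lambda(a)$.

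\emph{Step 3: uniqueness under the hypothesis of (b).} Take two decompositions $\{I_i\}$ and $\{J_k\}$, both satisfying (a). Fix $I_i$ and let $J_k$ be any piece meeting it. They share a point, so they have equal $\lambda$ there and therefore on all of both pieces, and their closures intersect. By the hypothesis of (b) they lie in a common narrative $N$.

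We now show $N$ absorbs each such pair into a single piece of each decomposition. $N\cap\Theta^*(a)$ is covered by the $I$'s. Applying the hypothesis repeatedly along chains of pieces with equal $\lambda$ and touching closures, we get a narrative containing every $I$-piece it meets. We would need to check that this chain closure can be arranged so that the covered set is exactly a union of $I$-pieces. Maximality in (a) then forces $N$ to contain at most one $I$-piece and at most one $J$-piece, so $I_i\subseteq J_k$. By symmetry $J_k\subseteq I_i$, so $I_i=J_k$.

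\emph{Main obstacle.} The hardest step is Step 1: proving that each $L_\ell$ is subanalytic, or equivalently that $\lambda$ is constructible with respect to some analytic stratification. What I would try first is to express $\lambda(\t\mid a)$ through the resolution map $g$, as a minimum over normal-crossing faces whose $g$-image contains $\t$. Images of faces under the proper analytic map $g$ are subanalytic, and Boolean combinations of subanalytic sets remain subanalytic.

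A secondary obstacle sits in Step 3. There the union of two narratives must be shown to be the closure of its relative interior within one ambient connected analytic manifold. I would handle this by arguing directly through the definition in the footnote to Definition \ref{narrdef}.
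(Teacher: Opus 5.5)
Your part (a) follows the paper's proof. The paper also uses Proposition \ref{volumeequiv} to write $\lambda(\theta\mid a)=\min_{u\in g^{-1}(\theta)}\lambda(u)$, with the minimum over charts of $\Gamma^+$. It observes that $\{\lambda\le\ell\}\cap\Theta^*(a)$ is the image under the proper map $g$ of a relatively compact semianalytic union of coordinate faces. Each $L_\ell$ is therefore a difference of subanalytic sets. It then stratifies and picks a partition with the fewest pieces. Your preliminary stratification of $\Theta^*(a)$ is unnecessary. Your worry about the footnote is also unnecessary: a connected submanifold contained in $\Theta^*(a)$ satisfies the footnote condition automatically.

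Part (b) has a gap exactly where you flagged it. Maximality in (a) only excludes narratives that are \emph{exact} unions of two or more pieces. The narrative $N\supseteq I_i\cup J_k$ supplied by the hypothesis may cut through other pieces of either family. So ``$N$ contains at most one $I$-piece and at most one $J$-piece'' does not follow as written.

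The missing step is short. While $N$ meets some piece $P$, from either decomposition, that it does not contain:
\begin{itemize}
\item $N$ and $P$ share a point, so $\lambda(N)=\lambda(P)$ and $\overline{N}\cap\overline{P}\neq\emptyset$;
\item the hypothesis then gives a narrative containing $N\cup P$;
\item the number of pieces fully contained strictly increases, so the process terminates.
\end{itemize}
At termination, $N$ contains every piece it meets. Each family covers $\Theta^*(a)\supseteq N$, so $N$ is simultaneously a union of $I$-pieces and a union of $J$-pieces. Maximality of both decompositions then gives $N=I_i=J_k$. Your ``secondary obstacle'' also disappears: you never have to show that a union is a narrative, because the hypothesis of (b) hands you one.

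The paper argues more intrinsically. It fixes a connected component $C$ of a level set $L_\ell$ and notes that every piece lies in $C$ or is disjoint from it. It takes a narrative $J\subseteq C$ containing as many pieces as possible. Connectedness of $C$ together with the hypothesis forces $J=C$. So $C$ is a narrative, hence a single piece. Thus every decomposition in (a) equals the set of connected components of the $L_\ell$. This avoids comparing two decompositions and identifies the unique one explicitly. Your route, once repaired, proves uniqueness without naming the decomposition.
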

\begin{proof}
    See Supplemental Appendix \ref{supp:narr}.
\end{proof}

Proposition \ref{narrdecomp}(a) states that $\Theta^*$ can be finitely partitioned into \textit{maximal} narratives, i.e., no union of two or more of these narratives is a narrative. Maximality intuitively rules out splitting a narrative into separate pieces. For instance, a piece of the self-blame line $I_1$ in Example \ref{staticexo}, such as $\{(0,c) : c\leq 1/2\}$, adds the claim that circumstances are not too good, but every point of $I_1$ induces the same distribution and has the same learning coefficient, so the posterior never favors the piece over the whole $I_1$. Proposition \ref{narrdecomp}(b) states that the partition is unique unless two narratives with the same learning coefficient intersect.

As exemplified by the self-blame narrative $I_1 = \{q=0\}$ and the circumstance-blame narrative $I_2 = \{(d,d)\}$ in Example \ref{staticexo}, narratives in general are described by finitely many \textit{coincidences}. That is, near any of its points, each narrative takes the form $I= \{\t\in \Theta : \phi_1(\t) = \cdots = \phi_c(\t) = 0\}$ for distinct analytic functions $\phi_1,\hdots,\phi_c$. Each equation $\phi_\ell(\t)=0$ is a \textit{coincidence}: a restriction on the parameters that must hold exactly for the narrative to explain the data. In Example \ref{staticexo}, narrative $I_1$ requires one coincidence, $q=0$, and narrative $I_2$ requires two coincidences, $c=d$ and $q=c$.

\begin{defn}[Coincidences]\label{coincdef}
    Fix $a\in A$, and let $I$ be a narrative for $a$ with $c:=d-\dim I$. Let $\t_0\in I$, and let $\phi_1,\hdots,\phi_c$ be real analytic functions with linearly independent gradients at $\t_0$ such that, near $\t_0$, $I=\{\t\in \Theta : \phi_1(\t)=\cdots=\phi_c(\t)=0\}.$ We call the equations $\phi_1(\t)=0,\hdots,\phi_c(\t)=0$ the \textbf{coincidences} of $I$ at $\t_0$.
\end{defn}

The condition that the gradients are linearly independent rules out redundant coincidences, like $\phi_1 = 2\phi_2$, and degenerate ones, like $q^2=0$ in place of $q=0$. Note that the functions $\phi$ may differ at distinct points of a single narrative. Near each of its points, a narrative restricts some coordinates and leaves the others free. Different points of the same narrative can restrict different coordinates, so the coincidences depend on the point. For example, consider a narrative that requires one coincidence, that ability times circumstances equals a threshold, $qc = 1/4$. Near $(1/4,1)$, the coincidence restricts ability to about $1/4$, and near $(1,1/4)$, the coincidence restricts circumstances to about $1/4$. While the coincidence function may differ along a narrative, the number of coincidences is always constant.

\subsection{Complexity and Robustness}\label{candr}

In this subsection, we define complexity and robustness as properties of a narrative, in both the correctly specified and misspecified settings. Our second main result shows that complexity and robustness together determine the local learning coefficient, and therefore the limits of exogenous learning.

\begin{defn}[Complexity]\label{simpdef}
    The \textbf{complexity} of a narrative $I$ is its codimension $c(I) := d - \dim I$, the number of its coincidences. A narrative is more complex the larger $c(I)$ is.
\end{defn}

A narrative is more complex if it requires more coincidences to explain the data, exactly in the sense of Occam's razor \citep[Book III, Rule I]{newton}. Next, we define robustness.

\begin{defn}[Robustness]\label{robustdef}
    Fix $a\in A$ and suppress it. Let $I$ be a narrative with coincidences $\phi_1(\t)=0,\hdots,\phi_c(\t)=0$ at $\t_0\in I$. For $\rho=(\rho_1,\hdots,\rho_c)>0$ and $\d,r>0$, let
    \[
    N^r_\d(\rho):=\left\{\t\in B(\t_0,r) : |\phi_\ell(\t)|\leq \d^{1/\rho_\ell}\text{ for all }\ell\right\}.
    \]
    \begin{enumerate}[(a)]
    \item The vector $\rho$ is a \textbf{lower robustness profile} of $I$ if, for some such $\t_0$ and $\phi$, (i) $K\leq C\sum_{\ell=1}^c|\phi_\ell|^{2\rho_\ell}$ on $\Theta$ near $\t_0$ for some $C>0$, and (ii) $\liminf_{\d\downarrow 0}\Leb(\Theta\cap N^r_\d(\rho))/\Leb(N^r_\d(\rho))>0$ for all sufficiently small $r>0$.
    \item The vector $\rho$ is an \textbf{upper robustness profile} of $I$ if, for some such $\t_0$ and $\phi$, $K\geq c_0\sum_{\ell=1}^c|\phi_\ell|^{2\rho_\ell}$ on $\Theta$ near $\t_0$ for some $c_0>0$. 
    \end{enumerate}
    If $\rho$ is both a lower and an upper profile with respect to the same coincidences, then $I$ is \textbf{separably robust} and $\rho_\ell$ is the \textbf{robustness} of coincidence $\ell$.
\end{defn}

A robustness profile measures how precisely each coincidence of $I$ must hold for a parameter to nearly explain the data. The entry $\rho_\ell$ is half the order at which the KL divergence deteriorates when coincidence $\ell$ is violated. When $\rho_\ell$ is small, the fit deteriorates quickly, and the narrative requires that coincidence $\ell$ be maintained sharply. When $\rho_\ell$ is large, the fit deteriorates slowly, and coincidence $\ell$ need only hold loosely. A lower robustness profile bounds the deterioration from above, so each coincidence is at least as robust as $\rho_\ell$ indicates. An upper robustness profile bounds it from below, so each coincidence is at most as robust. Definition \ref{robustdef}(a)(ii) ensures that $N^r_\d$ does not end up almost entirely outside $\Theta$ as $\d\downarrow 0$.

\begin{theorem}\label{twosided}
    Fix $a\in A$ and suppress it. Let $I$ be a narrative with complexity $c$. For any lower robustness profile $\underline{\rho}$ and any upper robustness profile $\bar\rho$ of $I$,
    \[
    \frac{1}{2}\sum_{\ell=1}^c\frac{1}{\bar\rho_\ell} \leq \lambda(I) \leq \frac{1}{2}\sum_{\ell=1}^c\frac{1}{\underline{\rho}_\ell}.
    \]
    In particular, if $(\kappa,\hdots,\kappa)$ is a lower and $(\a,\hdots,\a)$ an upper robustness profile of $I$, then $c/2\a\leq \lambda(I)\leq c/2\kappa$, and if $I$ is separably robust with robustness $\rho$, then $\lambda(I)=\frac{1}{2}\sum_{\ell=1}^c 1/\rho_\ell$.
\end{theorem}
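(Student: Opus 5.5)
The plan is to compute $\lambda(I)$ at the single point $\t_0\in I$ where the given coincidences $\phi_1,\hdots,\phi_c$ and the robustness bounds hold. Since $\lambda(\cdot)$ is constant on the narrative $I$ (Definition \ref{narrdef}), we have $\lambda(I)=\lambda(\t_0)$. The lower and upper profiles may come with different base points and different coincidence functions, but this causes no difficulty: each inequality is proved separately at its own base point, and both bound the same number $\lambda(I)$.

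First, straighten the coincidences. Because the gradients $\nabla\phi_1(\t_0),\hdots,\nabla\phi_c(\t_0)$ are linearly independent, we complete them with $d-c$ linear functions $\psi_{c+1},\hdots,\psi_d$ so that the map $\Phi=(\phi_1,\hdots,\phi_c,\psi_{c+1},\hdots,\psi_d)$ has invertible Jacobian at $\t_0$. By the (analytic) inverse function theorem, $\Phi$ is a diffeomorphism from a neighborhood $B(\t_0,r_0)$ onto its image, with Jacobian determinant bounded above and away from zero. Lebesgue volumes of subsets of $B(\t_0,r_0)$ are therefore comparable, up to constants, to the volumes of their images in the coordinates $(x,z)=(\phi,\psi)$. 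In these coordinates, for fixed small $r<r_0$ and small $\d$, the set $N^r_\d(\rho)$ is squeezed between two product sets of the form $\prod_\ell[-\d^{1/\rho_\ell},\d^{1/\rho_\ell}]\times(\text{$z$-ball of radius}\asymp r)$. Hence
\[
\Leb(N^r_\d(\rho))\asymp \d^{\sum_\ell 1/\rho_\ell},
\]
with constants depending on $r$ but not on $\d$.

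For the upper bound $\lambda(I)\le \frac12\sum_\ell 1/\underline\rho_\ell$, use condition (a)(i), $K\le C\sum_\ell|\phi_\ell|^{2\underline\rho_\ell}$ near $\t_0$. Set $\d=(\e/(cC))^{1/2}$. Then every $\t\in\Theta\cap N^r_\d(\underline\rho)$ satisfies $K(\t)\le\e$, so $N_{\t_0}(r,\e)\supseteq\Theta\cap N^r_\d(\underline\rho)$. Condition (a)(ii) says $\Theta$ retains a positive fraction of $N^r_\d$, so
\[
V_{\t_0}(r,\e)\gtrsim \e^{\frac12\sum_\ell 1/\underline\rho_\ell}.
\]
Since $\log\e<0$, this gives $\log V/\log\e\le \frac12\sum_\ell1/\underline\rho_\ell+O(1/|\log\e|)$. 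Taking $\liminf_{\e\downarrow0}$ and then $r\downarrow0$ yields the bound.

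For the lower bound, use (b): $K\ge c_0\sum_\ell|\phi_\ell|^{2\bar\rho_\ell}$ on $\Theta$ near $\t_0$. Then $K(\t)\le\e$ forces $|\phi_\ell(\t)|\le(\e/c_0)^{1/2\bar\rho_\ell}$ for each $\ell$. Hence $N_{\t_0}(r,\e)\subseteq N^r_\d(\bar\rho)$ with $\d=(\e/c_0)^{1/2}$, so
\[
V_{\t_0}(r,\e)\lesssim\e^{\frac12\sum_\ell1/\bar\rho_\ell}.
\]
The sign flip from dividing by $\log\e<0$ then gives $\lambda(\t_0)\ge\frac12\sum_\ell1/\bar\rho_\ell$. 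The special cases in the statement follow by substitution. In particular, separable robustness gives equality, because both inequalities then hold with the same $\rho$.

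The main obstacle is bookkeeping rather than substance. One must make sure the constants in the volume comparison are uniform in $\e$ for each fixed $r$, and that the boundary of $\Theta$ does not erode the lower bound. The latter is exactly what condition (a)(ii) controls, and it is the step I would check first. The order of limits in Definition \ref{concentrationset} (first $\liminf_\e$, then $r\downarrow0$) is harmless, because for every small fixed $r$ the exponent bound obtained is the same.
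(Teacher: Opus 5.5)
Your proposal is correct and follows essentially the same route as the paper's proof. Both straighten the coincidences with an affine complement and the inverse function theorem. For the upper profile, both bound $V_{\t_0}(r,\e)$ above through the box $|\phi_\ell|\le(\e/c_0)^{1/2\bar\rho_\ell}$. For the lower profile, both bound it below via $\Theta\cap N^r_\d(\underline\rho)\subseteq N_{\t_0}(r,\e)$ with $\d=(\e/cC)^{1/2}$, using condition (a)(ii) to control the boundary of $\Theta$.
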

\begin{proof}
    See Appendix \ref{app:twosided}.
\end{proof}

Theorem \ref{twosided} shows that the local learning coefficient is bounded below and above by robustness and complexity. The bounds depend on complexity through the number of summands, one summand per coincidence. Each summand is the reciprocal $1/\rho_\ell$ of robustness. Theorem \ref{twosided} is the rigorous statement of Intuition \ref{rlctintuition}, in which $\lambda$ is jointly determined by complexity and robustness. We isolate each in the following corollary.

\begin{corollary}[Complexity and robustness]\label{occam}
    Fix $a\in A$ and suppress it. Let $I$ and $J$ be narratives, let $\underline{\rho}$ be a lower robustness profile of $I$, and let $\bar\rho$ be an upper robustness profile of $J$.
    \begin{enumerate}[(a)]
        \item (Complexity) If $\underline{\rho} = (\rho,\hdots, \rho) \in \R^{c(I)}$ and $\bar\rho = (\rho,\hdots, \rho) \in \R^{c(J)}$ for some $\rho>0$, then $c(I)<c(J)$ implies $\lambda(I)<\lambda(J)$.
        \item (Robustness) If $c(I)=c(J)$ and $\underline{\rho}_\ell\geq \bar\rho_\ell$ for all $\ell$, then $\underline{\rho}\neq \bar\rho$ implies $\lambda(I)<\lambda(J)$.
    \end{enumerate}
    If $I$ and $J$ are separably robust with robustness $\underline{\rho}$ and $\bar\rho$, then both implications are equivalences.
\end{corollary}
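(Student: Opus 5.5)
Corollary \ref{occam} follows by comparing the bounds of Theorem \ref{twosided} for $I$ and for $J$. The only additional work is a strict-inequality check. For (a), I apply the upper bound of Theorem \ref{twosided} to $I$ with the lower robustness profile $\underline{\rho}=(\rho,\hdots,\rho)$, which gives $\lambda(I)\leq c(I)/2\rho$. I apply the lower bound to $J$ with the upper robustness profile $\bar\rho=(\rho,\hdots,\rho)$, which gives $\lambda(J)\geq c(J)/2\rho$. Since $c(I)<c(J)$ are integers and $\rho>0$, it follows that $\lambda(I)\leq c(I)/2\rho<c(J)/2\rho\leq \lambda(J)$.

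For (b), set $c=c(I)=c(J)$. The same two applications of Theorem \ref{twosided} give
\[
\lambda(I)\leq \frac{1}{2}\sum_{\ell=1}^c\frac{1}{\underline{\rho}_\ell}\quad\text{and}\quad \frac{1}{2}\sum_{\ell=1}^c\frac{1}{\bar\rho_\ell}\leq \lambda(J).
\]
Because $x\mapsto 1/x$ is strictly decreasing on $(0,\infty)$, the hypothesis $\underline{\rho}_\ell\geq\bar\rho_\ell$ for all $\ell$ yields $1/\underline{\rho}_\ell\leq 1/\bar\rho_\ell$ termwise. The hypothesis $\underline{\rho}\neq\bar\rho$ makes at least one of these inequalities strict, so the sums are strictly ordered and $\lambda(I)<\lambda(J)$. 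The coordinates of the two profiles are matched by the index $\ell$. This pairing of coincidences of $I$ with coincidences of $J$ is implicit in the statement, and the argument uses nothing beyond it.

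For the equivalences, assume $I$ and $J$ are separably robust with robustness $\underline{\rho}$ and $\bar\rho$. The last clause of Theorem \ref{twosided} then gives the exact values $\lambda(I)=\frac12\sum_\ell 1/\underline{\rho}_\ell$ and $\lambda(J)=\frac12\sum_\ell 1/\bar\rho_\ell$.
\begin{itemize}
\item In case (a) these values are $c(I)/2\rho$ and $c(J)/2\rho$, so $\lambda(I)<\lambda(J)$ holds exactly when $c(I)<c(J)$.
\item In case (b) the profiles have equal length and are coordinatewise ordered. If $\underline{\rho}=\bar\rho$, the two sums coincide and $\lambda(I)=\lambda(J)$. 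So $\lambda(I)<\lambda(J)$ forces $\underline{\rho}\neq\bar\rho$, which gives the converse.
\end{itemize}

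I expect no real obstacle here, since all the analytic content is in Theorem \ref{twosided}. The one point to check is that the lower profile of $I$ and the upper profile of $J$ may each be taken at any point $\t_0$ of the respective narrative. This holds because $\lambda$ is constant on a narrative by Definition \ref{narrdef}, so the bounds in Theorem \ref{twosided} apply to $\lambda(I)$ and $\lambda(J)$ wherever the profiles are computed.
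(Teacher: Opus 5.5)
Your proposal is correct and matches the paper's proof: it combines the upper bound of Theorem \ref{twosided} for $I$ (via $\underline{\rho}$) with the lower bound for $J$ (via $\bar\rho$), and it uses the exact formula $\lambda=\frac{1}{2}\sum_\ell 1/\rho_\ell$ under separable robustness for the converses. Your closing remark about where the profiles are computed is harmless, though not needed, since Theorem \ref{twosided} already bounds $\lambda(I)$ and $\lambda(J)$ directly.
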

\begin{proof}
    By Theorem \ref{twosided}, $\lambda(I)\leq \frac{1}{2}\sum_{\ell=1}^{c(I)}1/\underline{\rho}_\ell$ and $\lambda(J)\geq \frac{1}{2}\sum_{\ell=1}^{c(J)}1/\bar\rho_\ell$. In (a), this gives $\lambda(I)\leq c(I)/2\rho<c(J)/2\rho\leq \lambda(J)$. In (b), $1/\underline{\rho}_\ell\leq 1/\bar\rho_\ell$ for all $\ell$ with strict inequality for some $\ell$, and both sums have $c(I)=c(J)$ terms, so $\lambda(I)<\lambda(J)$. If $I$ and $J$ are separably robust, then Theorem \ref{twosided} gives equalities in both bounds, so $\lambda(I) < \lambda(J)$ if and only if $c(I) < c(J)$ in (a), and if and only if $\underline{\rho}\neq \bar\rho$ in (b).
\end{proof}

Corollary \ref{occam}(a) states that, holding robustness fixed, the less complex narrative has the smaller learning coefficient. Theorem \ref{main1} therefore suggests that Occam's razor emerges in Bayesian behavior automatically, in the sense that agents converge to the simplest narratives. Corollary \ref{occam}(b) states that, holding complexity fixed, the more robust narrative has the smaller learning coefficient. This can be viewed as a counterpart to Occam's razor: if two narratives are equally complex, then pick the one whose coincidences are less of a ``knife-edge."

\begin{remark}[Existence of profiles]\label{profileremark}
    A lower robustness profile always exists. The KL divergence is real analytic and vanishes on $I$, so violating a coincidence of $I$ increases it at most linearly, and $(\frac{1}{2},\hdots, \frac{1}{2})$ is a lower robustness profile. An upper robustness profile exists if and only if some point of $I$ has a neighborhood that intersects no other narrative of the decomposition in Proposition \ref{narrdecomp}. If every point of $I$ is arbitrarily close to a different narrative, a parameter can violate the coincidences of $I$ by an arbitrarily small amount and still explain the data perfectly inside the other narrative (Proposition \ref{profileexist}).
\end{remark}

Next, we state an example that illustrates the tradeoff between complexity and robustness. In particular, a complex narrative may have a smaller LLC than a simple narrative if it is sufficiently robust.

\begin{runningex}[Robustness]\label{robustex}
    Consider the exogenous setting of Example \ref{staticexo}, with the threshold entering at order $k\in \N$:
    \[
    Q_{(q,c)} = N\left(q(c-d)^k,1\right)\otimes N\left(q(q-c)^k,1\right),\quad P_* = N(0,1)\otimes N(0,1),
    \]
    on $\Theta = [0,1]^2$ with $d\in (0,1)$. The case $k=1$ is Example \ref{staticexo}. When $k$ is large, the agent models small shortcomings as less critical and substantial ones as more critical. For every $k$,
    \[
    K(q,c) = \frac{1}{2}q^2\left[(c-d)^{2k} + (q-c)^{2k}\right],\quad \Theta^* = \underbrace{\{(0,c) : c\in [0,1]\}}_{I_1}\cup \underbrace{\{(d,d)\}}_{I_2},
    \]
    so the two narratives, self-blame and circumstance-blame, and their complexities $c(I_1) = 1$ and $c(I_2) = 2$ do not depend on $k$. The robustness of $I_2$ depends on $k$, however. Near $(0,c_0)\in I_1$, the bracketed factor is bounded above and away from zero, so $K\asymp q^2$. The coincidence $q=0$ has robustness $1$, and Theorem \ref{twosided} gives $\lambda(I_1) = 1/2$. Near $(d,d)$, the factor $q^2$ is bounded away from zero, so $K\asymp (c-d)^{2k} + (q-c)^{2k}$. The coincidences $c=d$ and $q=c$ each have robustness $k$, $I_2$ is separably robust with robustness $(k,k)$, and Theorem \ref{twosided} gives $\lambda(I_2) = \frac{1}{2}\left(\frac{1}{k}+\frac{1}{k}\right) = \frac{1}{k}$. Hence, by Theorem \ref{main1}:
    \begin{itemize}
        \item $k=1$: $\lambda(I_2) = 1 > 1/2 = \lambda(I_1)$, and $\Theta^{**} = I_1$. Self-blame is selected by complexity (Example \ref{resex}).
        \item $k=2$: $\lambda(I_2) = 1/2 = \lambda(I_1)$, and $\Theta^{**} = \Theta^*$. The posterior concentrates on the union of the two narratives as they have the same learning coefficient.
        \item $k\geq 3$: $\lambda(I_2) = 1/k < 1/2 = \lambda(I_1)$, and $\Theta^{**} = I_2$. Circumstance-blame is selected by robustness, even though it requires two coincidences and self-blame requires one.
    \end{itemize}

    The order $k$ measures how insensitive output is to small departures from the threshold. When $k$ is large, circumstances slightly below $d$ and ability slightly above $c$ produce output that is nearly zero, so the coincidences $c=d$ and $q=c$ need to hold only approximately for the data to be nearly perfectly explained.
\end{runningex}

Definition \ref{robustdef} measures how fast the fit worsens when a single coincidence is violated, but violating a single coincidence need not worsen the fit, as when output is $(c-d)(q-c)$. If only one of $c=d$ and $q=c$ fails, the other factor is still zero, so output stays at zero and the fit stays perfect. Supplemental Appendix \ref{supp:joint} extends robustness and Theorem \ref{twosided} to narratives in which the fit worsens only when several coincidences fail simultaneously.

\section{Singular Equilibrium and Endogenous Convergence}\label{equilibrium}

In this section, we define \textit{singular equilibrium}, which is a refinement of Berk-Nash equilibrium \citep{bne} in the misspecified case, and of self-confirming equilibrium \citep{sce} in the correctly specified case. \citet{fls} show that all limits of learning are ``uniform" Berk-Nash equilibria. Our third main result refines \citet{fls} in the setting with a continuum of parameters by showing that all limits of learning are moreover \textit{singular equilibria}. In parallel with \citet{fls}, our fourth main result characterizes the actions to which learning converges from every prior that favors their concentration set.

\subsection{Singular Equilibrium}

Here, we define singular equilibrium, which is an action that is a best reply to a belief supported on the concentration set $\Theta^{**}(a)\subseteq \Theta^*(a)$. We say that an action $a$ is \textit{myopically optimal} given a belief $\mu\in \Delta(\Theta)$ if $a\in BR(\mu) := \arg\max_{a'\in A}\bar u(a',\mu)$ where $\bar u(a,\mu) := \int_\Theta \int_Y u(a,y) dQ^a_\t(y)d\mu(\t)$ is the flow payoff. We write $\d_\t$ for the point mass at $\t$ and $\bar u(a,\t) := \bar u(a,\d_\t).$

\begin{defn}[Singular equilibrium]\label{se}
\text{}
    \begin{enumerate}[(i)]
        \item An action $a\in A$ is a \textbf{singular equilibrium} (SE) if for some belief $\mu\in \Delta(\Theta^{**}(a))$, $a$ is myopically optimal given $\mu$, i.e., $a\in BR(\mu)$.
        \item An action $a\in A$ is a \textbf{strict uniform singular equilibrium} (SUSE) if $BR(\d_\t) = \{a\}$ for every $\t\in \Theta^{**}(a).$ 
    \end{enumerate}
    The sets of equilibria nest as $\mathrm{SUSE}\subseteq\mathrm{SE}$. 
\end{defn}

\begin{remark}[Comparison to other solution concepts]
    Singular equilibrium is a refinement of Berk-Nash equilibrium in misspecified environments, and of self-confirming equilibrium in correctly specified environments. SE requires the belief to be supported on the concentration set $\Theta^{**}(a)$, while Berk-Nash equilibria and self-confirming equilibria only require support on the KL-minimizing set $\Theta^*(a)$. The refinement restricts the agent's beliefs about the actions she does not play (off-path beliefs). The parameters in $\Theta^*(a)$ induce the same distribution of outcomes under $a$ but can disagree about other actions. Since she plays only $a$, she never observes outcomes under other actions, so Berk-Nash and self-confirming equilibria allow any belief supported on $\Theta^*(a)$. Singular equilibrium allows only beliefs over parameters that learning selects by complexity and robustness (Example \ref{seex}). Strict uniform singular equilibrium is a coarsening of ``uniformly strict Berk-Nash equilibrium" of \citet{fls}, i.e., every uniformly strict Berk-Nash equilibrium is a strict uniform singular equilibrium. Under weak identification, the uniformity of SUSE is over the \textit{parameters} that induce the single KL-minimizing subjective distribution and minimize the LLC.
\end{remark}

\subsection{Necessity}\label{endo}

We now present our third main result, which restricts the possible limit actions of the endogenous learning framework. A limit action is an action that has a positive probability of being the only action played after a point in time, along some optimal policy. 

\begin{defn}[Limit action]\label{limitaction}
    An action $a\in A$ is a \textbf{limit action} if there exists an optimal belief-based policy $\pi: \Delta(\Theta) \to A$ with $\P^{\mu_0}_\pi(\exists T: a_t = a\text{ for all }t>T) > 0$.
\end{defn}

\begin{theorem}\label{main2}
If $a\in A$ is a limit action, then it is a singular equilibrium.
\end{theorem}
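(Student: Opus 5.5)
Fix an optimal belief-based policy $\pi$ and the event $E:=\{\exists T: a_t=a \text{ for all } t>T\}$, which by assumption has $\P^{\mu_0}_\pi(E)>0$. The plan follows \citet{fls}, with Theorem \ref{main1} in place of Berk's theorem. The argument has three steps.

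\textbf{Step 1: posterior concentration on $\Theta^{**}(a)$ along $E$.} On $E$, after time $T$ the outcomes are i.i.d. draws from $P^a_*$. The posterior at $t>T$ is the Bayes update of $\mu_T$ using only $a$-observations. The belief $\mu_T$ has density $\mu_0(\t)\prod_{s\le T}Q^{a_s}_\t(y_s)$, up to normalization. By Assumption \ref{fc1}(i),(iii) and analyticity, this density is smooth and strictly positive on $\Theta$, at least after a harmless modification.

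\emph{Subtlety.} Assumption \ref{fc1}(iii) asks for positivity on $\Theta^{(R)}$, and the likelihood factors are analytic and positive on $\Theta$. By the footnote to Proposition \ref{invariance}, or simply by restricting to a neighborhood of $\Theta$, the exogenous theory applies with $\mu_T$ as prior. Proposition \ref{invariance} guarantees that $\Theta^{**}(a)$ does not depend on this prior.

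To handle the random time $T$, condition on each of the countably many finite histories $h_T$ and on the event that $a$ is played forever afterward. Apply Theorem \ref{main1}(i) to the exogenous experiment under $P^a_*$ with prior $\mu_T(\cdot\mid h_T)$. The action choices do not affect the outcome law while $a$ is played, so a coupling gives the following: almost surely on $E$, $\mu_t(U)\to1$ for every open $U\supseteq\Theta^{**}(a)$. Since $\Theta^{**}(a)$ is compact (Lemma \ref{lambdaprops}), every weak limit point $\mu_\infty$ of $(\mu_t)$ on $E$ lies in $\Delta(\Theta^{**}(a))$.

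\textbf{Step 2: from optimality to myopic optimality.} This is the main obstacle. A forward-looking agent may keep playing $a$ for its information value even when $a$ is not myopically optimal. Follow \citet{fls}: along $E$, beliefs converge (weakly) in a sense that makes further experimentation with $a$ worthless. Once the posterior is nearly supported on $\Theta^*(a)$, weak identification (Assumption \ref{fc2}(i)) implies that all parameters in the support predict the same distribution $Q^a_{\t_0}$ under $a$. Playing $a$ therefore generates asymptotically no information that changes beliefs about other actions' payoffs.

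Concretely, suppose $a\notin BR(\mu_\infty)$ for a limit point $\mu_\infty$. Then some $a'$ has $\bar u(a',\mu_\infty)>\bar u(a,\mu_\infty)+\eta$. By continuity of $\bar u$ in $\mu$ (weak topology, bounded $u$ on finite $Y$), along a subsequence the deviation ``play $a'$ once, then follow $\pi$'' beats $\pi$ by at least $(1-\b)\eta/2$ minus the difference in continuation values.

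The continuation values differ by at most the value of information obtainable from one $a'$ observation. Compare against playing $a$, whose observation moves the belief by an amount tending to zero. One needs the stronger claim that the continuation value after $a$ converges to the value at the current belief, while the continuation value after $a'$ is at least the value at the current belief. The latter follows from convexity of the value function in beliefs (martingale property of Bayes updates). The former follows from the belief under $a$ becoming asymptotically a martingale with vanishing increments, together with continuity of the value function. For a continuum of parameters, I would establish this continuity in the weak topology first, via Lemma \ref{dp}. This yields a contradiction with optimality of $\pi$ at large $t$.

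\textbf{Step 3: conclude.} Hence $a\in BR(\mu_\infty)$ with $\mu_\infty\in\Delta(\Theta^{**}(a))$, and by Definition \ref{se}(i) $a$ is a singular equilibrium. The delicate points are the uniformity in Step 1 over random $T$ and prior $\mu_T$, and the value-continuity argument in Step 2. I expect Step 2 to require the most care.
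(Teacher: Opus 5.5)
Your proposal is correct and is essentially the paper's proof: the coupling plus Theorem \ref{main1}, with the time-$T$ posterior as prior (Proposition \ref{invariance}), gives concentration on $\Theta^{**}(a)$ almost surely on the event, and your Step 2 contradiction is Lemma \ref{bridge} run along a subsequence. (In Step 1, apply Theorem \ref{main1} to the coupled exogenous posterior and only then restrict to the event, rather than conditioning on it, since conditioning on $a$ being played forever destroys the i.i.d.\ structure.) The paper shortens your Step 2: the Bellman loss $G_a$ vanishes along the path (Lemma \ref{dp}(c)) and is weakly continuous (Lemma \ref{dp}(b)), so $G_a(\mu_\infty)=0$ at a limit point, where weak identification gives $(\mu_\infty)_{a,y}=\mu_\infty$ exactly and convexity of $W$ bounds the continuation after any other action below by $W(\mu_\infty)$, which yields $a\in BR(\mu_\infty)$ without tracking how fast the belief increments vanish.
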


\begin{proof}
    See Appendix \ref{app:main2}.
\end{proof}

Theorem \ref{main2} states that, if an agent in the endogenous learning framework converges with positive probability to an action $a$, then that action must be a singular equilibrium. Even though all KL-minimizing parameters are associated with the same subjective distribution under weak identification, the support of the belief must be constrained to the subset of parameters that minimize the learning coefficient $\lambda(\t\mid a).$ Thus, only actions that are best replies to beliefs supported on the least complex and most robust narratives can be sustained in the long run.

\begin{intuition}
    Theorem \ref{main2} says that any action the agent settles on must be justified by a belief supported on its own concentration set. There are three steps.

    (1) \textbf{Berk's theorem.} Once the agent plays $a$ forever, every observation $y$ is drawn from $P^a_*$, and Bayes' rule weights each parameter $\t$ by $Q^a_\t(y)$. By Berk's theorem, the posterior eventually places zero mass on every parameter whose subjective distribution $Q^a_\t$ does not minimize the KL divergence from $P^a_*$, so beliefs place all mass on $\Theta^*(a)$. Since she never plays other actions after some time $T$, a parameter's distributions under them can be arbitrarily wrong without affecting her posterior. This is the self-confirming logic of \citet{sce} and \citet{bne}.

    (2) \textbf{Refinement.} Theorem \ref{main1} refines almost sure concentration to the parameters with the smallest learning coefficient. Two parameters that are observationally identical after action $a$ therefore receive different posterior weight, because one lies on a narrative that requires fewer coincidences, or coincidences that need only hold approximately.

    (3) \textbf{Optimality.} At every date an optimal policy chooses an action whose Bellman loss (Lemma \ref{dp}), the value lost by choosing it now instead of an optimal action, is zero. Bellman loss is continuous in the belief, so $a$ remains dynamically optimal at the limiting belief $\mu$. Weak identification then removes the option value of experimenting with $a$. That is, on $\Theta^*(a)$ the action is uninformative, so the continuation value of playing $a$ is just its flow payoff, and dynamic optimality implies myopic optimality.
\end{intuition}

We now provide an endogenous version of Examples \ref{staticexo} and \ref{resex} and exhibit a Berk-Nash equilibrium that is not a singular equilibrium (a self-confirming equilibrium in this case, as the minimum KL divergence is zero).

\begin{runningex}[Singular equilibrium]\label{seex}
     We add a second action to Example \ref{staticexo}. Let $A=\{a,b\}$, where $a$ is \textit{persist} (stay in the current environment) and $b$ is \textit{pivot} (move to a new one at a cost $\kappa>0$). The agent models each environment as in Example \ref{staticexo}, with its own threshold:
    \[
    Q^a_{(q,c)}=N\left(q(c-d),1\right)\otimes N\left(q(q-c),1\right),\quad
    Q^b_{(q,c)}=N\left(q(c-d'),1\right)\otimes N\left(q(q-c),1\right)
    \]
    for $0<d'<d<1$. The agent subjectively believes the new environment is less demanding ($d'<d$), in that a lower level of circumstances suffices for her ability to yield output. Suppose that both environments are objectively unfavorable, $P^a_*=P^b_*=N(0,1)\otimes N(0,1)$. Utility is output net of the pivot cost, $u(a,y)=y_1+y_2$ and $u(b,y)=y_1+y_2-\kappa$, and we take $\kappa<d'(d-d')$. Then $\bar u(a,(q,c))=q(q-d)$ and $\bar u(b,(q,c))=q(q-d')-\kappa.$ Both KL divergences have the form of Example \ref{staticexo}, so by Example \ref{resex} (with $d'$ in place of $d$ for action $b$),
    \[
    \Theta^*(a)=I_1\cup\{(d,d)\},\quad \Theta^{**}(a)=I_1,\quad\quad
    \Theta^*(b)=I_1\cup\{(d',d')\},\quad \Theta^{**}(b)=I_1,
    \]
    where $I_1=\{q=0\}$ is self-blame. Each action has its own circumstance-blame narrative, $(d,d)$ for $a$ and $(d',d')$ for $b$.

    Under self-blame, $\bar u(a,(0,c))=0>-\kappa=\bar u(b,(0,c))$ for every $c$. That is, with no ability, pivoting only incurs a cost. Under $b$'s circumstance-blame narrative $(d',d')$, we have $\bar u(b,(d',d'))=-\kappa > d'(d'-d)=\bar u(a,(d',d'))$ by the choice of $\kappa$. The belief $\d_{(d',d')}$ is supported on $\Theta^*(b)$ and makes $b$ a best reply, so $b$ is a Berk-Nash equilibrium \citep{bne}. The agent can rationalize a costly, unproductive pivot by attributing her poor outcome to an overly demanding environment. But every belief on $\Theta^{**}(b)=I_1$ makes $a$ strictly better, so $b$ is not a singular equilibrium.
\end{runningex}

Berk-Nash and self-confirming equilibrium only require that the agent's action be justified by \textit{some} narrative. In the example, pivoting is justified by the complex circumstance-blame narrative. By contrast, singular equilibrium requires that the action be justified by the least complex narrative, so persisting is the unique singular equilibrium justified by the simple self-blame narrative. The two narratives induce the same distribution of outcomes under pivoting and different distributions under persisting, so we can see how singular equilibrium restricts off-path beliefs and thus leads to a different prediction.

\subsection{Sufficiency and Stability}

Theorem \ref{main2} says that \textit{if} the learning process converges on an action, then it is a singular equilibrium. We now establish a converse and show that an action is a strict uniform singular equilibrium if and only if the learning process converges to it from every prior concentrated near its concentration set, under every optimal policy, with high probability. We call this convergence \textit{uniform stability}, after the notion of the same name in \citet{fls}. Stability in \citet{fls} quantifies over all priors that place probability at least $1-\e$ on an $\e$-neighborhood of the KL minimizers. Ours quantifies over all priors in an $\e$-basin of $\Theta^{**}(a)$: smooth tilts of $\mu_0$ whose weight away from $\Theta^{**}(a)$ is at most a fraction $\e$ of its weight near it, together with the point masses near it.

\begin{defn}[$\e$-basin of $\Theta^{**}(a)$]\label{basindef}
    A \textit{smooth tilt} of the prior $\mu_0$ is a prior $\phi\mu_0$ with $\phi:\Theta\to (0,\infty)$ of class $C^\infty$. Let $\PP$ be the set of smooth tilts of $\mu_0$. For $r>0$, let $N_r := \{\t\in \Theta : \dist(\t,\Theta^{**}(a)) < r\}$. For $0<r<R$ and $\e>0$, the \textbf{$\e$-basin of $\Theta^{**}(a)$} is defined as
    \[
    \B_{r,R}(\e; a):=\left\{\phi\mu_0\in\PP: \sup_{\t\notin N_R}\phi(\t)\leq \e\inf_{\t\in N_r}\phi(\t)\right\} \cup \{\d_\t:\t\in N_R\},
    \]
    with the convention that the supremum over the empty set is $0$.
\end{defn}

The priors comprising the first set are smooth priors that favor $\Theta^{**}(a)$: the smoothing cannot shift more than a factor $\e$ of weight from $\mu_0$ far from $\Theta^{**}(a)$. The second set consists of the priors that are point masses sufficiently near $\Theta^{**}(a).$\footnote{The point masses in $\B_{r,R}(\e;a)$ do not satisfy Assumption \ref{fc1}(iii), but the argument does not need it. From $\d_\t$, the posterior equals $\d_\t$ at every date, so a belief-based policy $\pi$ plays the constant action $\pi(\d_\t)$, and we say $\pi$ is optimal for $\d_\t$ if $\pi(\d_\t) \in BR(\d_\t)$.}

\begin{defn}[Uniformly stable]\label{behnotions}
An action $a$ is \textbf{uniformly stable} if for every $\kappa\in(0,1)$ there are $0<r<R$ and $\e>0$ such that, for every $\nu \in \B_{r,R}(\e;a)$, $\P^\nu_\pi(\exists T : a_t = a\text{ for all }t>T) > 1-\kappa$ for every policy $\pi$ optimal for $\nu$.
\end{defn}

\begin{theorem}\label{main3}
An action $a$ is uniformly stable if and only if $a$ is a strict uniform singular equilibrium.
\end{theorem}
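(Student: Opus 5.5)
We prove the two directions separately, following the architecture of the stability theorem in \citet{fls}. Throughout, we replace their KL-minimizing set by the concentration set $\Theta^{**}(a)$, and we replace Berk's theorem by the polynomial-rate concentration of Theorem \ref{main1}.

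\emph{Necessity (uniformly stable $\Rightarrow$ SUSE).} We argue by contraposition. Suppose there is $\t_0\in\Theta^{**}(a)$ with $BR(\d_{\t_0})\neq\{a\}$. Fix $R>0$ and use the point mass $\d_{\t_0}\in\B_{r,R}(\e;a)$; such point masses belong to every basin.

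\begin{itemize}
\item If some $b\neq a$ lies in $BR(\d_{\t_0})$, the policy with $\pi(\d_{\t_0})=b$ is optimal for $\d_{\t_0}$.
\item From $\d_{\t_0}$ the posterior is frozen, so this policy plays $b$ forever. The event $\{a_t=a \text{ eventually}\}$ then has probability $0<1-\kappa$.
\item This holds for every choice of $r<R$ and $\e$, so $a$ is not uniformly stable.
\end{itemize}

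This direction is essentially immediate from the inclusion of point masses in the basin.

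\emph{Sufficiency (SUSE $\Rightarrow$ uniformly stable).} Assume $BR(\d_\t)=\{a\}$ for all $\t\in\Theta^{**}(a)$. The argument has five steps.

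\begin{enumerate}
\item \emph{Strict optimality near the concentration set.} $\Theta^{**}(a)$ is compact (Lemma \ref{lambdaprops}) and $\bar u$ is continuous. Hence there are $R>0$ and a weak-$*$ neighborhood $\mathcal{W}$ of $\Delta(\Theta^{**}(a))$ such that $a$ is the unique myopic best reply to every belief in $\mathcal{W}$, and to every $\d_\t$ with $\t\in N_R$, with a uniform payoff gap $\eta>0$.

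\item \emph{Point-mass priors.} For point masses in the basin, Step 1 settles the matter, because the frozen belief forces $\pi(\d_\t)=a$.

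\item \emph{Myopic to dynamic optimality.} By weak identification, on $\Theta^*(a)$ playing $a$ is uninformative. Using continuity of the Bellman loss (Lemma \ref{dp}), we argue that for beliefs in a possibly smaller neighborhood $\mathcal{W}'\subseteq\mathcal{W}$, every optimal policy must play $a$. The reason is that the option value of deviating to learn is small relative to $\eta$ when the belief is close to $\Delta(\Theta^{**}(a))$, since all parameters there agree on $Q^a$.

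\item \emph{Uniform concentration for smooth tilts.} For smooth tilts $\phi\mu_0$ we show that, uniformly over the basin, the posterior under the process that plays $a$ stays in $\mathcal{W}'$ forever with probability exceeding $1-\kappa$. Here we use the evidence decomposition from the proof sketch of Theorem \ref{main1}, $\mu_t(V)=Z_t(V)/Z_t(\Theta)$.
\begin{itemize}
\item The tilt multiplies the integrands by $\phi$. The basin condition bounds the tilted evidence of $\Theta\setminus N_R$ by $\e\sup\phi$-weighted evidence, and bounds the evidence near $\Theta^{**}(a)$ below by $\inf_{N_r}\phi$ times the untilted evidence.
\item The ratio $\mu_t(\Theta\setminus N_R)$ is therefore at most $\e$ times the untilted ratio. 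The untilted ratio is controlled by a random but a.s.\ finite factor times $t^{-\e(U)}$, with the law of iterated logarithm governing the empirical process.
\item We choose $\e$ small so that, on an event of probability at least $1-\kappa$ (chosen through a tail bound on that random constant, not depending on $\phi$), the posterior mass outside $N_R$ is below the threshold defining $\mathcal{W}'$ for all $t\ge 0$, including small $t$.
\item Mass near $\Theta^*(a)\setminus\Theta^{**}(a)$ inside $N_R$ is handled by shrinking $R$, so that $N_R$ excludes the other narratives, which lie at positive distance by compactness.
\end{itemize}

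\item \emph{Conclusion.} On this event, every optimal policy plays $a$ at every date, observations come from $P^a_*$, and the induction closes.
\end{enumerate}

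\emph{Main obstacle.} The hardest step is Step 4's uniformity at all times, not just asymptotically. Theorem \ref{main1} gives only ``for all large $t$,'' with a random threshold. We must convert this into a statement that holds from $t=0$ with probability $1-\kappa$, uniformly over all tilts in the basin. I would first try to obtain a maximal inequality, namely a bound on $\sup_t t^{\e(U)}\mu^{\text{untilted}}_t(\Theta\setminus U)$ holding with high probability. Such a bound should follow from the chart-wise standard form together with a tail bound on $\sup_t\sup_u|\xi_t(u)|/\sqrt{\log\log t}$. The prior factor $\e$ then absorbs the constant.
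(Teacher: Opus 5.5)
Your necessity argument is exactly the paper's. The skeleton of your sufficiency argument also matches the paper: a trap set of beliefs around $\Theta^{**}(a)$, the tilt bound, and an induction along a coupled i.i.d.\ sequence of draws from $P^a_*$. But the step you single out as the main obstacle is not one, and you need no maximal inequality, no LIL tail bound, and no polynomial rate.

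After the tilt you have $\hat\nu_t(V)\le\epsilon A_t$, with $V:=\Theta\setminus N_R$ and $A_t:=\hat\mu_{0,t}(V)/\hat\mu_{0,t}(\overline{N_r})$. This ratio is computed from $\mu_0$ and the draws alone, so it is the same process for every tilt in the basin.
\begin{itemize}
\item Each $A_t$ is finite, because the likelihood $L_t$ is bounded above and away from zero on $\Theta$ (finite $Y$, common support, continuity) and $\mu_0(N_r)>0$.
\item $A_t\to 0$ almost surely by Theorem~\ref{main1}; only $\mu_t(U)\to 1$ is used, not the rate.
\end{itemize}
Hence $\sup_t A_t$ is an almost surely finite random variable. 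Pick $q$ with $\mathbb{P}(\sup_t A_t\le q)\ge 1-\kappa/2$ and set $\epsilon:=\bar\eta/q$, where $\bar\eta$ is the mass threshold defining your trap. This controls every date $t\ge 0$ at once, including small $t$, uniformly over the basin.

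The justification in Step 3 does not establish its claim. That parameters in $\Theta^{*}(a)$ agree on $Q^a$ shows that playing $a$ is uninformative, which is what Theorem~\ref{main2} needs through Lemma~\ref{bridge}. It says nothing about the value of deviating to $b$, since parameters in $\Theta^{**}(a)$ may disagree on $Q^b$, so $b$ can be quite informative. What makes the option value vanish is that every parameter near $\Theta^{**}(a)$ already prescribes $a$. Formally, use the full-information bound from Lemma~\ref{dp}(a), $W(\mu)\le\int W(\delta_\theta)\,d\mu(\theta)$ with $W(\delta_\theta)=\max_c\bar u(c,\theta)$, applied at each posterior $\mu_{b,y}$.
\begin{itemize}
\item Let $g>0$ be the minimal gap of $a$ over $\Theta^{**}(a)$, and $N$ the open set where the gap exceeds $g/2$. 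The bound gives $W(\mu)-B(b,\mu)\ge(1-\beta)g/4-2\|\bar u\|_\infty\eta$ whenever $\mu(N)\ge 1-\eta$; this is the paper's Lemma~\ref{trap}.
\item In particular, $W(\mu)=\bar u(a,\mu)$ and $G_b(\mu)>0$ for every $\mu\in\Delta(\Theta^{**}(a))$. After that, your continuity-and-compactness route to $\mathcal{W}'$ also works.
\item For that route you also need that $\mathcal{W}'$ contains every $\mu$ with $\mu(N_R)\ge 1-\eta'$ for small $R,\eta'$. This follows from weak compactness and the portmanteau theorem, and it is what links Step 3 to the tilt bound.
\end{itemize}
Weak identification plays no role in this direction.

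Finally, the last bullet of Step 4 is false: $\Theta^*(a)\setminus\Theta^{**}(a)$ need not lie at positive distance from $\Theta^{**}(a)$. In Example~\ref{hksex} with $d\ge 3$, $I_{\phi^*}$ meets $S_0=\Theta^{**}(\phi^*)$ at $(0,\phi^*)$. You do not need this claim, since only the posterior mass outside $N_R\subseteq N$ matters, not which parameters inside $N_R$ carry it.
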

\begin{proof}
    See Appendix \ref{app:main3}.
\end{proof}

Theorem \ref{main3} establishes an equivalence between uniform stability, which is a dynamic property, and strict uniform singular equilibrium, which is a static concept, so the static use of SUSE is supported by learning foundations. For necessity, if some $\t_0 \in \Theta^{**}(a)$ makes another action a best reply, the policy that plays that action forever is optimal for $\d_{\t_0}$. For sufficiency, playing $a$ repeatedly concentrates beliefs on $\Theta^{**}(a)$ by Theorem \ref{main1}, and since $a$ is the unique best reply to beliefs near $\Theta^{**}(a)$, the agent continues to play $a$, which leads to further concentration of beliefs. The construction of the $\e$-basin guarantees that the agent enters this ``loop."

\begin{runningex}[Uniform stability]\label{stabex}
    In Example \ref{seex}, persisting is a strict uniform singular equilibrium. On $\Theta^{**}(a)=I_1$, $\bar u(a,(0,c))=0>-\kappa=\bar u(b,(0,c))$ for every $c$. It is not a uniformly strict Berk-Nash equilibrium in the sense of \citet{fls}. At $(d,d)\in\Theta^*(a)$, $\bar u(b,(d,d))=d(d-d')-\kappa>0=\bar u(a,(d,d))$ by the choice $\kappa<d'(d-d')$. So the stability result of \citet{fls} does not apply to persisting, while Theorem \ref{main3} implies that it is uniformly stable.
\end{runningex}

\subsection{Existence and Cycles}

In general, singular equilibria need not exist, which causes behavior to \textit{cycle} in both the misspecified and correctly specified settings. \citet{nyarko} shows that behavior may cycle in misspecified environments because the beliefs that best fit the data generated by each action make another action optimal. Our cycles instead arise because learning selects among the parameters that induce the unique best-fitting subjective distribution. Surprisingly, this logic persists in correctly specified settings, which appears to be contrary to \citet{ek}. We explain in Remark \ref{ekremark} how our results coexist.

\begin{runningex}[Cycle]\label{nonexist}
    We modify the pivot action of Example \ref{seex} so that no singular equilibrium exists. Keep $a$, persist, exactly as before, so that $\Theta^*(a)=I_1\cup\{(d,d)\}$ and $\Theta^{**}(a)=I_1=\{q=0\}$. Now let $b$ be a pivot to a role in which the agent believes her ability is counterproductive, e.g., she believes the better a scientist she is, the worse a manager she will be.
    \[
    Q^b_{(q,c)}=N\left(\g(d-q),1\right)\otimes N\left(c-d',1\right),\quad P^b_*=N(0,1)\otimes N(0,1),
    \]
    with $\g>1$ and $d'\in(0,1)$. The first coordinate is output net of the interference from her ability, at rate $\g$. The second is her circumstances relative to the new role's threshold $d'$. Objectively, the new role is also unfavorable. Utility is output net of the pivot cost, so $\bar u(a,(q,c))=q(q-d)$ and $\bar u(b,(q,c))=\g(d-q)+(c-d')-\kappa,$ and we take $0<\kappa<\g d-d'$, assuming $\g d>d'$. Under $b$, the KL divergence is minimized only at $(d,d')$, so $\Theta^*(b)=\{(d,d')\}$. The environment is correctly specified if $d'=d$, since $(d,d)$ then explains the data of both actions, and misspecified if $d'\neq d$.

    The following argument applies to both regimes. On $\Theta^{**}(a)=I_1$, persisting yields $0$ utility, while $\bar u(b,(0,c))=\g d-d'-\kappa+c \geq \g d-d'-\kappa > 0$ for every $c\in [0,1]$. A belief on $I_1$ means that the agent believes she has no ability as a scientist, so the role as a manager looks strictly better. So $a$ is not a best reply to any belief on $\Theta^{**}(a)$. Under $b$, $\bar u(b,(d,d'))=-\kappa<0=\bar u(a,(d,d'))$: if her ability is exactly $d$ and her circumstances exactly at the new threshold, the new role only incurs a cost. So $b$ is not a best reply to any belief on $\Theta^{**}(b)$ either. No singular equilibrium exists, and by Theorem \ref{main2} no action is a limit action under any prior. She persists until self-blame takes hold, pivots to $b$, learns $(d,d')$, returns to $a$, and self-blame takes hold again. In the correctly specified regime, the cycle persists even though the agent learns the truth $(d,d)$ every time she pivots. Under $a$, the truth is circumstance-blame, which requires more coincidences than self-blame.
\end{runningex}

\begin{remark}\label{ekremark}
    \citet{ek} consider an agent whose objective distributions are $P^a_*=Q^a_{\t_0}$ for a true parameter $\t_0$ drawn from her prior $\mu_0$, and show that her beliefs converge for $\mu_0$-almost every $\t_0$. Our setting has no true parameter, since the objective distributions are fixed, but we can extend it to theirs by letting each $\t_0\in\Theta$ define the correctly specified environment with objective distributions $P^a_*=Q^a_{\t_0}$ for every $a$. By Proposition \ref{existence}(a) below, in the environment with true parameter $\t_0$, a singular equilibrium fails to exist only if some $a$ and some $\t\in\Theta^*(a)$ satisfy $\lambda(\t\mid a)<\lambda(\t_0\mid a)$. In Example \ref{nonexist} with $d'=d$, the set of such $\t_0$ is $\{(d,d)\}$, which is contained in the null set of \citet{ek}, since the posterior does not converge at $(d,d)$. Thus, analyzing the nonexistence of singular equilibrium allows us to identify true parameters at which behavior cycles, about which the result of \citet{ek} is silent.
\end{remark}

We now give two sufficient conditions for the existence of singular equilibria, the latter of which relies on Tarski's fixed point theorem.

\begin{proposition}[Existence]\label{existence}
\text{}
    \begin{enumerate}[(a)]
        \item If $\bigcap_{a\in A}\Theta^{**}(a)\neq \emptyset$, a singular equilibrium exists.
        \item Let $A = \{a_1<\cdots <a_n\}\subseteq \R$ and let $\Theta$ be partially ordered. If (i) $\bar u$ has increasing differences, i.e., for $b'>b$ and $\t'\geq \t$, $\bar u(b',\t') - \bar u(b', \t)\geq \bar u(b,\t') - \bar u(b,\t)$, and (ii) there is a nondecreasing selection $\t(a) \in \Theta^{**}(a)$, then a singular equilibrium exists.
    \end{enumerate}
\end{proposition}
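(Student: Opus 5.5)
For part (a), I will argue directly from the definition of singular equilibrium. Pick $\t^\circ\in\bigcap_{a\in A}\Theta^{**}(a)$ and let $a^\circ\in BR(\d_{\t^\circ})$. Such an action exists because $A$ is finite, so $\arg\max_{a'}\bar u(a',\t^\circ)$ is nonempty. Since $\t^\circ\in\Theta^{**}(a^\circ)$, the belief $\mu=\d_{\t^\circ}$ lies in $\Delta(\Theta^{**}(a^\circ))$, and $a^\circ\in BR(\mu)$. Hence $a^\circ$ is a singular equilibrium by Definition \ref{se}(i). This part needs nothing beyond compactness and nonemptiness of $\Theta^{**}(a)$ from Lemma \ref{lambdaprops}, and even that is only used to know the intersection is a set of genuine parameters.

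For part (b), I will build a monotone self-map on the finite chain $A$ and apply Tarski's fixed point theorem. Define
\[
g(a):=\max BR(\d_{\t(a)}),
\]
the largest myopic best reply to the point mass at the selected parameter $\t(a)\in\Theta^{**}(a)$. This is well defined because $A$ is finite and totally ordered, so it is a complete lattice. The key step is to show that $g$ is nondecreasing.

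Take $a\le a'$. By (ii), $\t(a)\le\t(a')$. By the increasing differences in (i), the standard Topkis argument applies to $\t\mapsto\arg\max_{b}\bar u(b,\t)$. Suppose $b=g(a)$ and $b'=g(a')$ with $b'<b$. Increasing differences gives
\[
\bar u(b,\t(a'))-\bar u(b',\t(a'))\ \ge\ \bar u(b,\t(a))-\bar u(b',\t(a))\ \ge 0,
\]
where the last inequality holds because $b\in BR(\d_{\t(a)})$. So $b$ is also a best reply at $\t(a')$, and $b>b'$ contradicts maximality of $b'$. Hence $g(a)\le g(a')$.

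Tarski's theorem then yields a fixed point $a^\circ=g(a^\circ)$. Thus $a^\circ\in BR(\d_{\t(a^\circ)})$ with $\t(a^\circ)\in\Theta^{**}(a^\circ)$, which makes $a^\circ$ a singular equilibrium.

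The main point to get right is the monotone comparative statics step. Increasing differences is only assumed for $\t'\ge\t$ in the partial order, so I must apply it exactly to the comparable pair $\t(a)\le\t(a')$ supplied by the selection. I must also use the largest best reply so that ties do not break monotonicity. Once that is in place, the rest is routine, since on a finite chain Tarski reduces to iterating $g$ from $a_1$.
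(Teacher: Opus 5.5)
Your proposal is correct and follows the paper's proof. Part (a) is the same argument via a best reply to the point mass $\d_{\t}$, and part (b) uses the same map $a\mapsto\max BR(\d_{\t(a)})$ with the same increasing-differences contradiction; the only difference is that the paper finds the fixed point by hand, as $\bar a:=\max\{a: \max BR(\d_{\t(a)})\geq a\}$, whereas you cite Tarski or iterate from $a_1$, which gives the least rather than the greatest fixed point.
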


\begin{proof}
    See Appendix \ref{app:existence}.
\end{proof}

Part (a) requires that there be a parameter that is contained in every concentration set. That is, there exists a narrative that survives learning regardless of the action taken. Part (b) considers the map that sends each action to the largest best reply to the point mass at $\t(a)$. Under (ii), higher actions select higher parameters, and under (i), higher parameters favor higher actions, so this map is nondecreasing and thus has a fixed point.

\section{Applications}

We apply our results to four settings. (1) With unobserved heterogeneity, an agent who allows for more types than the population has learns the number of types if and only if types can differ in at least three outcomes. (2) An underconfident agent, in the style of \citet{hks}, who is also \textit{hesitant} reverts to a default action if she must make three or more decisions. (3) An arbitrarily small misspecification can change which narrative the agent comes to believe. (4) Singular equilibrium summarizes the behavior of an agent who uses neural networks to make predictions.

\subsection{Unobserved Heterogeneity}

Since \citet{heckmansinger}, economists have often modeled unobserved heterogeneity with finitely many types: the population consists of types in unknown shares, and each type has its own skills or preferences. For instance, types represent the skills of individuals in each occupation and their tastes for school and home in models of career choice \citep{keanewolpin}, the severity of patients' illnesses in models of pharmaceutical demand \citep{crawfordshum}, the preferences of households in the evaluation of a school subsidy \citep{toddwolpin}, and the unobserved earning ability of workers in matched employer-employee data \citep{blm}. The number of types is a choice of the modeler, and a model may allow more types than the population has. The parameters of such a model are not identified. For instance, a model with two types induces the distribution of a single type either when one type is absent, whatever the skills of the absent type, or when the two types are identical, whatever their shares.

Theorem \ref{main1} tells us which of these narratives an agent comes to believe when she models more types than the population has, whether she is an employer who thinks of workers in terms of types or an economist who estimates a model with types by Bayesian methods. In the following example, she learns the number of types if and only if types can differ in at least three outcomes.

\begin{example}[Too many types]\label{mixex}
    Consider the exogenous setting. An agent models the population as consisting of $m$ types in shares $\pi=(\pi_1,\hdots,\pi_m)$, where type $j$ has mean output $\t_j\in\R^d$, so $\t=(\pi,\t_1,\hdots,\t_m)\in\Theta=\Delta^{m-1}\times\Theta_*^m$ with $\Theta_*=[-1,1]^d$,\footnote{We parametrize the simplex $\Delta^{m-1}$ of shares by $(\pi_1,\hdots,\pi_{m-1})$, so $\Theta\subseteq\R^{m-1+md}$.} and
    \[
    Q_\t=\sum_{j=1}^m\pi_jN(\t_j,I_d),\quad P_*=\sum_{i=1}^n p_iN(\t^*_i,I_d),
    \]
    where $n<m$, each $p_i>0$, and $\t^*_1,\hdots,\t^*_n$ are distinct points in the interior of $\Theta_*$. This is the specification of unobserved types in \citet{keanewolpin}, abstracting from choices.\footnote{We truncate $N(\t,I_d)$ to a bounded box $Y\subseteq\R^d$, so $\t$ is the mean output before truncation. Without truncation, Assumption \ref{fc2}(iii) fails at absent types, where a slightly negative share can make $Q_\t$ negative for large $y$. We expect that Theorem \ref{main1} still holds, since in a correctly specified environment its proof can run on the likelihood ratio $Q_\t/P_*$, which remains analytic, instead of its logarithm, but we do not prove it here.} Objectively, there are $n$ types, so the agent allows for $m-n$ types too many.

    The environment is correctly specified, and $Q_\t=P_*$ exactly when every type of the agent is either absent or has the mean output of an objective type, and the shares of the types with the mean output of objective type $i$ sum to $p_i$. For a set $A\subseteq\{1,\hdots,m\}$ of absent types and a surjection $\sigma:\{1,\hdots,m\}\setminus A\to\{1,\hdots,n\}$ that assigns each present type to an objective type, let
    \[
    I(A,\sigma):=\left\{\t: \pi_j=0\text{ for }j\in A, \pi_j>0\text{ and }\t_j=\t^*_{\sigma(j)}\text{ for }j\notin A, \sum_{j\in\sigma^{-1}(i)}\pi_j=p_i\text{ for all }i\right\}.
    \]
    The narratives $I(A,\sigma)$ partition $\Theta^*$.\footnote{When $d=1$, the learning coefficient is smaller at the points of $I(A,\sigma)$ at which an absent type has the mean output of an objective type. Away from these points, $I(A,\sigma)$ is a union of narratives with the same learning coefficient, and we refer to $I(A,\sigma)$ as a narrative with that learning coefficient.} The narrative $I(A,\sigma)$ says that the types in $A$ are absent and that each group $\sigma^{-1}(i)$ of remaining types consists of copies of objective type $i$, and does not restrict the mean outputs of the absent types or the shares within each group. Let $a:=|A|$ be the number of absent types and $e:=m-n-a$ the number of \textit{duplicates}, the copies beyond one in each group. The shares and mean outputs of the $n$ groups require $n-1+nd$ coincidences. Each absent type $j$ adds the coincidence $\pi_j=0$, and each duplicate $j$ adds the $d$ coincidences $\t_j=\t^*_{\sigma(j)}$, so
    \[
    c(I(A,\sigma))=n-1+nd+a+ed.
    \]

    We analyze robustness. For group $i$, let $W_i:=\sum_{j\in\sigma^{-1}(i)}\pi_j$ be its share, $\mu_i:=\sum_{j\in\sigma^{-1}(i)}\pi_j(\t_j-\t^*_i)$ its first moment around $\t^*_i$, and $s_i:=\sum_{j\in\sigma^{-1}(i)}\pi_j\|\t_j-\bar\t_i\|_2^2$ its spread around its mean $\bar\t_i:=W_i^{-1}\sum_{j\in\sigma^{-1}(i)}\pi_j\t_j$. Near a point of $I(A,\sigma)$ at which the absent types have mean outputs different from $\t^*_1,\hdots,\t^*_n$, we can compute
    \[
    K\asymp\sum_{j\in A}\pi_j^2+\sum_{i=1}^n\left[(W_i-p_i)^2+\|\mu_i\|_2^2+s_i^2\right].
    \]
    The third and higher moments of each group are of smaller order than $s_i$. We have coincidences $\pi_j=0$ for $j\in A$, $W_i=p_i$, and $\mu_i=0$, each with robustness $1$. The spread $s_i$ is a positive definite quadratic form in the differences between the mean outputs within group $i$, so $s_i^2$ is of the order of the sum of the fourth powers of these differences, and the $ed$ coincidences that equate mean outputs within groups each have robustness $2$. Hence $I(A,\sigma)$ is separably robust, and Theorem \ref{twosided} gives
    \[
    \lambda(I(A,\sigma))=\frac{n(d+1)-1}{2}+\frac{a}{2}+\frac{ed}{4}.
    \]
    The first term is half the dimension of a model with $n$ types, each absent type adds $1/2$, and each duplicate adds $d/4$. The concentration set thus depends on $d$, and there are three cases:
        \begin{itemize}
        \item $d=1$. A duplicate adds $1/4$, less than an absent type, so $\Theta^{**}$ is the closure of the union of the narratives with $a=0$, in which every extra type is a duplicate. The agent comes to believe that all $m$ types are present and form $n$ groups of identical types, each group with the share and mean output of an objective type. She believes there are more types than there actually are, and is agnostic about the composition of groups and shares within each group.
        \item $d=2$. An absent type and a duplicate each add $1/2$ to the learning coefficient, so every narrative has the same learning coefficient and $\Theta^{**}=\Theta^*$. Theorem \ref{main1} does not select among the narratives.
        \item $d\geq 3$. A duplicate adds $d/4>1/2$, more than an absent type, so $\Theta^{**}$ is the union of the narratives with $a=m-n$, in which every extra type is absent. The agent comes to believe that exactly $n$ types are present, with the objective shares and mean outputs, and that the remaining $m-n$ types are absent. She learns the number of types, and is agnostic only about the mean outputs of the absent types.
    \end{itemize}
    With $d=1$, all narratives have the same complexity, and robustness selects duplicates, as in Corollary \ref{occam}(b). With $d\geq3$, the additional coincidences of duplicates outweigh their robustness. Each coincidence of a duplicate need only hold approximately, but a duplicate requires $d$ of them, one per outcome, so the more outcomes in which types can differ, the more learning favors absent types.
\end{example}

The number of types that an agent learns therefore depends on the number of outcomes in which types can differ. When types can differ only in earnings in one occupation, she comes to believe that every type she allows for is present. When types can differ in three or more outcomes, such as earnings in the three occupations of \citet{keanewolpin}, she comes to believe that the extra types are absent.

\begin{remark}
    \citet{rm} obtain the same conclusions by choosing a Dirichlet prior on the shares with hyperparameter one, the uniform prior. Their conclusions depend on the hyperparameter, while ours hold for every smooth, strictly positive prior.
\end{remark}

\subsection{Underconfidence and Hesitancy}

In this subsection, we study how an underconfident and ``hesitant" agent learns about a correct decision. She is underconfident, as in \citet{hks}, in that she dogmatically believes her ability to be lower than it actually is. We model hesitancy through the agent's uncertainty about the cost of wrong decisions and the option to take a safe default action that requires no effort. When she makes decisions other than her default, she observes the marginal returns to her decisions. 

We illustrate how the number of decisions she must make determines whether the optimal, correct decision or the safe, suboptimal default can persist in singular equilibrium. The narrative that her decisions are correct requires a coincidence for each decision, while the narrative that incorrect decisions cost nothing requires only one. If she makes a single decision, the two narratives are equally complex, robustness selects the first, and she can persist in the optimal decision. If she must make three or more decisions, the narrative that her decisions are correct requires too many coincidences, so she instead believes that incorrect decisions cost nothing, which prescribes the incorrect yet effortless default. Hesitancy thus induces the agent to believe that her decisions do not matter, and underconfidence keeps her ``complacent."

\begin{example}[Underconfidence and hesitancy]\label{hksex}
    An agent makes $d\geq 1$ decisions $e=(e_1,\hdots,e_d)$, e.g., one on each of $d$ projects, from a finite grid $A\subseteq[0,1]^d$ that contains her default $0$. Her parameter is $\t=(s,\phi)\in\Theta=[0,1]\times[0,1]^d$, where $\phi$ is the vector of right decisions (the fundamental) and $s$ is the stakes, how much wrong decisions cost her:
    \[
    Q^e_\t=\begin{cases}
        N\left(\tilde a-sL(e-\phi),1\right)\otimes N\left(s\g(\phi-e),I_d\right) & \text{for }e\neq 0\\
        N\left(\tilde a-sL(\phi),1\right)\otimes N\left(0,I_d\right) & \text{for }e=0
    \end{cases}
    \]
    where $L(x)=\frac{\g}{2}\|x\|_2^2$ with $\g>0$. The first coordinate is her output. The second is the marginal return to her decisions, $-\nabla_e[sL(e-\phi)]$, which she observes for nondefault decisions. Here, $\tilde a$ is her dogmatic belief about ability, as in \citet{hks}. Objectively, the stakes are $s^*=1$, the right decisions are $\phi^*\in A$, and her ability is $a$:
    \[
    P^e_*= \begin{cases}
        N\left(a-L(e-\phi^*),1\right)\otimes N\left(\g(\phi^*-e),I_d\right)  & \text{for }e\neq 0\\
        N\left(a-L(\phi^*),1\right)\otimes N\left(0,I_d\right) & \text{for }e=0
    \end{cases}.
    \]
    Note that the agent correctly models her decisions as only lowering output, except at the correct decision $e = \phi^*$. She is underconfident, $0<\Delta:=a-\tilde a\leq\g$. Decisions other than her default require effort, $u(e,y)=y_1-\kappa(e)$ with $\kappa(0)=0$ and $\kappa(e)=\bar\kappa>0$ for $e\neq 0$, so
    \[
    \bar u(e',\t)=\tilde a-sL(e'-\phi)-\kappa(e'),
    \]
    and we take $\bar\kappa<L(\phi^*)\leq\Delta$. So $\phi^*$ is the optimal decision, and her output at the default, $a-L(\phi^*)$, is at least $\tilde a$. For $e\in A$, let $\omega_e:=\Delta-L(e-\phi^*)$, so that her mean output at $e$ is $\tilde a+\omega_e$. For $e\neq 0$, with $x:=sL(e-\phi)\geq 0$, we have
    \begin{equation}\label{eq:hkskl}
    D_{KL}(P^e_*\|Q^e_\t)=\frac{1}{2}(\omega_e+x)^2+\frac{\g^2}{2}\left\|s(\phi-e)-(\phi^*-e)\right\|_2^2.
    \end{equation}
    At $\phi^*$, the marginal returns are objectively zero, $\omega_{\phi^*}=\Delta$, and the second term is $\frac{\g^2}{2}s^2\|\phi-\phi^*\|_2^2=\g sx$. So the minimum is attained exactly when $x=0$, and
    \[
    K^{\phi^*}(\t)=\Delta x+\frac{1}{2}x^2+\g sx,\quad \Theta^*(\phi^*)=\underbrace{\{s=0\}}_{S_0}\cup\underbrace{\{\phi=\phi^*\}}_{I_{\phi^*}}.
    \]
    Both narratives say that $\phi^*$ yields the optimal output and that the marginal returns are zero. The narrative $S_0$ says that there are zero stakes, so wrong decisions do not lower output, and requires one coincidence. The narrative $I_{\phi^*}$ says that her decisions are exactly right, leaving the stakes unrestricted, and requires $d$ coincidences, one for each decision. The truth $(1,\phi^*)$ lies in $I_{\phi^*}$.

    We analyze robustness at $\phi^*$. Near $(0,\phi_0)\in S_0$ with $\phi_0\neq\phi^*$, $L(\phi^*-\phi)$ is bounded away from zero, so $K^{\phi^*}\asymp s$. The coincidence $s=0$ therefore has robustness $1/2$, and Theorem \ref{twosided} gives $\lambda(S_0\setminus I_{\phi^*}\mid\phi^*)=1$. Near $(s_0,\phi^*)\in I_{\phi^*}$ with $s_0>0$, $K^{\phi^*}\asymp\|\phi-\phi^*\|_2^2$. Each of the $d$ coincidences $\phi_i=\phi^*_i$ therefore has robustness $1$, and $\lambda(I_{\phi^*}\setminus S_0\mid\phi^*)=d/2$. At the crossing $(0,\phi^*)$, $K^{\phi^*}\asymp s\|\phi-\phi^*\|_2^2$, so the near-explanations have volume $\asymp\int_0^r\min\{r^d,(\e/u)^{d/2}\}du$, which is $\asymp\e^{1/2}$ if $d=1$, $\asymp\e\log\frac{1}{\e}$ if $d=2$, and $\asymp\e$ if $d\geq 3$, so the crossing has the smaller of the two learning coefficients, $\min\{1,d/2\}$. Hence:
    \begin{itemize}
        \item $d=1$: $\lambda(I_{\phi^*}\mid\phi^*)=1/2<1=\lambda(S_0\setminus I_{\phi^*}\mid\phi^*)$, and $\Theta^{**}(\phi^*)=I_{\phi^*}$. Robustness selects the narrative that her decision is exactly right.
        \item $d=2$: both narratives have learning coefficient $1$, and $\Theta^{**}(\phi^*)=\Theta^*(\phi^*)$.
        \item $d\geq 3$: $\lambda(I_{\phi^*}\setminus S_0\mid\phi^*)=d/2>1=\lambda(S_0\mid\phi^*)$, and $\Theta^{**}(\phi^*)=S_0$. Complexity selects the narrative that wrong decisions cost nothing, although the true stakes are $s^*=1$.
    \end{itemize}
    As in Example \ref{mixex}, robustness selects with a single decision, and complexity selects with three or more. Each correct decision is a coincidence that needs to hold only approximately, but zero stakes is a single coincidence that explains every decision.

    Next, we consider her decisions. For $\mu\in\Delta(\Theta^*(\phi^*))$, let $t_\mu:=\int_\Theta sd\mu(\t)\in[0,1]$ be her expected stakes. Since every parameter in $\Theta^*(\phi^*)$ has $s=0$ or $\phi=\phi^*$,
    \[
    \bar u(\phi^*,\mu)-\bar u(e',\mu)=t_\mu L(e'-\phi^*)-\left[\bar\kappa-\kappa(e')\right],
    \]
    so $\phi^*$ is a best reply to $\mu$ if and only if the expected loss from switching to the default, $t_\mu L(\phi^*)$, is at least $\bar\kappa$. At $e\notin\{0,\phi^*\}$, the marginal returns are objectively nonzero, and \eqref{eq:hkskl} has a unique minimizer:
    \[
    \Theta^*(e)=\{(1,\hat\phi_e)\},\quad \hat\phi_e=e+\hat\rho_e\frac{\phi^*-e}{\|\phi^*-e\|_2},\quad \frac{1}{2}\|\phi^*-e\|_2<\hat\rho_e<\|\phi^*-e\|_2.
    \]
    Then $\bar u(\phi^*,(1,\hat\phi_e))-\bar u(e,(1,\hat\phi_e))=\frac{\g}{2}\|\phi^*-e\|_2(2\hat\rho_e-\|\phi^*-e\|_2)>0$, so $e$ is not a Berk-Nash equilibrium. At the default, she observes only her output, which is at least $\tilde a$. So $\Theta^*(0)=S_0\cup I_0$ with $I_0 := \{\phi = 0\}$, and for $\mu\in\Delta(\Theta^*(0))$ and $e'\neq 0$, $\bar u(0,\mu)-\bar u(e',\mu)=t_\mu L(e')+\bar\kappa>0$. Thus, for $d\geq 3$:
    \begin{enumerate}[(i)]
        \item The optimal decision $\phi^*$ is a best reply to the truth $\d_{(1,\phi^*)}\in\Delta(\Theta^*(\phi^*))$, since $L(\phi^*)>\bar\kappa$, so it is a Berk-Nash equilibrium. But $\phi^*$ is not a singular equilibrium, as it is not supported by the narrative that there are zero stakes, under which her only best reply is the default.
        \item No other decision $e\neq 0$ is a Berk-Nash equilibrium.
        \item The default is a strict uniform singular equilibrium, since $BR(\d_\t)=\{0\}$ for every $\t\in\Theta^*(0)$.
    \end{enumerate}
    The Berk-Nash equilibria are $0$ and $\phi^*$, and the only singular equilibrium is $0$, where she loses $L(\phi^*)-\bar\kappa$, which is less than her underconfidence. By Theorem \ref{main2}, $\phi^*$ cannot be a limit action. If $d\leq 2$, then $I_{\phi^*}\subseteq\Theta^{**}(\phi^*)$, and $\phi^*$ is also a singular equilibrium.
\end{example}

\begin{remark}
    In \citet{hks}, the fundamental is an external condition that affects her output directly and also determines her optimal action. In Example \ref{hksex}, the fundamental is just the vector of right decisions, so her output depends only on how far her decisions are from it. Once she makes the right decisions, the narrative that there are no stakes explains her data as well as the narrative that her decisions are correct. With three or more decisions, she comes to believe that there are zero stakes and excludes the narrative that her decisions are correct. Her loss is less than her underconfidence, as in \citet{hks}, but she settles for her default rather than a decision near the optimal one.
\end{remark}

\subsection{Sensitivity to Misspecification}

In this subsection, we study whether predictions are robust to small misspecification, that is, whether an agent whose model is nearly correct comes to hold beliefs close to those of an agent whose model is correct. \citet{fii} show that in social learning among agents with heterogeneous types, arbitrarily small misperceptions of the type distribution can lead all agents to assign probability one to an incorrect state, because agents come to rely on others' actions, which depend on the state only through others' beliefs. In our setting, a single agent learns from exogenous data whose distribution always depends on the truth, and small misspecification can still change her long-run beliefs discontinuously, through the robustness of narratives. In the following example, the set of KL minimizers does not depend on the degree of misspecification, but an arbitrarily small misspecification lowers the robustness of a coincidence and changes the concentration set.\footnote{There is an econometric literature on sensitivity to local misspecification, which studies estimators rather than beliefs \citep{ags,armstrongkolesar,bonhommeweidner}.}

\begin{runningex}[Misspecification]\label{misspecex}
    Consider the exogenous setting of Example \ref{robustex} with threshold order $k\in \N$, and add a third signal:
    \begin{gather*}
    Q_{(q,c)} = N\left(q(c-d)^k,1\right)\otimes N\left(q(q-c)^k,1\right)\otimes N\left(-q^2(q-c)^2,1\right)\\
    P_* = N(0,1)\otimes N(0,1)\otimes N(\tau,1),
    \end{gather*}
    on $\Theta=[0,1]^2$ with $d\in (0,1)$ and $\tau\geq 0$. The third signal is a penalty for the mismatch between her ability and her circumstances. Objectively, the penalty has mean $\tau$. The case $\tau=0$ is correctly specified. When $\tau>0$, the agent is rewarded, whereas her model only allows a penalty. Writing $m(q,c)$ for the vector of subjective means, one can compute
    \[
    D_{KL}(P_*\|Q_{(q,c)}) = \frac{1}{2}\tau^2 + \frac{1}{2}\|m(q,c)\|_2^2 + \tau q^2(q-c)^2.
    \]
    The last two terms are nonnegative and vanish exactly when $m(q,c)=0$, so the minimum is $D_0 = \tau^2/2$, and
    \begin{gather*}
    \Theta^* = \underbrace{\{(0,c) : c\in [0,1]\}}_{I_1}\cup \underbrace{\{(d,d)\}}_{I_2}\\
    K(q,c) = \frac{1}{2}q^2\left[(c-d)^{2k}+(q-c)^{2k}\right] + \frac{1}{2}q^4(q-c)^4 + \tau q^2(q-c)^2.
    \end{gather*}
    As in Example \ref{robustex}, the narratives and their complexities do not depend on $\tau$. However, $\tau$ influences the robustness of $I_2$. When $\tau>0$, the coincidence $q=c$ of circumstance-blame becomes less robust, while self-blame is unaffected. Near $(0,c_0)\in I_1$, every term of $K$ carries a factor $q^2$, and the remaining factor is bounded above and away from zero, so $K\asymp q^2$. The coincidence $q=0$ has robustness $1$, and Theorem \ref{twosided} gives $\lambda(I_1) = 1/2$ for every $\tau\geq 0$. Near $(d,d)$, the factor $q^2$ is bounded away from zero, and for $k\geq 2$
    \[
    K\asymp (c-d)^{2k} + (q-c)^4 + \tau(q-c)^2.
    \]
    When $\tau = 0$, the coincidences $c=d$ and $q=c$ have robustness $k$ and $2$, and Theorem \ref{twosided} gives $\lambda(I_2) = \frac{1}{2}\left(\frac{1}{k}+\frac{1}{2}\right)$. When $\tau>0$, the coincidence $q=c$ has robustness $1$ instead because the misspecification adds the term $\tau q^2(q-c)^2$ to the KL divergence, and near $(d,d)$ this term grows faster in $q-c$ than the quartic term that governs the KL divergence when $\tau = 0$. Thus, $\lambda(I_2) = \frac{1}{2}\left(\frac{1}{k}+1\right)$. Hence, by Theorem \ref{main1}, for every $k\geq 3$:
    \begin{itemize}
        \item $\tau = 0$: $\lambda(I_2) = \frac{k+2}{4k}<\frac{1}{2} = \lambda(I_1)$, and $\Theta^{**} = I_2$. Circumstance-blame is selected by robustness, as in Example \ref{robustex}.
        \item $\tau>0$: $\lambda(I_2) = \frac{k+1}{2k}>\frac{1}{2} = \lambda(I_1)$, and $\Theta^{**} = I_1$. Self-blame is selected, however small $\tau$ is.
    \end{itemize}
\end{runningex}

\begin{remark}
    In \citet{fii}, belief updating decouples from the state because agents rely on the actions of others. In Example \ref{misspecex}, a single agent observes data drawn from the objective distribution in every period. The misspecification does not change the set of best-fitting parameters, but changes the robustness of coincidence $q=c$, from the fourth power to the square of the violation, which raises the learning coefficient of circumstance-blame above that of self-blame. When $\tau=0$, she observes what she expects under circumstance-blame, so believing that ability and circumstances are slightly mismatched adds almost no surprise, and $q=c$ need hold only roughly. When $\tau>0$, she observes a reward, which she does not expect under circumstance-blame, so with a slight mismatch, this surprise is compounded and $q=c$ must hold almost exactly.
\end{remark}

\subsection{Neural Networks}

Firms and policymakers increasingly act on the predictions of neural networks, in which different values of the weights, the network's parameters, can produce the same predictions \citep{almost}. Our local learning coefficient extends the one that \citet{lau2025local} define for neural networks, which has been estimated during the training of transformer language models \citep{hoogland2025,wang2025}. By Theorem \ref{main2}, an action that is not a singular equilibrium cannot persist when an agent acts on predictions of a neural network. In the following example, we consider a bank that uses a neural network to predict its profit on a loan from the applicant's characteristics, and chooses between lending to every applicant and paying to lend only to two kinds of applicants. Restricting lending can be a Berk-Nash equilibrium but is never a singular equilibrium.

\begin{example}[Neural networks]\label{nnex}
    A bank predicts its profit $y\in\R$ on a loan from the characteristics $x\in[0,1]^p$ of the applicant, $p\geq 3$, using a neural network with one hidden unit. Its parameter is $\t=(v,w)\in\Theta=[-1,1]\times[-1,1]^p$, and the network predicts $v\tanh(w^\top x)$. The hidden unit transforms the weighted sum $w^\top x$ of the characteristics, and the output weight $v$ scales the result. Under action $a$, the bank lends to every applicant, and $x$ is drawn from a known distribution $G^a$ with positive density on $[0,1]^p$. Under action $b$, the bank pays $\kappa>0$ to lend only to two kinds of applicants, with linearly independent characteristics $\bar x_1$ and $\bar x_2$, and $x$ is drawn from the distribution $G^b$ that places probability $1/2$ on each. Under each action $a'$, the bank observes $(x,y)$, its model draws $x$ from $G^{a'}$ and then $y$ from $N(v\tanh(w^\top x),1)$, and objectively $y$ is drawn from $N(0,1)$. Utility is profit net of the cost of restricting, so
    \[
    \bar u(a,\t)=v\E_{G^a}\left[\tanh(w^\top x)\right],\quad\bar u(b,\t)=\frac{v}{2}\left[\tanh(w^\top\bar x_1)+\tanh(w^\top\bar x_2)\right]-\kappa.
    \]
    The environment is correctly specified, and $K^{a'}(\t)=\frac{1}{2}v^2\E_{G^{a'}}[\tanh(w^\top x)^2]$.

    Under $a$, $K^a$ vanishes exactly when $v=0$ or $w=0$, so $\Theta^*(a)=I_1\cup I_2$ with $I_1:=\{v=0\}$ and $I_2:=\{w=0, v\neq 0\}$. The narrative $I_1$ says that the hidden unit does not affect the prediction and requires one coincidence. The narrative $I_2$ says that the hidden unit does not respond to the characteristics and requires $p$ coincidences. Near $(0,w_0)$ with $w_0\neq 0$, $K^a\asymp v^2$, and near $(v_0,0)$ with $v_0\neq 0$, $K^a\asymp\|w\|_2^2$, so every coincidence has robustness $1$, and $\Theta^{**}(a)=I_1$ by Corollary \ref{occam}(a). Under $b$, $K^b$ vanishes exactly when $v=0$ or $w^\top\bar x_1=w^\top\bar x_2=0$, so $\Theta^*(b)=I_1\cup S$ with $S:=\{w^\top\bar x_1=w^\top\bar x_2=0, v\neq 0\}$. The narrative $S$ says that the hidden unit is zero for both kinds of applicants, regardless of its value for other applicants, and requires two coincidences with robustness $1$. Near points of $I_1$ with $w^\top\bar x_1\neq 0$, $K^b\asymp v^2$, so Theorem \ref{twosided} gives $\lambda(I_1\mid b)=1/2<1=\lambda(S\mid b)$, and $\Theta^{**}(b)=I_1$.

    On $I_1$, the predicted profit is zero under both actions, so $\bar u(a,\t)=0>-\kappa=\bar u(b,\t)$. Hence, $a$ is a strict uniform singular equilibrium, and $b$ is not a singular equilibrium. However, $b$ can be a Berk-Nash equilibrium. Suppose the mean characteristics $\E_{G^a}[x]$ do not lie in the span of $\bar x_1$ and $\bar x_2$, which is possible since $p\geq 3$. Then some $w$ with $w^\top\bar x_1=w^\top\bar x_2=0$ has $\E_{G^a}[\tanh(w^\top x)]\neq 0$. Choose the sign of $v$ so that $\bar u(a,(v,w))<0$, and suppose $\kappa<-\bar u(a,(v,w))$. The belief $\d_{(v,w)}$ is supported on $S\subseteq\Theta^*(b)$, and $\bar u(b,(v,w))=-\kappa>\bar u(a,(v,w))$, so $b$ is a Berk-Nash equilibrium. Under this belief, the bank expects to lose money on the applicants it turns away, and the profits it observes have mean zero, as the belief predicts. By Theorem \ref{main2}, $b$ is not a limit action. If the bank restricts its lending, its posterior concentrates on $I_1$, under which the characteristics of applicants do not affect profit and restricting only incurs a cost. The narrative that justifies restricting requires a coincidence for each kind of applicant.
\end{example}
\section{Concluding Remarks}

Our paper provides a way to identify economic narratives naturally arising from Bayesian updating, and a solution concept, justified by learning foundations, that selects the least complex and most robust narratives. Our machinery may extend beyond learning environments, as the local learning coefficient can be defined at the minimizers of any nonnegative analytic function $f$ and governs every distribution proportional to $e^{-tf}$ as $t\to\infty$, not only Bayesian posteriors. Such distributions include the quasi-posteriors of moment-based estimators \citep{ch2003}, which concentrate on the minimizers of a GMM criterion, and the stationary distributions of logit learning in potential games \citep{blume1993}, which concentrate on the maximizers of the potential as noise vanishes. With a continuum of parameters or strategies, the local learning coefficient would determine which of these points survive. Extending our results to dependent data would also allow us to apply them to the Bayesian estimation of DSGE models, in which distinct parameters can be observationally equivalent \citep{komunjerng2011}. The learning coefficient also suggests a role for information design, where a designer can change which narrative the agent learns, and hence which actions are singular equilibria.

Several extensions remain open. First, one could define and study singular equilibria in games with multiple players and with mixed actions. Second, Supplemental Appendix \ref{supp:competing} only considers the exogenous and correctly specified setting, so one could extend it to the endogenous and misspecified setting. Third, one could endogenize a concern for complexity, in the style of \citet{lanzani2025}. Lastly, Supplemental Appendix \ref{supp:tightest} brackets the tightest set to which the posterior concentrates, but we were unable to characterize it in general. Whether a parameter belongs to it depends on how the fluctuations of the log-likelihood compete with the logarithmic factors from multiplicity, and we conjecture that a characterization requires the cluster set of the normalized empirical process, which, by the compact law of the iterated logarithm, is the unit ball of the reproducing kernel Hilbert space of its covariance.

\pagebreak

\appendix
\renewcommand{\theequation}{A.\arabic{equation}}
\setcounter{equation}{0}
\section{Proofs}\label{app}
\renewcommand{\thelemma}{A.\arabic{lemma}}
\setcounter{lemma}{0}

Throughout this appendix, the outcome space $Y$ is finite, as in the main text. Supplemental Appendix \ref{supp:general} explains why the same arguments apply to general outcome spaces. We use Theorem \ref{main1}, which is proved in Supplemental Appendix \ref{supp:main1}, as a black box.

\subsection{The Local Learning Coefficient}\label{app:lambda}

For $\t\in \Theta^*(a)$ and $r>0$, let $\lambda_r(\t\mid a) := \liminf_{\e\downarrow 0}\log V^a_\t (r,\e) / \log \e$.

\begin{lemma}\label{lambdaprops}
    Fix $a\in A$.
    \begin{enumerate}[(a)]
        \item For every $\t\in\Theta^*(a)$, $r\mapsto\lambda_r(\t\mid a)$ is nonincreasing, so the limit defining $\lambda(\t\mid a)$ exists and equals $\sup_{r>0}\lambda_r(\t\mid a)$. Moreover, $0\leq\lambda(\t\mid a)\leq d$.
        \item $\lambda(\cdot\mid a)$ is lower semicontinuous on $\Theta^*(a)$.
        \item $\Theta^{**}(a)$ is nonempty and compact.
    \end{enumerate}
\end{lemma}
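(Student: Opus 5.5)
For (a), I will fix $\t\in\Theta^*(a)$ and $0<r<r'$. Since $N^a_\t(r,\e)\subseteq N^a_\t(r',\e)$, we get $V^a_\t(r,\e)\leq V^a_\t(r',\e)$. For $\e<1$, $\log\e<0$, so dividing reverses the inequality: $\log V^a_\t(r,\e)/\log\e\geq \log V^a_\t(r',\e)/\log\e$. Taking $\liminf_{\e\downarrow0}$ gives $\lambda_r\geq\lambda_{r'}$. Hence $r\mapsto\lambda_r$ is nonincreasing, and the limit as $r\downarrow0$ exists in $[0,\infty]$ and equals $\sup_{r>0}\lambda_r$.

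The upper bound $0\leq\lambda\leq d$ comes from two volume estimates. For the lower bound, $V^a_\t(r,\e)\leq\Leb(B(\t,r))$ is bounded, so for small $\e$ the ratio is at least $\log(\mathrm{const})/\log\e\to0$, and $\lambda_r\geq0$. For the upper bound I will use the Lipschitz continuity of $K^a$ near $\t$, which follows from real analyticity on a neighborhood of the compact $\Theta$ (Assumption \ref{fc2}(iii)). This gives $K^a\leq L|\t'-\t|$ on $B(\t,r)$, so $N^a_\t(r,\e)\supseteq\Theta\cap B(\t,\e/L)$. Because $\Theta$ is convex with nonempty interior (it is a full-dimensional region cut out by analytic inequalities, and the prior density is positive on it), $\Leb(\Theta\cap B(\t,\rho))\geq c\rho^d$ for small $\rho$: take a fixed ball inside $\Theta$ and form the cone from $\t$ to it. Then $V\geq c(\e/L)^d$, so $\lambda_r\leq d$ for every $r$, and therefore $\lambda\leq d$.

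For (b), I take $\t_n\to\t$ in $\Theta^*(a)$ and fix $r>0$. For $n$ large, $B(\t_n,r/2)\subseteq B(\t,r)$, so $V^a_{\t_n}(r/2,\e)\leq V^a_\t(r,\e)$. This yields $\lambda_{r/2}(\t_n)\geq\lambda_r(\t)$, and hence $\lambda(\t_n)\geq\lambda_r(\t)$. Taking $\liminf_n$ and then the supremum over $r$ gives $\liminf_n\lambda(\t_n)\geq\lambda(\t)$, which is lower semicontinuity. This step is purely monotone and involves no real difficulty.

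For (c), $\Theta^*(a)$ is a nonempty compact set, since it is the argmin of the continuous $K^a$ over the compact $\Theta$. By (a), $\lambda$ takes values in $[0,d]$, and a lower semicontinuous function on a compact set attains its minimum $\underline\lambda(a)$. Lower semicontinuity also makes $\{\lambda\leq\underline\lambda\}=\Theta^{**}(a)$ closed in $\Theta^*(a)$, hence compact.

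The main obstacle is the bound $\lambda\leq d$ in (a), specifically the cone estimate at boundary points of $\Theta$. This needs $\Theta$ to have nonempty interior. I would justify that from Assumption \ref{fc1}: $\mu_0$ is absolutely continuous and is a probability on $\Theta$, so $\Leb(\Theta)>0$, and a convex set with positive measure has interior.
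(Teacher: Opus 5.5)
Your proposal is correct and follows the paper's proof essentially step for step: monotonicity in $r$ from nested near-explanations, the bounds $0\le\lambda\le d$ from boundedness of the volume and a Lipschitz bound on $K^a$ combined with the convex-cone volume estimate, lower semicontinuity via $B(\t',r/2)\subseteq B(\t,r)$, and compactness of $\Theta^{**}(a)$ as a sublevel set of a lower semicontinuous function on the compact set $\Theta^*(a)$. Your justification that $\Theta$ has nonempty interior is a point the paper simply asserts: $\mu_0$ is absolutely continuous and has $\mu_0(\Theta)=1$, so $\Leb(\Theta)>0$, and a convex set of positive measure has nonempty interior.
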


\begin{proof}
    Suppress $a$. For (a), if $r<r'$, then $N_\t(r,\e)\subseteq N_\t(r',\e)$, so $V_\t(r,\e)\leq V_\t(r',\e)\leq\Leb(\Theta)$. Taking log and dividing by $\log \e$ gives $\lambda_r(\t)\geq\lambda_{r'}(\t)\geq 0$. For the upper bound, $K$ is real analytic and vanishes at $\t\in \Theta^*$, so $K(\t') \leq C_\t\|\t'-\t\|_2$ near $\t$. Since $\Theta$ is convex with nonempty interior, there is $c>0$ with $\Leb(\Theta\cap B(\t,s))\geq cs^d$ for all $s\in(0,1]$. Hence $V_\t(r,\e)\geq c(\e/C_\t)^d$ for small $\e$, and $\lambda_r(\t)\leq d$.

    For (b), let $\t\in\Theta^*$ and $r>0$. If $\t'\in\Theta^*\cap B(\t,r/2)$, then $B(\t',r/2)\subseteq B(\t,r)$, so $V_{\t'}(r/2,\e)\leq V_\t(r,\e)$ and $\lambda(\t')\geq\lambda_{r/2}(\t')\geq\lambda_r(\t)$. Hence $\liminf_{\t'\to\t}\lambda(\t')\geq\lambda_r(\t)$ for every $r>0$, and letting $r\downarrow 0$ gives $\liminf_{\t'\to\t}\lambda(\t')\geq\lambda(\t)$.
 
    For (c), $\Theta^*$ is nonempty and compact because $K$ is continuous on the compact set $\Theta$. A lower semicontinuous function with values in $[0,d]$ attains its minimum on a nonempty compact set, so $\Theta^{**}\neq\emptyset$. If $\t_n\in\Theta^{**}$ and $\t_n\to\t$, then $\t\in\Theta^*$ and, by (b), $\underline{\lambda}\leq\lambda(\t)\leq\liminf_n\lambda(\t_n)=\underline{\lambda}$, so $\t\in\Theta^{**}$.
\end{proof}

\begin{proof}[Proof of Proposition \ref{invariance}]
    The local learning coefficient is defined through Lebesgue volumes and does not involve $\mu_0$, so neither $\lambda(\cdot\mid a)$ nor $\Theta^{**}(a)$ depends on the prior. For the second claim, suppress $a$, let $\varphi:\Xi^{(R)}\to\Theta^{(R)}$ be a real analytic isomorphism, and let $\Xi:=\varphi^{-1}(\Theta)$. The KL divergence of the relabeled model is $K\circ\varphi$, so its KL-minimizing set is $\varphi^{-1}(\Theta^*)$. Fix $\xi$ in this set and let $\t:=\varphi(\xi)$. Near $\xi$ and $\t$, the maps $\varphi$ and $\varphi^{-1}$ are $L$-Lipschitz and $J^{-1}\leq|\det D\varphi|\leq J$ for some $L,J\geq 1$, so for all small $r$, $B(\t,r/L)\subseteq\varphi(B(\xi,r))\subseteq B(\t,Lr)$. The near-explanations of $\xi$ at radius $r$ are $\varphi^{-1}(T)$, where $T:=\{\t'\in\Theta \cap \varphi(B(\xi,r)) : K(\t')\leq\e\}$. By the change of variables formula, $\Leb(T)=\int_{\varphi^{-1}(T)}|\det D\varphi(\xi')|d\xi'$, so $J^{-1}\Leb(T) \leq V_\xi(r,\e)\leq J\Leb(T)$. Since $N_\t(r/L,\e)\subseteq T\subseteq N_\t(Lr,\e)$, this gives
    \[
    J^{-1}V_\t(r/L,\e)\leq V_\xi(r,\e)\leq JV_\t(Lr,\e).
    \]
    Taking log, dividing by $\log\e<0$, and letting $\e\downarrow 0$ and then $r\downarrow 0$, Lemma \ref{lambdaprops}(a) gives $\lambda(\xi)=\lambda(\t)$. Hence the relabeled concentration set is $\varphi^{-1}(\Theta^{**})$.
\end{proof}

\subsection{Proof of Theorem \ref{twosided}}\label{app:twosided}

Suppress $a$. Since $\lambda$ is constant on $I$, it suffices to bound $\lambda(\t_0)$ at the points $\t_0$ of Definition \ref{robustdef}. Given coincidences $\phi = (\phi_1,\hdots, \phi_c)$ at $\t_0$, choose an affine map $\psi : \R^d\to \R^{d-c}$ with $\psi(\t_0) = 0$ such that the differentials of $\phi_1,\hdots, \phi_c, \psi_1,\hdots, \psi_{d-c}$ are linearly independent at $\t_0$. By the inverse function theorem, $\chi := (\phi,\psi)$ is a real analytic isomorphism of a neighborhood $O$ of $\t_0$ onto an open set containing $[-\eta_0, \eta_0]^d$ with $J^{-1}\leq |\det D\chi^{-1}| \leq J$ on $[-\eta_0,\eta_0]^d$ for some $J\geq 1$. Write $\sigma(\rho) := \frac{1}{2}\sum_{\ell=1}^c 1/\rho_\ell.$

For the lower bound, let $\bar\rho$ be an upper robustness profile with $\t_0$, $\phi$, and $c_0$ as in Definition \ref{robustdef}(b). For small $r>0$, $B(\t_0,r)\subseteq O$, $\chi(B(\t_0,r))\subseteq[-\eta_0,\eta_0]^d$, and $K\geq c_0\sum_\ell|\phi_\ell|^{2\bar\rho_\ell}$ on $\Theta\cap B(\t_0,r)$. Every $\t'\in N_{\t_0}(r,\e)$ then satisfies $|\phi_\ell(\t')|\leq(\e/c_0)^{1/2\bar\rho_\ell}$ for all $\ell$, so taking $\t'\mapsto\chi(\t')$ gives
\[
V_{\t_0}(r,\e)\leq J(2\eta_0)^{d-c}\prod_{\ell=1}^c 2\left(\frac{\e}{c_0}\right)^{1/2\bar\rho_\ell}=C_1\e^{\sigma(\bar\rho)}.
\]
Dividing $\log V_{\t_0}(r,\e)\leq\log C_1+\sigma(\bar\rho)\log\e$ by $\log\e<0$ and letting $\e\downarrow 0$ gives $\lambda_r(\t_0)\geq\sigma(\bar\rho)$, so $\lambda(I)=\lambda(\t_0)\geq\sigma(\bar\rho)$.

For the upper bound, let $\underline{\rho}$ be a lower robustness profile with $\t_0$, $\phi$, and $C$ as in Definition \ref{robustdef}(a). Fix $r>0$ small enough that $B(\t_0,r)\subseteq O$, that $K\leq C\sum_\ell|\phi_\ell|^{2\underline{\rho}_\ell}$ on $\Theta\cap B(\t_0,r)$, and that Definition \ref{robustdef}(a)(ii) holds, and choose $\eta\in(0,\eta_0]$ with $\chi^{-1}([-\eta,\eta]^d)\subseteq B(\t_0,r)$. For $\d\in(0,1]$ with $\d^{1/\underline{\rho}_\ell}\leq\eta$ for all $\ell$,
\[
N^r_\d(\underline{\rho})\supseteq\chi^{-1}\left(\prod_{\ell=1}^c[-\d^{1/\underline{\rho}_\ell},\d^{1/\underline{\rho}_\ell}]\times[-\eta,\eta]^{d-c}\right),
\]
so $\Leb(N^r_\d(\underline{\rho}))\geq J^{-1}2^d\eta^{d-c}\d^{2\sigma(\underline{\rho})}$. Moreover, $K\leq Cc\d^2$ on $\Theta\cap N^r_\d(\underline{\rho})$, since $|\phi_\ell|^{2\underline{\rho}_\ell}\leq\d^2$ there. By Definition \ref{robustdef}(a)(ii), there exists $s_0>0$ with $\Leb(\Theta\cap N^r_\d(\underline{\rho}))\geq s_0\Leb(N^r_\d(\underline{\rho}))$ for all small $\d$. Taking $\d:=(\e/Cc)^{1/2}$ gives, for all small $\e$,
\[
V_{\t_0}(r,\e)\geq \Leb(\Theta\cap N^r_\d(\underline{\rho}))\geq s_0J^{-1}2^d\eta^{d-c}\left(\frac{\e}{Cc}\right)^{\sigma(\underline{\rho})},
\]
so $\lambda_r(\t_0)\leq\sigma(\underline{\rho})$. Since this holds for all small $r$, Lemma \ref{lambdaprops}(a) gives $\lambda(I)=\lambda(\t_0)\leq\sigma(\underline{\rho})$. The remaining claims follow by substituting constant profiles, and because a separable robustness profile is both a lower and an upper profile.

\subsection{Dynamic Programming and Coupling}\label{app:dp}

For $\mu\in\Delta(\Theta)$ and $a\in A$, let $Q^a_\mu:=\int_\Theta Q^a_\t d\mu(\t)$, and let $\mu_{a,y}$ denote the Bayes update of $\mu$ after $(a,y)$.
 
\begin{lemma}\label{dp}
    There is a bounded and weakly continuous function $W:\Delta(\Theta)\to\R$ such that
    \[
    W(\mu)=\max_{a\in A}B(a,\mu),\quad B(a,\mu):=(1-\b)\bar u(a,\mu)+\b\int_Y W(\mu_{a,y})dQ^a_\mu(y),
    \]
    and the following hold.
    \begin{enumerate}[(a)]
        \item $W(\mu)=\sup_\sigma V(\sigma,\mu)$ for every $\mu$, each $V(\sigma,\cdot)$ is affine with $V(\sigma,\mu)=\int_\Theta V(\sigma,\d_\t) d\mu(\t)$, and there is a belief-based policy that is optimal at every $\mu$.
        \item $B(a,\cdot)$ and the \textit{Bellman loss} $G_a:=W-B(a,\cdot)\geq 0$ are weakly continuous for every $a\in A$.
        \item If a belief-based policy $\pi$ is optimal for $\mu_0$, then $G_{\pi(\mu_t)}(\mu_t)=0$ for every $t\geq 0$, $\P^{\mu_0}_\pi$-almost surely.
    \end{enumerate}
\end{lemma}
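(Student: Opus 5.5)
I will treat this as a standard discounted dynamic program on the compact metrizable state space $\Delta(\Theta)$ with the weak topology, and build $W$ as the fixed point of the Bellman operator.

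\textbf{Step 1: the Bellman operator.} For bounded weakly continuous $f$, define
\[
(Tf)(\mu):=\max_{a\in A}\Bigl[(1-\b)\bar u(a,\mu)+\b\sum_{y\in Y} f(\mu_{a,y})Q^a_\mu(y)\Bigr].
\]
$T$ is a $\b$-contraction in the sup norm, so the fixed point $W$ follows from Banach's theorem once $T$ maps the closed subspace $C_b(\Delta(\Theta))$ into itself. This is the main point to check. By Assumption \ref{fc1}(i), $Q^a_\t(y)$ is either identically zero in $\t$ or strictly positive for each $(a,y)$. Analyticity on the compact set $\Theta$ makes it continuous, so it is bounded below by some $m>0$ on the support. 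Consequently:
\begin{enumerate}[(i)]
\item $\mu\mapsto Q^a_\mu(y)=\int Q^a_\t(y)d\mu$ is weakly continuous and bounded below by $m$ when $y$ is in the support.
\item $\bar u(a,\mu)$ is weakly continuous, being the integral of a continuous function of $\t$.
\item The Bayes map $\mu\mapsto\mu_{a,y}$, given by $\int g\,d\mu_{a,y}=\int gQ^a_\t(y)d\mu/Q^a_\mu(y)$, is weakly continuous because the denominator stays away from zero.
\end{enumerate}
Outcomes outside the support carry zero weight and can be dropped. Finitely many maxima and sums of continuous functions are continuous, so $Tf\in C_b$.

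\textbf{Step 2: part (a).} Standard arguments give $W=\sup_\sigma V(\sigma,\cdot)$: iterate the Bellman equation $n$ times and bound the tail by $\b^n\|u\|_\infty$. Since $A$ is finite, a measurable selection $\pi(\mu)\in\arg\max_a B(a,\mu)$ exists; take the smallest index in a fixed ordering, which is measurable because each $B(a,\cdot)$ is continuous. By the one-shot deviation principle, valid under discounting with bounded payoffs, this $\pi$ is optimal at every belief. For affinity, the law of the history under $\mu$ and a fixed history-dependent $\sigma$ is the mixture $\int\P^{\d_\t}_\sigma\,d\mu(\t)$. Indeed, the subjective process first draws $\t\sim\mu$ and then i.i.d.\ outcomes given actions, and the history-dependent policy does not depend on $\t$. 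Fubini then gives $V(\sigma,\mu)=\int V(\sigma,\d_\t)d\mu$.

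\textbf{Step 3: parts (b) and (c).} Part (b) is immediate from Step 1. $B(a,\cdot)$ is continuous, $W=\max_a B(a,\cdot)$ is continuous, and $G_a=W-B(a,\cdot)\ge0$. For (c), suppose $G_{\pi(\mu_t)}(\mu_t)>0$ on a history set of positive probability. I will compare $\pi$ with the policy $\pi'$ that follows $\pi$ except that it switches to the Bellman-optimal selection after the first such history. This strictly raises the value at $\mu_0$, contradicting optimality of $\pi$.

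The subtlety is the measure. The paper defines $\P^{\mu_0}_\pi$ under the objective $P_*$, while optimality is judged under the subjective $Q_{\mu_t}$. By Assumption \ref{fc1}(i) the two laws of each finite history are mutually absolutely continuous, since they have the same null sets of finite histories. Hence ``positive $\P^{\mu_0}_\pi$-probability'' transfers to the subjective measure. Since a policy must be optimal at every $t$ along every history where $\mu_t$ is defined, the requirement already forces zero Bellman loss on all reachable histories. I expect this absolute-continuity bookkeeping to be the only delicate step; the rest is routine.
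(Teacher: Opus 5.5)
Your proposal is correct and matches the paper's proof in all essentials: a $\b$-contraction on bounded weakly continuous functions, the min-argmax selection, the mixture representation $\int_\Theta P_{\sigma,\t}\,d\mu(\t)$ of the \emph{subjective} history law for affinity, and absolute continuity of the objective law with respect to the subjective one on finite histories for (c); note that your $\P^{\d_\t}_\sigma$ has to be read as this subjective law, since the paper reserves $\P$ for the objective one. The differences are minor: the paper obtains $W\geq V(\sigma,\cdot)$, optimality of $\min\arg\max_a B(a,\cdot)$, and (c) all at once from the telescoping identity $W(\mu)=V(\sigma,\mu)+\sum_{t\geq 0}\b^t\E_{\sigma,\mu}[G_{a_{t+1}}(\mu_t)]$ rather than from truncation plus a switching deviation, and it proves continuity of the Bellman operator via Scheff\'e's lemma and dominated convergence, so the lemma also covers general outcome spaces, whereas your uniform lower bound on $Q^a_\t(y)$ relies on $Y$ being finite.
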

\begin{proof}
    See Supplemental Appendix \ref{supp:dp}.
\end{proof}
 
\begin{lemma}[Coupling]\label{coupling}
    Fix $a\in A$, a prior $\mu$, and a belief-based policy $\pi$. On a probability space $(\Omega,\G,\P)$, let $z_1,z_2,\hdots$ be i.i.d. with law $P^a_*$ and, independently, let $w_1,w_2,\hdots$ be i.i.d. with uniform law on $[0,1]$. For $b\neq a$, fix a measurable $\varphi_b$ such that $\varphi_b(w_1)$ has law $P^b_*$. Let $\mu^c_0:=\mu$ and recursively define
    \[
    a_{t+1}:=\pi(\mu^c_t),\quad y_{t+1}:=
    \begin{cases}z_{t+1}&\text{if }a_{t+1}=a\\
    \varphi_{a_{t+1}}(w_{t+1})&\text{otherwise}
    \end{cases},
    \quad \mu^c_{t+1}:=(\mu^c_t)_{a_{t+1},y_{t+1}}.
    \]
    Then $(a_t,y_t)_{t\geq 1}$ has law $\P^\mu_\pi$. For $T\geq 0$, let $\G_T:=\sigma(z_s,w_s:s\leq T)$, and let $\hat\mu^T_s$ be the exogenous posterior that starts from $\mu^c_T$ and updates on $(a,z_{T+1}),\hdots,(a,z_{T+s})$. Then $\mu^c_T$ is $\G_T$-measurable, $(z_{T+s})_{s\geq 1}$ is independent of $\G_T$, and $\mu^c_{T+s}=\hat\mu^T_s$ for all $s\geq 0$ on the event $\{a_t=a\text{ for all }t>T\}$.
\end{lemma}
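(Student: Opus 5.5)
The plan is to verify the three claims by following the recursive construction, using only independence and the Markov structure of Bayesian updating with a belief-based policy.

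First, the law of the coupled process. By induction on $t$, $\mu^c_t$ and $a_{t+1}=\pi(\mu^c_t)$ are measurable functions of $(z_s,w_s)_{s\le t}$; this also gives the second claim. Conditional on $\G_t$, the pair $(z_{t+1},w_{t+1})$ is independent of $\G_t$. Hence, given the history, $y_{t+1}$ has law $P^{a_{t+1}}_*$: it equals $z_{t+1}\sim P^a_*$ if $a_{t+1}=a$, and $\varphi_{a_{t+1}}(w_{t+1})$ otherwise. For the latter, we use that $\varphi_b(w_{t+1})$ has the same law as $\varphi_b(w_1)$, namely $P^b_*$, because the $w_s$ are i.i.d. The belief then updates by Bayes' rule, exactly as in the definition of $\P^\mu_\pi$. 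Since $\P^\mu_\pi$ on histories is determined by these one-step transition kernels (Ionescu-Tulcea), and the coupled process has the same initial belief and the same kernels, the finite-dimensional distributions agree and the laws coincide. Although $y_{t+1}$ uses $z_{t+1}$ only when $a$ is played, it is measurable with respect to $\G_{t+1}$, so the induction goes through.

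Second, independence of the future $z$'s. The sequence $(z_{T+s})_{s\ge1}$ is independent of $\G_T=\sigma(z_s,w_s:s\le T)$, because all $z$'s and $w$'s are mutually independent. Together with the measurability of $\mu^c_T$ established above, this settles the second and third assertions.

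Third, the identity $\mu^c_{T+s}=\hat\mu^T_s$ on $E:=\{a_t=a \text{ for all } t>T\}$. We argue by induction on $s$. The case $s=0$ holds because both sides equal $\mu^c_T$. For the inductive step, on $E$ we have $a_{T+s+1}=a$, so $y_{T+s+1}=z_{T+s+1}$ and
\[
\mu^c_{T+s+1}=(\mu^c_{T+s})_{a,z_{T+s+1}}=(\hat\mu^T_s)_{a,z_{T+s+1}}=\hat\mu^T_{s+1}.
\]

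The only step requiring any care is the first one: identifying the law via transition kernels, and in particular making sure that the noise for off-$a$ actions is consumed with a time-indexed $w_{t+1}$, so that it stays fresh at every period. The remaining steps are bookkeeping.
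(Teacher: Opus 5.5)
Your proposal is correct and follows essentially the same route as the paper. Measurability of $a_{t+1}$ with respect to $(z_s,w_s)_{s\le t}$ and the independence of $(z_{t+1},w_{t+1})$ give the conditional law $P^{a_{t+1}}_*$, the uniqueness part of Ionescu-Tulcea identifies the law, and independence holds by construction; on $\{a_t=a\text{ for all }t>T\}$ both posteriors undergo the same Bayes updates, which you spell out as an induction on $s$.
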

\begin{proof}
    The action $a_{t+1}$ is a function of $(z_s,w_s)_{s\leq t}$, and $(z_{t+1},w_{t+1})$ is independent of these, so the conditional law of $y_{t+1}$ given the history is $P^{a_{t+1}}_*$. By the uniqueness part of the Ionescu-Tulcea theorem, $(a_t,y_t)_{t\geq1}$ has law $\P^\mu_\pi$. The measurability and independence claims hold by construction. On $\{a_t=a\text{ for all }t>T\}$, we have $y_{T+s}=z_{T+s}$ for all $s\geq 1$, so $\mu^c_{T+s}$ and $\hat\mu^T_s$ are obtained from $\mu^c_T$ by the same Bayes updates.
\end{proof}
 
\begin{lemma}\label{bridge}
    If $\mu\in\Delta(\Theta^*(a))$ and $G_a(\mu)=0$, then $a\in BR(\mu)$.
\end{lemma}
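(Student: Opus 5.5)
The plan is to show that, on beliefs supported on $\Theta^*(a)$, playing $a$ is uninformative, so the continuation value after $a$ equals $W(\mu)$ itself. Then $G_a(\mu)=0$ reduces to a comparison of flow payoffs, and the optimal value $W(\mu)$ is in turn bounded below by the value of playing $a$ forever. Throughout, we suppress $a$ where convenient.

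The first step is uninformativeness. By weak identification (Assumption \ref{fc2}(i)), every $\t\in\Theta^*(a)$ induces the same distribution $Q^a_\t=:\bar Q$. For $\mu\in\Delta(\Theta^*(a))$ and any $y$ with $\bar Q(y)>0$, the Bayes update is therefore
\[
\mu_{a,y}(U)=\frac{\int_U Q^a_\t(y)d\mu(\t)}{\int Q^a_\t(y)d\mu(\t)}=\mu(U).
\]
Hence $\mu_{a,y}=\mu$ for $Q^a_\mu$-almost every $y$, and so
\[
B(a,\mu)=(1-\b)\bar u(a,\mu)+\b W(\mu).
\]
Then $G_a(\mu)=0$ reads $W(\mu)=(1-\b)\bar u(a,\mu)+\b W(\mu)$, which gives $W(\mu)=\bar u(a,\mu)$ since $\b<1$.

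The second step is the lower bound $W(\mu)\geq \bar u(a',\mu)$ for every $a'\in A$. By Lemma \ref{dp}(a), $W(\mu)=\sup_\sigma V(\sigma,\mu)$ and $V(\sigma,\cdot)$ is affine. Consider the policy $\sigma_{a'}$ that plays $a'$ forever regardless of history. Under belief $\mu$, the subjective expected flow payoff in each period is
\[
\E_{\sigma_{a'},\mu}[u(a',y_s)]=\int_\Theta\int_Y u(a',y)dQ^{a'}_\t(y)d\mu(\t)=\bar u(a',\mu),
\]
because conditional on $\t$ the outcomes are i.i.d. $Q^{a'}_\t$. This follows most cleanly from $V(\sigma_{a'},\mu)=\int V(\sigma_{a'},\d_\t)d\mu(\t)$ together with $V(\sigma_{a'},\d_\t)=\bar u(a',\t)$. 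Thus $V(\sigma_{a'},\mu)=\bar u(a',\mu)$, and so $W(\mu)\geq\bar u(a',\mu)$. Combining with the first step, $\bar u(a,\mu)=W(\mu)\geq \bar u(a',\mu)$ for all $a'$, that is, $a\in BR(\mu)$.

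The main obstacle is the first step's handling of measure-zero outcomes and the well-definedness of $\mu_{a,y}$. Common support (Assumption \ref{fc1}(i)) ensures that the outcomes with $Q^a_\mu(y)=0$ are null under $Q^a_\mu$, so they do not affect the integral defining $B(a,\mu)$. I would also check that the identity $V(\sigma,\d_\t)=\bar u(a',\t)$ for a constant policy matches the normalization $(1-\b)\sum_s\b^{s-1}$ in the definition of $V$, which it does.
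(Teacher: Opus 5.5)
Your proof is correct, and your first step is exactly the paper's: weak identification makes $a$ uninformative on $\Delta(\Theta^*(a))$, so $B(a,\mu)=(1-\b)\bar u(a,\mu)+\b W(\mu)$ and hence $W(\mu)=\bar u(a,\mu)$.

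You differ from the paper in the lower bound $W(\mu)\geq\bar u(b,\mu)$.

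\begin{itemize}
\item You obtain it directly from feasibility of the constant policy $\sigma_b$. Its value is $\bar u(b,\mu)$ by the mixture identity $V(\sigma,\mu)=\int_\Theta V(\sigma,\d_\t)d\mu(\t)$ and the normalization $(1-\b)\sum_s\b^{s-1}=1$.
\item The paper instead uses that $W=\sup_\sigma V(\sigma,\cdot)$ is convex and that the posteriors $\mu_{b,y}$ average to $\mu$ under $Q^b_\mu$. This gives $\int_Y W(\mu_{b,y})dQ^b_\mu(y)\geq W(\mu)$, and then $W(\mu)\geq B(b,\mu)\geq(1-\b)\bar u(b,\mu)+\b W(\mu)$.
\end{itemize}

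Your route is more elementary. It yields $W\geq\max_b\bar u(b,\cdot)$ at every belief without invoking the martingale property of posteriors. The paper's route proves nonnegative value of information inside each Bellman term $B(b,\cdot)$, which is more than this lemma needs.

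One small correction to your last paragraph: common support is not what makes outcomes with $Q^a_\mu(y)=0$ irrelevant. Such outcomes are $Q^a_\mu$-null by definition, so the Bayes update only needs to be defined where $Q^a_\mu(y)>0$.
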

\begin{proof}
    By Assumption \ref{fc2}(i), $Q^a_\t=Q^a_{\t'}$ for all $\t,\t'\in\Theta^*(a)$, so $Q^a_\mu=Q^a_\t$ for any $\t\in\Theta^*(a)$ and $\mu_{a,y}=\mu$ for $Q^a_\mu$-almost every $y$. Hence $B(a,\mu)=(1-\b)\bar u(a,\mu)+\b W(\mu)$, and $G_a(\mu)=0$ with $\b<1$ gives $W(\mu)=\bar u(a,\mu)$. Fix $b\in A$. The posteriors $\mu_{b,y}$ average to $\mu$ under $Q^b_\mu$, and $W$ is a supremum of the affine functions $V(\sigma,\cdot)$ (Lemma \ref{dp}(a)), so $\int_Y W(\mu_{b,y})dQ^b_\mu(y)\geq\sup_\sigma V(\sigma,\mu)=W(\mu)$. Hence $W(\mu)\geq B(b,\mu)\geq(1-\b)\bar u(b,\mu)+\b W(\mu)$, so $\bar u(a,\mu)=W(\mu)\geq\bar u(b,\mu)$.
\end{proof}

\subsection{Proof of Theorem \ref{main2}}\label{app:main2}
 
Let $\pi$ be an optimal belief-based policy with $\P^{\mu_0}_\pi(\exists T:a_t=a\text{ for all }t>T)>0$, and construct the history as in Lemma \ref{coupling} with $\mu=\mu_0$. Then there is $T$ with $\P(C_T)>0$, where $C_T:=\{a_t=a\text{ for all }t>T\}$. For $m\geq 1$, let $U_m:=\{\t\in\Theta:\dist(\t,\Theta^{**}(a))<1/m\}$.
 
The prior $\mu^c_T$ has density proportional to $\t\mapsto\mu_0(\t)\prod_{s\leq T}Q^{a_s}_\t(y_s)$, which is smooth and strictly positive almost surely by Assumptions \ref{fc1} and \ref{fc2}(iii), so the exogenous environment $(\Theta,\{Q^a_\t\}_{\t\in\Theta}, \mu^c_T,P^a_*)$ satisfies Assumptions \ref{fc1} and \ref{fc2}, and its concentration set is $\Theta^{**}(a)$ by Proposition \ref{invariance}. Conditional on $\G_T$, the process $(\hat\mu^T_s)_{s\geq 0}$ is the posterior of this environment under i.i.d. data, so Theorem \ref{main1} gives $\P(\hat\mu^T_s(U_m)\to 1\text{ for all }m\mid\G_T)=1$ almost surely. By Lemma \ref{coupling}, $\mu^c_t(U_m)\to 1$ for all $m$, almost surely on $C_T$. Next, note that, by Lemmas \ref{coupling} and \ref{dp}(c), $G_{a_{t+1}}(\mu^c_t)=0$ for all $t$ almost surely, so $G_a(\mu^c_t)=0$ for all $t\geq T$ almost surely on $C_T$. Finally, fix an outcome in $C_T$ at which both properties hold. Since $\Delta(\Theta)$ is weakly compact, $(\mu^c_t)$ has a weak limit point $\mu$. By the portmanteau theorem, $\mu(\overline{U_m})=1$ for every $m$, and since $\Theta^{**}(a)$ is closed (Lemma \ref{lambdaprops}), $\mu\in\Delta(\Theta^{**}(a))$. By Lemma \ref{dp}(b), $G_a(\mu)=0$, so $a\in BR(\mu)$ by Lemma \ref{bridge}, and $a$ is a singular equilibrium.

\subsection{Proof of Theorem \ref{main3}}\label{app:main3}
 
\begin{lemma}\label{trap}
    Suppose $a$ is a strict uniform singular equilibrium. There exist an open set $N\supseteq\Theta^{**}(a)$ and $\bar\eta\in(0,1/2]$ such that (a) $BR(\d_\t)=\{a\}$ for every $\t\in N$, and (b) for every $\mu\in\Delta(\Theta)$ with $\mu(N)\geq 1-\bar\eta$ and every $b\neq a$, $B(b,\mu)<W(\mu)=B(a,\mu)$.
\end{lemma}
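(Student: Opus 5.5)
Part (a) follows from continuity and compactness; part (b) follows from a contradiction argument that passes to a limit belief on $\Theta^{**}(a)$ and uses the weak continuity of the Bellman losses in Lemma \ref{dp}(b).

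\textbf{Part (a).} Since $Y$ is finite and $\t\mapsto Q^b_\t(y)$ is continuous (indeed analytic by Assumption \ref{fc2}(iii)), each $\bar u(b,\t)$ is continuous in $\t$. Hence
\[
g(\t):=\bar u(a,\t)-\max_{b\neq a}\bar u(b,\t)
\]
is continuous. By the SUSE hypothesis, $g>0$ on the compact set $\Theta^{**}(a)$ (compact by Lemma \ref{lambdaprops}(c)). So there are $\g>0$ and $r_0>0$ with $g\geq \g$ on the closed neighborhood $\{\t\in\Theta:\dist(\t,\Theta^{**}(a))\leq r_0\}$. Any open $N$ inside this neighborhood gives (a).

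\textbf{Part (b), the limit computation.} First I would show that every $\mu\in\Delta(\Theta^{**}(a))$ satisfies $G_b(\mu)\geq(1-\b)\g>0$ for all $b\neq a$.

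1. At a point mass the posterior never moves, so $W(\d_\t)=\max_{b}\bar u(b,\t)=\bar u(a,\t)$ for $\t$ in the neighborhood from (a).
2. By Lemma \ref{dp}(a), $V(\sigma,\cdot)$ is affine, so $W(\mu)=\sup_\sigma\int V(\sigma,\d_\t)\,d\mu\leq\int W(\d_\t)\,d\mu=\bar u(a,\mu)$. This holds for every $\mu$ supported in that neighborhood.
3. For $\mu$ on $\Theta^{**}(a)\subseteq\Theta^*(a)$, weak identification makes $a$ uninformative, exactly as in the proof of Lemma \ref{bridge}. So $B(a,\mu)=(1-\b)\bar u(a,\mu)+\b W(\mu)$. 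Combined with $W\geq B(a,\cdot)$, this gives $W(\mu)\geq\bar u(a,\mu)$, hence $W(\mu)=\bar u(a,\mu)$.
4. For $b\neq a$, every posterior $\mu_{b,y}$ is still supported in $\Theta^{**}(a)$ and averages back to $\mu$ under $Q^b_\mu$. Applying step 2 to each $\mu_{b,y}$ gives
\[
B(b,\mu)\leq(1-\b)\bar u(b,\mu)+\b\,\bar u(a,\mu)\leq \bar u(a,\mu)-(1-\b)\g=W(\mu)-(1-\b)\g .
\]

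\textbf{Part (b), the contradiction.} Suppose no $N$ and $\bar\eta$ work. Take $N_n:=\{\dist(\cdot,\Theta^{**}(a))<1/n\}$ for large $n$ and $\bar\eta=1/n$. Then there are $\mu_n$ with $\mu_n(N_n)\geq 1-1/n$ and actions $b_n\neq a$ with $B(b_n,\mu_n)\geq W(\mu_n)$. Since $A$ is finite, pass to a subsequence with $b_n\equiv b$. By weak compactness of $\Delta(\Theta)$, pass to a weak limit $\mu$. The portmanteau theorem applied to the closed sets $\overline{N_m}$ gives $\mu\in\Delta(\Theta^{**}(a))$. We have $G_b(\mu_n)=0$, and $G_b$ is weakly continuous (Lemma \ref{dp}(b)), so $G_b(\mu)=0$. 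This contradicts the bound $G_b(\mu)\geq(1-\b)\g$.

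\textbf{Conclusion.} Take $N$ to be some such $N_n$, shrunk if necessary to lie inside the set from (a), and take $\bar\eta=\min\{1/n,1/2\}$. Then $B(b,\mu)<W(\mu)$ for all $b\neq a$ whenever $\mu(N)\geq1-\bar\eta$. Since $W=\max_{a'}B(a',\cdot)$, it follows that $W(\mu)=B(a,\mu)$.

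The main obstacle is the option value of experimenting with $b$ when the belief is only near, not on, $\Theta^{*}(a)$. The plan handles it by never bounding the option value at nearby beliefs directly; it only needs the upper bound $W\leq\int W(\d_\t)\,d\mu$ on the limit belief, together with continuity of $G_b$.
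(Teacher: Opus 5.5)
Your proof is correct, but it handles part (b) by a different route than the paper. You prove the exact bound $G_b(\mu)\geq(1-\b)\g$ only for $\mu\in\Delta(\Theta^{**}(a))$. You then extend it to nearby beliefs non-constructively, using weak compactness of $\Delta(\Theta)$ and the weak continuity of $G_b$ from Lemma \ref{dp}(b).

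The paper argues directly at an arbitrary $\mu$ with $\mu(N)\geq 1-\eta$. Here $N$ is the set on which the payoff gap $\min_{b\neq a}[\bar u(a,\t)-\bar u(b,\t)]$ exceeds half of its minimum $g$ over $\Theta^{**}(a)$. The inequality from your Step 4, $\int_Y W(\mu_{b,y})\,dQ^b_\mu(y)\leq\int_\Theta W(\d_\t)\,d\mu(\t)$, holds for every $\mu$. Since $W(\d_\t)=\bar u(a,\t)$ on $N$, the mass off $N$ contributes at most $2\|\bar u\|_\infty\eta$. Combining this with $W(\mu)\geq\bar u(a,\mu)$ and $\bar u(a,\mu)-\bar u(b,\mu)\geq g/4-2\|\bar u\|_\infty\eta$ gives $W(\mu)-B(b,\mu)\geq(1-\b)g/4-2\|\bar u\|_\infty\eta$. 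This yields the explicit choice $\bar\eta=\min\{1/2,(1-\b)g/(16\|\bar u\|_\infty)\}$.

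So the option-value difficulty you single out needs no limit argument. The paper's version is constructive and uses neither the weak continuity of $W$ nor weak identification. Yours avoids the error-term bookkeeping but needs those extra inputs.

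Your Step 3 also does not need weak identification: the policy that always plays $a$ gives $W(\mu)\geq\bar u(a,\mu)$ for every $\mu$. This matters because the paper reuses this lemma in Supplemental Appendix \ref{supp:weakid}, where weak identification is dropped. There a belief on $\Theta^{**}(a)$ can span several classes on which $a$ is informative, so your Step 3 as written would not apply.
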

\begin{proof}
    If $A=\{a\}$, take $N=\Theta$. Otherwise, since $\t\mapsto\bar u(b,\t)$ is continuous and $\Theta^{**}(a)$ is compact, $g:=\min_{\t\in\Theta^{**}(a)}\min_{b\neq a}[\bar u(a,\t)-\bar u(b,\t)]>0$. Let $N:=\{\t:\min_{b\neq a}[\bar u(a,\t)-\bar u(b,\t)]>g/2\}$, which gives (a). Since updating does not change $\d_\t$, we have $W(\d_\t)=\max_c\bar u(c,\t)$, and by Lemma \ref{dp}(a), $W(\mu)\leq\int_\Theta W(\d_{\t'})d\mu(\t')$ for every $\mu$. Applying this at each $\mu_{b,y}$ and integrating gives $\int_Y W(\mu_{b,y})dQ^b_\mu(y)\leq\int_\Theta W(\d_{\t'})d\mu(\t')$. Let $\|\bar u\|_\infty:=\max_{c\in A,\t\in\Theta}|\bar u(c,\t)|$. If $\mu(N)\geq 1-\eta$ with $\eta\leq 1/2$, then $\int_\Theta W(\d_{\t'})d\mu(\t')\leq\bar u(a,\mu)+2\|\bar u\|_\infty\eta$ and $\bar u(a,\mu)-\bar u(b,\mu)\geq g/4-2\|\bar u\|_\infty\eta$, so, since $W(\mu)\geq\bar u(a,\mu)$,
    \[
    W(\mu)-B(b,\mu)\geq\bar u(a,\mu)-(1-\b)\bar u(b,\mu)-\b\left(\bar u(a,\mu)+2\|\bar u\|_\infty\eta\right)\geq(1-\b)\frac{g}{4}-2\|\bar u\|_\infty\eta,
    \]
    which is positive for $\eta\leq\bar\eta:=\min\{1/2,(1-\b)g/(16\|\bar u\|_\infty)\}$. Since $W=\max_cB(c,\cdot)$, the maximum is then attained only at $a$.
\end{proof}
 
First, we prove sufficiency. Suppose $a$ is a strict uniform singular equilibrium, and let $N$ and $\bar\eta$ be as in Lemma \ref{trap}. Since $\Theta^{**}(a)$ is compact, there exists $R>0$ with $N_R\subseteq N$. Fix $r\in(0,R)$ and let $V:=\Theta\setminus N_R$. Let $z_1,z_2,\hdots$ be i.i.d. with law $P^a_*$, let $L_t(\t):=\prod_{s\leq t}Q^a_\t(z_s)$, and let $\hat\nu_t$ denote the exogenous posterior of a prior $\nu$ after $(a,z_1),\hdots,(a,z_t)$. For a smooth tilt $\nu=\phi\mu_0$,
\[
\hat\nu_t(V)=\frac{\int_VL_t(\t)\phi(\t)\mu_0(\t)d\t}{\int_\Theta L_t(\t)\phi(\t)\mu_0(\t)d\t}\leq\frac{\sup_V\phi}{\inf_{N_r}\phi} A_t,\quad A_t:=\frac{\hat\mu_{0,t}(V)}{\hat\mu_{0,t}(\overline{N_r})}.
\]
Each $A_t$ is finite, since $L_t$ is bounded above and away from zero on $\Theta$ and $\mu_0(N_r)>0$, and $A_t\to 0$ almost surely by Theorem \ref{main1}. Hence $\sup_tA_t<\infty$ almost surely, and for $\kappa\in(0,1)$ there exists $q<\infty$ with $\P(\sup_tA_t\leq q)\geq 1-\kappa/2$. Set $\e:=\bar\eta/q$. Let $\nu\in\B_{r,R}(\e;a)$ and let $\pi$ be optimal for $\nu$. If $\nu=\d_\t$ with $\t\in N_R\subseteq N$, then $\pi(\d_\t)=a$ by Lemma \ref{trap}(a), and the posterior does not change, so $a$ is played forever. If $\nu=\phi\mu_0$, construct the history as in Lemma \ref{coupling} with $\mu=\nu$, using the same $z_t$, and let $E:=\{\sup_t A_t\leq q\}\cap\{G_{a_{t+1}}(\mu^c_t)=0\text{ for all }t\}$, so $\P(E)\geq 1-\kappa/2$ by Lemma \ref{dp}(c). We show by induction that on $E$, $\mu^c_t=\hat\nu_t$ and $a_{t+1}=a$ for all $t\geq 0$. Clearly $\mu^c_0=\nu=\hat\nu_0$. If $\mu^c_t=\hat\nu_t$, then $\mu^c_t(V)\leq\e A_t\leq\bar\eta$, so $\mu^c_t(N)\geq 1-\bar\eta$, and by Lemma \ref{trap}(b), $a$ is the only action with zero Bellman loss at $\mu^c_t$. Hence $a_{t+1}=a$, $y_{t+1}=z_{t+1}$, and $\mu^c_{t+1}=\hat\nu_{t+1}$. Therefore $\P^\nu_\pi(a_t=a\text{ for all }t\geq 1)\geq\P(E)>1-\kappa$, and $a$ is uniformly stable.
 
Next, we prove necessity. If $a$ is not a strict uniform singular equilibrium, there exist $\t_0\in\Theta^{**}(a)$ and $b\neq a$ with $b\in BR(\d_{\t_0})$. For any $0<r<R$ and $\e>0$, $\d_{\t_0}\in\B_{r,R}(\e;a)$, the constant policy $\pi_b\equiv b$ is optimal for $\d_{\t_0}$, and under it $a$ is never played. Hence, $a$ is not uniformly stable.

\subsection{Proof of Proposition \ref{existence}}\label{app:existence}
 
For (a), let $\t\in\bigcap_{a\in A}\Theta^{**}(a)$ and pick $b\in BR(\d_\t)$. Then $\d_\t\in\Delta(\Theta^{**}(b))$, so $b$ is a singular equilibrium. For (b), define $\phi:A\to A$ by $\phi(a):=\max BR(\d_{\t(a)})$. Let $a<a'$, so $\t(a)\leq\t(a')$, and write $b:=\phi(a)$ and $b':=\phi(a')$. If $b'<b$, then $\bar u(b,\t(a))\geq \bar u(b',\t(a))$ and, by increasing differences, $\bar u(b,\t(a'))-\bar u(b',\t(a'))\geq\bar u(b,\t(a))-\bar u(b',\t(a))\geq 0$, so $b\in BR(\d_{\t(a')})$ and $\phi(a')\geq b>b'=\phi(a')$, a contradiction. Hence $\phi$ is nondecreasing on the finite chain $A$. Let $\bar a:=\max\{a\in A:\phi(a)\geq a\}$, which exists since $\phi(a_1)\geq a_1$. Then $\phi(\phi(\bar a))\geq\phi(\bar a)$, so $\phi(\bar a)\leq\bar a$ by maximality, and $\phi(\bar a)=\bar a$. Thus $\bar a\in BR(\d_{\t(\bar a)})$ with $\t(\bar a)\in\Theta^{**}(\bar a)$, and $\bar a$ is a singular equilibrium.

\begin{singlespace}
\bibliographystyle{aea}
\bibliography{bib_2}
\end{singlespace}

\clearpage
\setcounter{page}{1}
\renewcommand{\thepage}{S\arabic{page}}
\setcounter{section}{0}
\renewcommand{\thesection}{S.\arabic{section}}\renewcommand{\theHsection}{supp.\arabic{section}}
\counterwithout*{equation}{section}
\renewcommand{\theequation}{S\arabic{equation}}
\setcounter{equation}{0}
\renewcommand{\thetheorem}{S.\arabic{theorem}}\setcounter{theorem}{0}
\renewcommand{\theproposition}{S.\arabic{proposition}}\setcounter{proposition}{0}
\renewcommand{\thelemma}{S.\arabic{lemma}}\setcounter{lemma}{0}
\renewcommand{\thecorollary}{S.\arabic{corollary}}\setcounter{corollary}{0}
\renewcommand{\thedefn}{S.\arabic{defn}}\setcounter{defn}{0}
\renewcommand{\theexample}{S.\arabic{example}}\setcounter{example}{0}
\renewcommand{\theremark}{S.\arabic{remark}}\setcounter{remark}{0}
\renewcommand{\theHtheorem}{supp.theorem.\arabic{theorem}}
\renewcommand{\theHproposition}{supp.proposition.\arabic{proposition}}
\renewcommand{\theHlemma}{supp.lemma.\arabic{lemma}}
\renewcommand{\theHcorollary}{supp.corollary.\arabic{corollary}}
\renewcommand{\theHdefn}{supp.defn.\arabic{defn}}
\renewcommand{\theHexample}{supp.example.\arabic{example}}
\renewcommand{\theHremark}{supp.remark.\arabic{remark}}
\renewcommand{\theHequation}{supp.equation.\arabic{equation}}
 
\begin{center}
{\Large Supplemental Appendix to ``Singular Equilibrium and Selection of Narratives"}
 
\medskip
David J. Jin
\end{center}

This supplement contains the proof of Theorem \ref{main1} and the remaining proofs. Section \ref{supp:general} states the conditions for general outcome spaces. Sections \ref{supp:resolution}--\ref{supp:main1} prove Theorem \ref{main1}: Section \ref{supp:resolution} recalls the resolution of singularities, Section \ref{supp:volume} relates the local learning coefficient to the exponents of a resolution, and Section \ref{supp:main1} contains the proof. Section \ref{supp:narr} proves the results on narratives, and Section \ref{supp:dp} proves Lemma \ref{dp}. We provide four extensions: Section \ref{supp:weakid} relaxes weak identification, Section \ref{supp:tightest} brackets the smallest set that attracts the posterior, Section \ref{supp:joint} extends robustness to coincidences that are tested jointly, and Section \ref{supp:competing} studies an agent who switches between two models, as in \citet{ba}.

\section{General Outcome Spaces}\label{supp:general}
 
Let $(Y,\Y)$ be a standard Borel space with a $\sigma$-finite measure $\nu$, and identify each distribution on $Y$ with its density with respect to $\nu$. The utility function $u$ is measurable, $\sup_{a\in A,\t\in\Theta}\int_Y|u(a,y)|dQ^a_\t(y)<\infty$, and $\t\mapsto\bar u(a,\t):=\int_Y u(a,y)dQ^a_\t(y)$ is continuous on $\Theta$ for every $a\in A$. Fix $a\in A$ and suppress it.
 
\begin{defn}[Fundamental conditions]\label{fc}
    The environment $(\Theta,\{Q_\t\}_{\t\in\Theta},\mu_0,P_*)$ satisfies the \textbf{fundamental conditions} if
    \begin{enumerate}[(i)]
        \item (Common support): $\{P_*>0\}=\{Q_\t>0\}$ up to a $\nu$-null set, for every $\t\in\Theta$.
        \item (Parameter space): $\Theta$ is as in Assumption \ref{fc1}(ii).
        \item (Nearly analytic prior): $\mu_0=\mu_1\mu_2$ on $\Theta^{(R)}$, where $\mu_1>0$ is $C^\infty$ and $\mu_2\geq 0$ is real analytic.
        \item (Best fit): $\min_{\t\in\Theta}D_{KL}(P_*\|Q_\t)$ is finite and attained.
        \item Assumption \ref{fc2}(i)--(iv) holds, with expectations taken under $P_*$.
    \end{enumerate}
\end{defn}
 
These adapt the fundamental conditions in \citet{watanabe}. With a finite outcome space, Assumptions \ref{fc1} and \ref{fc2} imply the conditions of Definition \ref{fc} with $\mu_2\equiv 1$. When $\mu_2$ vanishes somewhere, the learning coefficient is defined with $\mu_0$ in place of the Lebesgue measure in Definition \ref{nearexp}. Under Assumption \ref{fc1}(iii), $\mu_0$ is bounded above and away from zero on $\Theta$, so the two definitions give the same coefficient.
 
\begin{lemma}[Regularity]\label{regularity}
    Assume the conditions of Definition \ref{fc}. There are a $\nu$-null set $N\subseteq Y$ and versions $\tilde Q_\t$ of $Q_\t$, $\t\in\Theta^{(R)}$, such that:
    \begin{enumerate}[(a)]
        \item for $y\notin N$ with $P_*(y)>0$, $\t\mapsto\tilde Q_\t(y)$ is strictly positive and real analytic on $\Theta^{(R)}$, and $\tilde Q_\t(y)=0$ for all $\t$ otherwise;
        \item $(y,\t)\mapsto\tilde Q_\t(y)$ is jointly measurable;
        \item $\t\mapsto Q_\t$ is continuous on $\Theta$ in total variation, and $\mu\mapsto\bar u(a,\mu)$ is weakly continuous;
        \item $\tilde Q_\t(y)=\tilde Q_{\t'}(y)$ for all $\t,\t'\in\Theta^*$ and $y\notin N$.
    \end{enumerate}
\end{lemma}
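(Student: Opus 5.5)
We construct the versions $\tilde Q_\t$ from the analytic extension in Assumption \ref{fc2}(iii), fixing $\t_0\in\Theta^*$ throughout. The map $\t\mapsto f(\cdot,\t)=\log(Q_{\t_0}/Q_\t)$ is $L^2(P_*)$-valued complex analytic on $\Theta^{(C)}$ with envelope $M\in L^2(P_*)$. First we show that, for $P_*$-a.e. $y$, there is a version $\tilde f(y,\t)$ that is real analytic in $\t$. Locally on a polydisc $D\subset\Theta^{(C)}$ centered at a real point, expand $f(\cdot,\t)=\sum_\alpha c_\alpha(\cdot)(\t-\t_c)^\alpha$, where the $L^2$-valued coefficients $c_\alpha$ are given by Cauchy integrals. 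The Cauchy estimates give $|c_\alpha(y)|\le M(y)\,s^{-|\alpha|}$ pointwise a.e., after choosing versions of $c_\alpha$ via the Cauchy integral of a jointly measurable version. Hence for $y$ off a null set with $M(y)<\infty$, the series converges on a smaller polydisc and defines a holomorphic function of $\t$ that agrees with $f(\cdot,\t)$ in $L^2$ for each $\t$.

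Cover $\Theta^{(R)}$ by countably many such polydiscs. On overlaps, the pointwise series agree at every $\t$ in a countable dense set off a single null set. By continuity in $\t$ they then agree everywhere, which glues them into a real analytic $\tilde f(y,\cdot)$ on $\Theta^{(R)}$. Set $\tilde Q_\t(y):=Q_{\t_0}(y)e^{-\tilde f(y,\t)}$ on $\{P_*>0\}$ and $0$ elsewhere. The null set $N$ collects the failures above together with the null sets in common support (i).

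This gives (a): positivity follows since $|\tilde f|\le M<\infty$. Joint measurability (b) follows because $\tilde f$ is a pointwise limit of measurable partial sums, or because it is Carathéodory (measurable in $y$, continuous in $\t$). For (d), weak identification gives $Q_\t=Q_{\t'}$ $\nu$-a.e. for $\t,\t'\in\Theta^*$. Apply this on a countable dense subset of each connected analytic piece of $\Theta^*$. Then $\tilde f(y,\cdot)$ is constant there off one null set, and by continuity it is constant on all of $\Theta^*$. This last step needs care: $\Theta^*$ may not have a countable dense subset that is enough on its own, but continuity of $\tilde f(y,\cdot)$ makes any countable dense subset of $\Theta^*$ suffice.

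For (c), $\tilde Q_\t(y)\to\tilde Q_{\t'}(y)$ pointwise as $\t\to\t'$, and both are densities. Scheffé's lemma then gives total-variation continuity. Alternatively, dominate by $Q_{\t_0}e^{M}$, although integrability of $e^M$ is not given, so we use Scheffé. Weak continuity of $\mu\mapsto\bar u(a,\mu)=\int\bar u(a,\t)d\mu(\t)$ follows because $\bar u(a,\cdot)$ is bounded and continuous on compact $\Theta$ by the standing assumption.

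The main obstacle is the first step: passing from an $L^2$-valued analytic map to a pointwise real analytic version with a single exceptional null set uniform in $\t$. The plan is to rely on the pointwise Cauchy estimates from the envelope $M$ and on countable covering arguments.
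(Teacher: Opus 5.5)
Your proposal is correct and follows the paper's proof step for step: local power series on countably many real-centered polydiscs, gluing on overlaps via a countable dense set and continuity, the version $\tilde Q_\t=Q_{\t_0}e^{-\tilde f}$, Scheff\'e's lemma for (c), and a countable dense subset of $\Theta^*$ plus continuity for (d). There are two small differences. First, you get pointwise summability of the coefficients from Cauchy estimates with the envelope $M$, which requires first producing a jointly measurable version, whereas the paper avoids this by applying Tonelli to $\sum_\alpha\|b_\alpha\|_{L^2(P_*)}r^{|\alpha|}<\infty$. Second, positivity needs $\tilde f$ to be real on $\Theta^{(R)}$, which you should state explicitly, or else use $\operatorname{Re}\tilde f$ as the paper does.
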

\begin{proof}
    On a polydisc $D(q,r)$ whose closure lies in $\Theta^{(C)}$, the $L^2(P_*)$-valued analytic map $\t\mapsto f(\cdot,\t)$ has an expansion $\sum_\a b_\a(\t-q)^\a$ with $\sum_\a\|b_\a\|_{L^2(P_*)}r^{|\a|}<\infty$. Since $\|b_\a\|_{L^1(P_*)}\leq\|b_\a\|_{L^2(P_*)}$, Tonelli's theorem gives $\sum_\a|b_\a(y)|r^{|\a|}<\infty$ for $P_*$-almost every $y$, so $F_{q,r}(y,\t):=\sum_\a b_\a(y)(\t-q)^\a$ is analytic in $\t$ on $D(q,r)$ and, for each $\t$, equals $f(y,\t)$ for $P_*$-almost every $y$. Countably many polydiscs with rational centers and radii cover $\Theta^{(R)}$. For intersections, the two expansions agree at countably many points of a dense set for almost every $y$, hence everywhere by continuity. This gives a single function $F(y,\t)$, analytic in $\t$ on a neighborhood of $\Theta^{(R)}$ for $y$ outside a $P_*$-null set, with $F(\cdot,\t)=f(\cdot,\t)$ almost everywhere for each $\t$. Enlarge this null set by the $\nu$-null set on which $\{P_*>0\}$ and $\{Q_{\t_0}>0\}$ differ, call the result $N$, and set $\tilde Q_\t(y):=Q_{\t_0}(y)e^{-\operatorname{Re}F(y,\t)}$ if $y\notin N$ and $P_*(y)>0$, and $\tilde Q_\t(y):=0$ otherwise. Parts (a) and (b) follow. For (c), the densities $\tilde Q_{\t_n}$ converge pointwise to $\tilde Q_\t$ when $\t_n\to\t$, so they converge in $L^1(\nu)$ by Scheff\'e's lemma, and $\bar u(a,\mu)=\int_\Theta\bar u(a,\t)d\mu(\t)$ integrates a continuous function of $\t$ over the compact set $\Theta$. For (d), weak identification gives $\tilde Q_\t=\tilde Q_{\t'}$ almost everywhere for $\t,\t'$ in a countable dense subset of $\Theta^*$. Removing the corresponding null sets and using continuity in $\t$ extends the equality to all of $\Theta^*$.
\end{proof}
 
\begin{remark}\label{appAgeneral}
    With the versions of Lemma \ref{regularity}, the proofs in Appendix \ref{app} apply to general outcome spaces. After finitely many observations, the posterior has density $\tilde\mu_1\mu_2$ with $\tilde\mu_1$ smooth and positive, so it satisfies the conditions of Definition \ref{fc}. For almost every history, the likelihood $\t\mapsto\prod_{s\leq t}\tilde Q^{a_s}_\t(y_s)$ is continuous and positive, hence bounded above and away from zero on $\Theta$. The KL-minimizing distributions under $a$ coincide pointwise (part (d)), which is what Lemma \ref{bridge} uses. The flow payoffs $\t\mapsto\bar u(b,\t)$ are continuous by assumption. Lemma \ref{dp} is proved in Section \ref{supp:dp} under the conditions of Definition \ref{fc}. In Lemma \ref{lambdaprops}, the bound $d$ is replaced by the finite bound of Proposition \ref{volumeequiv} when $\mu_2$ vanishes somewhere.
\end{remark}
 
\section{Resolution of Singularities}\label{supp:resolution}
 
\begin{theorem}[{\citealp{hironaka}; \citealp[Theorem~2.3]{watanabe}}]\label{ht}
    Let $f$ be a nonconstant real analytic function on a neighborhood of $0\in\R^d$ with $f(0)=0$. There are an open set $W\ni 0$, a $d$-dimensional real analytic manifold $\U$, and a proper real analytic map $g:\U\to W$ that is an analytic isomorphism from $\U\setminus\{f\circ g=0\}$ onto $W\setminus\{f=0\}$, such that every point of $\{f\circ g=0\}$ has local coordinates $u$ centered at it in which $f(g(u))=Su^k$ and $|\det Dg(u)|=b(u)|u^h|$, where $S\in\{-1,1\}$, $k,h\in\Z_+^d$, and $b$ is real analytic and nonvanishing.
\end{theorem}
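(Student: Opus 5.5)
This statement is the classical resolution of singularities in the local form used by \citet{watanabe}. I will not reprove Hironaka's theorem. Instead, I plan to derive the stated normal form from the \emph{embedded} resolution theorem for the real analytic hypersurface $\{f=0\}$, in the version of \citet{hironaka} (or its constructive real analytic refinements). That version asserts the following. There is a neighborhood $W$ of $0$ and a proper real analytic map $g:\U\to W$, obtained as a locally finite composition of blow-ups along smooth centers contained in the zero locus of (the transforms of) $f$. Moreover, the total transform $\{f\circ g=0\}$ is a divisor with simple normal crossings whose components are the strict transform of $\{f=0\}$ and the exceptional divisors. Because each center lies in the zero set, $g$ is an analytic isomorphism from $\U\setminus\{f\circ g=0\}$ onto $W\setminus\{f=0\}$. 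Properness of $g$ is inherited from properness of each blow-up.

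The first step is the normal form for $f$. At a point $p\in\{f\circ g=0\}$, simple normal crossings gives local coordinates $u$ centered at $p$ with $f(g(u))=a(u)u^k$, where $a$ is analytic and nonvanishing near $0$. Since $f(g(p))=0$, some $k_j>0$. I then replace $u_j$ by $\tilde u_j:=u_j|a(u)|^{1/k_j}$ and keep the other coordinates. This is an analytic change of coordinates, because $|a|^{1/k_j}$ is analytic and positive near $0$ and the Jacobian of the change at $0$ is diagonal with nonzero entries. In the new coordinates $f(g(\tilde u))=S\tilde u^k$ with $S=\operatorname{sign}a(0)\in\{-1,1\}$.

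The second step is the Jacobian. The critical locus of a blow-up along a smooth center is the exceptional divisor, and $\det Dg$ vanishes only there, to an order computed by the chain rule along the composition. So $\{\det Dg=0\}$ is a union of exceptional components. These are among the normal crossing components already straightened in the same chart, so in the original coordinates $\det Dg(u)=c(u)u^h$ with $c$ nonvanishing. After the coordinate change of the first step, multiplying by the (nonvanishing) Jacobian of the change and rewriting $u^h$ in terms of $\tilde u$ gives $|\det Dg|=b(\tilde u)|\tilde u^h|$. Here $b$ is real analytic and nonvanishing, since $u_j=\tilde u_j\cdot(\text{positive analytic unit})$ and such units only rescale $b$.

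The main obstacle is precisely this simultaneity: $f\circ g$ and $\det Dg$ must be monomial in one common coordinate system. The plan resolves it by using the embedded form of Hironaka's theorem, so that the exceptional divisor (which carries the Jacobian) and the strict transform (which, together with it, carries $f\circ g$) form a single simple normal crossing divisor. This is exactly how \citet[Theorem~2.3]{watanabe} states the result, and I would cite it for this step rather than reconstruct the blow-up algorithm.
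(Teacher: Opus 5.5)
Your proposal is correct and matches the paper, which gives no proof of this statement beyond citing Hironaka and \citet[Theorem~2.3]{watanabe}; your absorption of the unit $a$ into one coordinate is the same device the paper uses (rescaling all coordinates by $|\eta|^{1/|\kappa|}$) in the proof of Theorem~\ref{partitiontheta2}. One caution: over $\R$, the monomial form of $\det Dg$ does not follow just from its zero set lying in the normal crossing divisor (consider $u_1^2+u_2^2$). It follows from the chain-rule computation of pullbacks along the blow-ups, which requires each center to have normal crossings with the existing exceptional divisor; the embedded resolution you cite provides exactly this.
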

 
\begin{lemma}\label{split}
    If $f_1,\hdots,f_m$ are real analytic near $0\in\R^d$ and $\prod_{i=1}^mf_i(u)=u^c$ for some $c\in\Z_+^d$, then near $0$, $f_i(u)=\eta_i(u)u^{c_i}$ with $c_i\in\Z_+^d$, $\sum_ic_i=c$, and $\eta_i$ real analytic and nonvanishing.
\end{lemma}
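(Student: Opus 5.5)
The plan is to use unique factorization in the ring $\mathcal{O}_0$ of convergent power series at $0$, which is a UFD (Weierstrass preparation). The coordinate functions $u_1,\hdots,u_d$ are irreducible (prime) elements of $\mathcal{O}_0$, and they are pairwise non-associate. The identity $\prod_i f_i = u^c$ says that the product of the germs of the $f_i$ factors into primes as $\prod_j u_j^{c_j}$ times the unit $1$.

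First, for each $i$, I will factor the germ of $f_i$ into irreducibles in $\mathcal{O}_0$. Each $f_i$ divides $u^c$, so $f_i\neq 0$. By uniqueness of factorization, every irreducible factor of $f_i$ is associate to some $u_j$. Hence $f_i = \eta_i u^{c_i}$ as germs, with $c_i\in\Z_+^d$ and $\eta_i$ a unit, meaning $\eta_i(0)\neq 0$.

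Multiplying these factorizations gives $\prod_i \eta_i\, u^{\sum_i c_i} = u^c$. Comparing exponents of each prime $u_j$ by uniqueness yields $\sum_i c_i=c$ and $\prod_i\eta_i=1$.

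It then remains to pass from germs to an actual neighborhood. Each $\eta_i$ is a convergent power series with $\eta_i(0)\neq 0$. So there is a common neighborhood of $0$ on which every $\eta_i$ is real analytic and nonvanishing, and on which the germ identities $f_i=\eta_i u^{c_i}$ hold as identities of functions, since analytic functions agreeing as germs agree on a connected neighborhood.

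The only nontrivial ingredient is that $\mathcal{O}_0$ (real coefficients) is a UFD. This is standard: the proof for the complex case via Weierstrass preparation and induction on $d$ works verbatim over $\R$. The primality of $u_j$ also needs a word. The quotient $\mathcal{O}_0/(u_j)$ is the ring of convergent series in the remaining $d-1$ variables, which is an integral domain, so $u_j$ is prime. I will cite these facts rather than prove them.

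As an alternative route avoiding UFD theory, I could induct on $d$ and on $|c|$. Restricting to $u_j=0$ shows that some $f_i$ vanishes on $\{u_j=0\}$, so $u_j$ divides $f_i$ by the division/Hadamard lemma. Dividing out then lowers $|c|$. This is more elementary, and I would use it if a self-contained argument is preferred.
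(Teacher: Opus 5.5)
Your proof is correct, but it is organized differently from the paper's. The paper proves only a reduction. It takes the two-factor case from \citet[Theorem~2.7]{watanabe}. For $m\geq 3$ it applies that case to $f_1$ and $f_2\cdots f_m$, absorbs the resulting unit into $f_2$, and inducts on $m$.

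You instead treat all $m$ at once, using unique factorization in $\R\{u_1,\hdots,u_d\}$. Each $f_i$ divides $u^c$, and the $u_j$ are pairwise non-associate primes, so each germ $f_i$ is a unit times a monomial. Comparing multiplicities then gives $\sum_i c_i=c$, and $\prod_i\eta_i=1$ as a by-product.

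The paper's route is shorter on the page but leaves the analytic content inside the citation. Yours makes explicit which algebraic fact drives the lemma and needs no induction on $m$. The cost is invoking that $\R\{u\}$ is a UFD, which rests on Weierstrass preparation.

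Your alternative route is the most elementary of the three. It needs only two facts: convergent power series in $d-1$ variables form an integral domain, so some $f_i$ restricts to zero on $\{u_j=0\}$ when $c_j>0$; and the Hadamard division lemma. Induction on $|c|$ alone suffices, so the induction on $d$ you mention is unnecessary. The base case $c=0$ forces $f_i(0)\neq 0$ for every $i$.
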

\begin{proof}
    The case $m=2$ is \citet[Theorem~2.7]{watanabe}. For $m\geq 3$, apply the case $m=2$ to $f_1$ and $f_2\cdots f_m$, absorb the unit of $f_2\cdots f_m$ into $f_2$, and apply the induction hypothesis.
\end{proof}
 
Fix $a\in A$ and suppress it, let $\e>0$ be as in Assumption \ref{fc2}(iv), and let $\Theta_\e:=\{\t\in\Theta:K(\t)\leq\e\}$. Let $\F'$ be the set of nonconstant functions among $K$, $\e-K$, $\mu_2$, $q_1,\hdots,q_k$.
 
\begin{theorem}[Resolution]\label{partitiontheta2}
    Assume the conditions of Definition \ref{fc}. There are an open set $\Theta^{(R)}_\e$ with $\Theta_\e\subseteq\Theta^{(R)}_\e\subseteq\Theta^{(R)}$, a $d$-dimensional real analytic manifold $\M^{(R)}$, a proper real analytic map $g:\M^{(R)}\to\Theta^{(R)}_\e$ that is an analytic isomorphism off a closed Lebesgue-null set $\L$, and finitely many charts $M_\g$, $\g\in\Gamma$, each identified with a box $[0,b_\g]^d$ with $b_\g\leq 1$, that cover $\M:=g^{-1}(\Theta_\e)$, such that for every $\g\in\Gamma$:
    \begin{enumerate}[(a)]
        \item if $\eta_{q_i,\g}$ for $i\leq k$ and $\eta_{\e-K,\g}$ in (b) are positive, then either $M_\g\cap g^{-1}(\Theta^*)\neq\emptyset$ and $K(g(u))=u^{2k_\g}$ on $M_\g$ for some $k_\g\in(\frac{1}{2}\Z_+)^d\setminus\{0\}$, or $K\circ g\geq\e_\g$ on $M_\g$ for some $\e_\g>0$;
        \item every $F\in\F'\setminus\{K\}$ satisfies $F(g(u))=\eta_{F,\g}(u)u^{c_{F,\g}}$ on $M_\g$, with $c_{F,\g}\in\Z_+^d$ and $\eta_{F,\g}$ real analytic and nonvanishing;
        \item $\mu_0(g(u))|\det Dg(u)|=\psi_\g(u)u^{h_\g}$ on $M_\g$, with $h_\g\in\Z_+^d$ and $\psi_\g$ smooth with $\psi_\g\geq c>0$;
        \item for every measurable $H\geq 0$ on $\Theta_\e$, $\int_{\Theta_\e}H\mu_0d\t\leq\sum_{\g\in\Gamma}\int_{M_\g\cap\M}H(g(u))\psi_\g(u)u^{h_\g}du$, and for every chart $\g$ and measurable $S\subseteq M_\g$ with $g(S)\subseteq\Theta_\e$, $\int_{g(S)}H\mu_0d\t=\int_SH(g(u))\psi_\g(u)u^{h_\g}du$.
    \end{enumerate}
\end{theorem}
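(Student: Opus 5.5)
The plan is to build the resolution by applying Theorem \ref{ht} once, to the product of all the functions we need to monomialize, and then cut the resolved manifold into boxes. Let $F_0:=\prod_{F\in\F'}F$. This is a nonconstant real analytic function on $\Theta^{(R)}$; none of its factors vanishes identically, by Assumption \ref{fc2}(ii) and Assumption \ref{fc1}(ii).

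\textbf{Step 1: local resolution and gluing.} At each point $\t$ of the compact set $\Theta_\e$, apply Theorem \ref{ht} to $F_0-F_0(\t)$. If $F_0(\t)\neq 0$, apply it to the product of those factors that vanish at $\t$; if none vanish, use the identity map as the resolution. This yields local proper maps $g_\t:\U_\t\to W_\t$ in which the product of the vanishing factors is a signed monomial and the Jacobian is $b|u^h|$. By compactness, finitely many $W_\t$ cover $\Theta_\e$. I will glue these local resolutions into a single manifold $\M^{(R)}$ and map $g$ over $\Theta^{(R)}_\e:=\bigcup W_\t$, following \citet[Section 2]{watanabe}. Where gluing is delicate, an alternative is to keep the disjoint union of the local resolutions with a partition of unity. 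That alternative is why part (d) is stated only as an inequality on $\Theta_\e$ together with an equality on each chart, which tolerates overlaps.

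Around each point of the compact set $g^{-1}(\Theta_\e)$ we have local coordinates; after reflecting coordinates, we restrict to orthants, which become boxes $[0,b_\g]^d$. Compactness leaves finitely many charts. The exceptional set $\L$ lies in the preimage of $\{F_0=0\}$, which is a proper analytic set, hence closed and Lebesgue-null.

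\textbf{Step 2: monomialize each factor.} On each chart the product of the factors is $\pm u^{c}$, so Lemma \ref{split} gives $F(g(u))=\eta_{F,\g}(u)u^{c_{F,\g}}$ for every $F\in\F'$. Nonvanishing factors simply have $c=0$. This is part (b).

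For $K$ the form must be exactly $u^{2k_\g}$, with no unit. On each chart either $K\circ g$ stays bounded away from zero, or the chart meets $g^{-1}(\Theta^*)$. In the second case $K\geq 0$ on $\Theta$ together with the positivity of the $\eta_{q_i,\g}$ (the chart lies inside $\Theta$) forces $\eta_K>0$. The even-exponent condition then has to be arranged: $k_\g$ lies in $(\tfrac12\Z_+)^d$ because $K\circ g=\eta u^{c}$ is nonnegative on the box. To absorb the unit, I change coordinates by $u_j\mapsto u_j\eta^{1/|c|}$ in one coordinate $j$ with $c_j>0$, shrinking the chart so that this map stays an analytic isomorphism. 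Charts on which some $\eta_{q_i,\g}$ or $\eta_{\e-K,\g}$ is negative map outside $\Theta_\e$, so they are either discarded or kept only as part of the cover $\M$. This step gives part (a).

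\textbf{Step 3: prior and Jacobian.} We have $\mu_0\circ g\,|\det Dg|=\mu_1(g)\,\eta_{\mu_2}u^{c_{\mu_2}}\,b\,u^{h}$ (with $\eta_{\mu_2}=\mu_2$ if $\mu_2$ is constant). Taking $\psi_\g:=\mu_1(g)\,|\eta_{\mu_2}b|$ and adjusting for the change of coordinates in Step 2 (a unit Jacobian times a power of $u_j$, absorbed into $h_\g$) gives a $\psi_\g$ that is smooth and bounded below on the compact box. This is part (c).

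Part (d) then follows from the change-of-variables formula on the complement of the null set $\L$, where $g$ is an isomorphism on each chart. The overlapping cover produces the inequality; charts restricted to a set $S$ with $g(S)\subseteq\Theta_\e$ produce the equality.

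\textbf{Main obstacle.} I expect the hardest step to be the globalization: getting one proper map $g$ that is an isomorphism off $\L$ while the charts are simultaneously boxes on which $K$ is an exact monomial. I would first try to cite Watanabe's global form of Theorem \ref{ht} (his Theorem 6.1 construction) and verify only the boundary-sign and exact-monomial adjustments described above.
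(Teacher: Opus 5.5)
Your proposal follows the paper's proof essentially step for step. Both resolve the product of the members of $\F'$ with Theorem \ref{ht} and glue over the compact set $\Theta_\e$ as in Watanabe. Both then split each factor with Lemma \ref{split}, get the sign of $K\circ g$ from $K\geq 0$ and the positive units of the $q_i$, absorb the unit of $K\circ g$ by a coordinate change, cut boxes into orthants by reflection, and finish with change of variables.

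Three details need repair.
\begin{enumerate}[(1)]
\item If you change a single coordinate $u_j$, the substitution must be $u_j\mapsto u_j\eta^{1/c_j}$. Your exponent $1/|c|$ only works if it is applied to every coordinate, which is what the paper does, using $|\eta|$.
\item Make this substitution on a full neighborhood of each center point before you extract the finite box cover. The image of a box is not a box, so shrinking charts after the cover is fixed can destroy the covering of $\M$.
\item Your disjoint-union fallback would violate the requirement that $g$ be an isomorphism off $\L$. The inequality in (d) comes only from overlapping boxes and orthants within the single glued resolution.
\end{enumerate}
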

\begin{proof}
    The function $K$ is real analytic on $\Theta^{(R)}$, as the expectation of the $L^2(P_*)$-valued analytic map $\t\mapsto f(\cdot,\t)$. The product $\Phi$ of the members of $\F'$ is real analytic, not identically zero, and vanishes on $\Theta^*$. Applying Theorem \ref{ht} to $\Phi$ at each of its zeros and gluing over the compact set $\Theta_\e$ \citep[Remark~2.6]{watanabe} gives an open set $\Theta^{(R)}_\e$ with $\Theta_\e\subseteq \Theta^{(R)}_\e\subseteq \Theta^{(R)}$ and a proper analytic map $g: \M^{(R)}\to \Theta^{(R)}_\e$ that is an isomorphism off $\L:=\{\Phi\circ g=0\}$, near each point of which $\Phi\circ g$ is a signed monomial and $|\det Dg|$ is a nonvanishing analytic function times a monomial. By Lemma \ref{split}, every $F\in\F'$ is then a nonvanishing analytic function times a monomial in the same coordinates. Let $p\in \M$. If $K(g(p))>0$, then $K\circ g$ is positive near $p$. If $K(g(p)) = 0$, $K(g(u))=\eta(u)u^\kappa$ near $p$ with $\kappa \neq 0$, and the nonvanishing $\eta$ has a constant sign $S\in \{-1,1\}$ near $p$. The change of coordinates $u_j\mapsto u_j|\eta(u)|^{1/|\kappa|}$, with $|\kappa|:=\sum_j\kappa_j$, is an analytic isomorphism near $p$ that turns $K\circ g$ into $Su^\kappa$ and preserves the forms of the other functions and of the Jacobian. Cover the compact set $\M$ by finitely many coordinate boxes $[-b,b]^d$ with $b\leq 1$, each around a point $p\in \M$ and small enough that the above holds, and split each box into its $2^d$ closed orthants, each identified with $[0,b]^d$ by reflecting coordinates. Reflections preserve the forms of the other functions and of the Jacobian, and turn $Su^\kappa$ into $s_\g u^\kappa$ with $s_\g\in \{-1,1\}$. For (a), suppose that the units $\eta_{q_i,\g}$, $i\leq k$, and $\eta_{\e-K,\g}$ are positive. Then $q_i\circ g\geq 0$ and $\e-K\circ g\geq 0$ on $M_\g$, so $g(M_\g)\subseteq\Theta_\e$, and $K\circ g\geq 0$ on $M_\g$ because $K\geq 0$ on $\Theta$. If $M_\g$ does not intersect $g^{-1}(\Theta^*)$, then $K\circ g$ is continuous and positive on the compact set $M_\g$. Otherwise, $M_\g$ is an orthant of a box around a point $p$ with $K(g(p)) = 0$, and since $s_\g u^\kappa = K\circ g \geq 0$ on $M_\g$, while $u^\kappa > 0$ on its interior, $s_\g =1.$ This gives (a) with $k_\g := \kappa/2$. For (c), $\mu_0(g(u))|\det Dg(u)|=\mu_1(g(u))\eta_{\mu_2}(u)|b(u)|u^{c_{\mu_2}+h}$, and $\eta_{\mu_2}>0$ because $\mu_2\geq 0$, so the coefficient is smooth and, after shrinking the boxes, bounded below. Part (d) is the change of variables formula, since $g$ is injective off $\L$ and $g(\L)$ is Lebesgue-null.
\end{proof}
 
Let $\Gamma^+$ be the set of charts on which the units $\eta_{q_i,\g}$, $i\leq k$, and $\eta_{\e-K,\g}$ are positive, so that part (a) applies to them, and $\Gamma_*$ the set of charts that intersect $g^{-1}(\Theta^*)$.
 
\begin{lemma}\label{image}
    If $\g\in\Gamma^+$, then $g(M_\g)\subseteq\Theta_\e$. If $\g\notin\Gamma^+$, then $M_\g\cap g^{-1}(\Theta_\e)$ is Lebesgue-null.
\end{lemma}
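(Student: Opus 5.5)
The first claim is already contained in the proof of Theorem \ref{partitiontheta2}(a), and I will only repeat it. If $\g\in\Gamma^+$, then by Theorem \ref{partitiontheta2}(b) on $M_\g$ we have $q_i(g(u))=\eta_{q_i,\g}(u)u^{c_{q_i,\g}}$ with $\eta_{q_i,\g}>0$. Since $u\in[0,b_\g]^d$ has nonnegative coordinates, the monomial is nonnegative, so $q_i\circ g\geq 0$. Likewise $\e-K\circ g=\eta_{\e-K,\g}u^{c_{\e-K,\g}}\geq 0$. Hence $g(u)\in\Theta^{(R)}$ satisfies all the constraints defining $\Theta$, and $K(g(u))\leq\e$, so $g(M_\g)\subseteq\Theta_\e$. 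If some $q_i$ or $\e-K$ is constant, it does not lie in $\F'$. In that case it is a nonzero constant, because $q_i\not\equiv 0$ by Assumption \ref{fc1}(ii) and $\e\neq K$ since $K\not\equiv\e$ near $\Theta^*$. Also, since $\Theta_\e\neq\emptyset$, the constant must be nonnegative, so it imposes no restriction.

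For the second claim, suppose $\g\notin\Gamma^+$. Then some unit, say $\eta_{F,\g}$ with $F\in\{q_1,\hdots,q_k,\e-K\}\cap\F'$, fails to be positive on $M_\g$. Since it is continuous and nonvanishing on the connected box $M_\g$, it is strictly negative there. So $F(g(u))=\eta_{F,\g}(u)u^{c_{F,\g}}$ is strictly negative wherever all $u_j>0$, that is, on the interior of the box minus the coordinate hyperplanes. Every $u\in M_\g\cap g^{-1}(\Theta_\e)$ has $g(u)\in\Theta_\e$. Then either $F=q_i$ and $q_i(g(u))\geq 0$ because $g(u)\in\Theta$, or $F=\e-K$ and $\e-K(g(u))\geq 0$. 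So $F(g(u))\geq0$ there, which forces $u^{c_{F,\g}}=0$, or $u$ to lie on the boundary of the box. This places $M_\g\cap g^{-1}(\Theta_\e)$ inside the union of the coordinate hyperplanes $\{u_j=0\}$ and the faces $\{u_j=b_\g\}$, which is Lebesgue-null.

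The only delicate point is the sign argument: I need the unit to have a constant sign on the whole chart. This follows from nonvanishing and connectedness of $[0,b_\g]^d$, which relies on the chart being an orthant of a coordinate box, as in the construction in the proof of Theorem \ref{partitiontheta2}. I also need the monomial to be strictly positive off the coordinate hyperplanes, which holds because the coordinates are nonnegative after the reflection into orthants. Beyond this, no obstacle is expected, and the lemma is essentially bookkeeping on the normal crossing forms of Theorem \ref{partitiontheta2}(b).
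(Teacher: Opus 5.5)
Your proof is correct and follows the paper's argument. Positive units times monomials that are nonnegative on $[0,b_\g]^d$ give the first claim, and a negative unit makes the corresponding defining inequality fail off the null set where its monomial vanishes; your extra remarks (the unit has constant sign because the box is connected, and constant $q_i$ impose no restriction) fill in points the paper leaves implicit.
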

\begin{proof}
    For $\g\in\Gamma^+$, each $q_i(g(u))$ and $\e-K(g(u))$ is a positive function times a monomial, hence nonnegative on $[0,b_\g]^d$, so $g(u)\in\Theta_\e$. If one of these units is negative, the corresponding function is negative off the null set where its monomial vanishes.
\end{proof}
 
\section{The Learning Coefficient and the Resolution}\label{supp:volume}
 
\begin{lemma}[Monomial integrals]\label{monomialvolume}
    Let $k\in (\frac{1}{2}\Z_+)^n\setminus \{0\}$, $h\in\Z_+^n$, and $b>0$. Let $\lambda_\circ:=\min_{j:k_j>0}(h_j+1)/2k_j$ and let $m_\circ$ be the number of $j$ that attain this minimum.
    \begin{enumerate}[(a)]
        \item There are $0<c\leq C$ and $s_0>0$ such that, for all $s\in(0,s_0]$,
        \[
        cs^{\lambda_\circ}\left(\log\frac{1}{s}\right)^{m_\circ-1}\leq \int_{[0,b]^n}\1\{u^{2k}\leq s\}u^hdu\leq Cs^{\lambda_\circ}\left(\log\frac{1}{s}\right)^{m_\circ-1}.
        \]
        \item There is $C'$ such that $\int_{[0,b]^n}e^{-tu^{2k}}u^hdu\leq C't^{-\lambda_\circ}(\log t)^{m_\circ-1}$ for all $t\geq e$.
    \end{enumerate}
    The same bounds hold, with different constants, if some coordinates range over intervals $[\a_j,\b_j]$ with $0<\a_j<\b_j$, with $\lambda_\circ$ and $m_\circ$ computed over the remaining coordinates.
\end{lemma}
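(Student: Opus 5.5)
Both parts go by induction on the number $n$ of coordinates. First I will dispose of the coordinates with $k_j=0$: they enter only through a factor $\int_0^b u_j^{h_j}du_j$, a positive constant, so I may assume $k_j>0$ for every $j$. Coordinates ranging over $[\a_j,\b_j]$ with $\a_j>0$ are handled the same way. On such an interval $u_j^{2k_j}$ and $u_j^{h_j}$ lie between positive constants. Hence they change $s$ or $t$ only by a constant factor, which affects neither $s^{\lambda_\circ}(\log 1/s)^{m_\circ-1}$ nor $t^{-\lambda_\circ}(\log t)^{m_\circ-1}$ beyond constants. The substitution $v_j=u_j^{2k_j}$ turns $u_j^{h_j}du_j$ into a constant times $v_j^{\a_j-1}dv_j$ with $\a_j:=(h_j+1)/2k_j$. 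It also turns the box into $[0,b_j']^n$. So part (a) reduces to estimating
\[
F_n(s):=\int_{[0,b']^n}\1\{v_1\cdots v_n\leq s\}\prod_j v_j^{\a_j-1}dv,
\]
with $\lambda_\circ=\min_j\a_j$ and $m_\circ$ the multiplicity of the minimum.

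For (a) I would induct on $n$. The case $n=1$ is explicit: $F_1(s)\asymp s^{\a_1}$. For the inductive step, integrate out $v_n$:
\[
F_n(s)=\int_0^{b'}v_n^{\a_n-1}F_{n-1}(s/v_n)dv_n,
\]
where $F_{n-1}(x)$ is constant for $x$ large (at least $b'^{\,n-1}$). I split the integral at $v_n=s/b'^{\,n-1}$. The lower range contributes $\asymp s^{\a_n}$. The upper range contributes $\int_{s}^{b'} v^{\a_n-1}(s/v)^{\lambda'}(\log(v/s))^{m'-1}dv$ up to constants, where $\lambda'$ and $m'$ belong to the first $n-1$ coordinates. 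There are three cases, each an elementary integral after substituting $w=\log(v/s)$:
\begin{itemize}
\item if $\a_n>\lambda'$, the integral is $\asymp s^{\lambda'}(\log 1/s)^{m'-1}$;
\item if $\a_n=\lambda'$, it is $\asymp s^{\lambda'}(\log 1/s)^{m'}$;
\item if $\a_n<\lambda'$, it is $\asymp s^{\a_n}$.
\end{itemize}
In all three cases the result matches the claimed exponent and log power for $n$ coordinates. The two-sided bounds pass through the induction because the inductive hypothesis is two-sided for $x$ small, and $F_{n-1}$ is bounded above and below by constants for $x$ of order one. This bookkeeping of regimes, keeping both constants positive, is the main obstacle. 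I would handle it by stating the hypothesis as $F_{n-1}(x)\asymp \min\{1,x^{\lambda'}(1+\log^+ 1/x)^{m'-1}\}$ for all $x>0$.

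For (b), write the integral as a Laplace–Stieltjes transform of the volume function:
\[
\int e^{-tu^{2k}}u^hdu=\int_0^\infty e^{-ts}\,dF(s)=t\int_0^\infty e^{-ts}F(s)ds,
\]
where $F(s)$ is the integral in (a). Then I would bound $F(s)\leq Cs^{\lambda_\circ}(\log 1/s)^{m_\circ-1}$ for $s\leq s_0$ and $F\leq F(\infty)<\infty$ otherwise. The tail $s>s_0$ contributes at most $e^{-ts_0}t\cdot\mathrm{const}/t$, which is exponentially small. On $s\leq s_0$, substitute $s=x/t$. For $x\leq t s_0$ this gives $\log(t/x)\leq \log t+\log^+(1/x)$, so that range contributes
\[
t^{-\lambda_\circ}\int_0^\infty e^{-x}x^{\lambda_\circ}\bigl(\log t+\log^+(1/x)\bigr)^{m_\circ-1}dx\leq C't^{-\lambda_\circ}(\log t)^{m_\circ-1}
\]
for $t\geq e$, since the $x$-moments of $e^{-x}x^{\lambda_\circ}(\log^+1/x)^j$ are finite.
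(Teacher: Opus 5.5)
Your proposal is correct. Part (b) and your handling of coordinates with $k_j=0$ or ranging over $[\a_j,\b_j]$ match the paper's argument: the layer-cake identity $\int e^{-tF}dm=\int_0^\infty te^{-ts}m(F\leq s)\,ds$, the bound from (a) for $s\leq s_0$, and the substitution $s=x/t$.

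For part (a) you take a different route. The paper does not induct on $n$. It substitutes $u_j^{2k_j}=b^{2k_j}e^{-y_j}$, which turns the integral into a constant times $\int_{y\geq 0,\ \sum_j y_j\geq L}e^{-\sum_j\lambda_j y_j}\,dy$ with $L=\log\frac{1}{s}+\text{const}$. It then groups the coordinates attaining $\lambda_\circ$ into $\sigma=\sum_{j\in P}y_j$. The density $\sigma^{m_\circ-1}/(m_\circ-1)!$ of $\sigma$ produces the factor $(\log\frac{1}{s})^{m_\circ-1}$ directly. The other coordinates contribute a convergent integral because $\lambda_j>\lambda_\circ$ for them.

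What each approach buys:
\begin{itemize}
\item The paper's computation is shorter. It shows in one stroke that the logarithm comes from the volume of the simplex of minimizing coordinates, and it needs no hypothesis valid for all $x>0$.
\item Your induction is more elementary, since each step is a one-variable integral with three cases. It also shows how each added coordinate tied with the current minimum raises the log power by one. The cost is the regime bookkeeping you identified, and your global hypothesis $F_{n-1}(x)\asymp\min\{1,x^{\lambda'}(1+\log^+\frac{1}{x})^{m'-1}\}$ handles it correctly.
\end{itemize}

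Three cosmetic points:
\begin{itemize}
\item After substitution the box is $\prod_j[0,b^{2k_j}]$, not a cube. Rescaling to $[0,1]^n$ changes $s$ only by a constant factor, which also reconciles your split point $s/b'^{\,n-1}$ with the lower limit $s$.
\item In the upper-range integrand, write $(1+\log(v/s))^{m'-1}$, as your inductive hypothesis does. The factor $(\log(v/s))^{m'-1}$ vanishes at $v=s$, so it is not a valid upper bound for $F_{n-1}$ near $x=1$. The affected range contributes only $O(s^{\a_n})$, so nothing breaks.
\item Your use of $\a_j$ for $(h_j+1)/2k_j$ clashes with the interval endpoints in the statement.
\end{itemize}
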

\begin{proof}
    Coordinates with $k_j=0$ contribute constant factors. For the others, substitute $u_j^{2k_j}=b^{2k_j}e^{-y_j}$, so that $u_j^{h_j}du_j$ becomes a constant times $e^{-\lambda_jy_j}dy_j$ with $\lambda_j:=(h_j+1)/2k_j$. The integral in (a) is then a constant times $I(L):=\int_{y\geq 0, \sum_jy_j\geq L}e^{-\sum_j\lambda_jy_j}dy$ with $L:=\log\frac{1}{s}+\text{constant}$. Let $P$ be the set of $j$ with $\lambda_j=\lambda_\circ$, and $\sigma:=\sum_{j\in P}y_j$, whose density under Lebesgue measure on $[0,\infty)^P$ is $\sigma^{m_\circ-1}/(m_\circ-1)!$. Restricting to $y_j\leq 1$ for $j\notin P$ and $\sigma\geq L$ gives $I(L)\geq c\int_L^\infty e^{-\lambda_\circ\sigma}\sigma^{m_\circ-1}d\sigma\geq c'e^{-\lambda_\circ L}L^{m_\circ-1}$. Conversely, $\int_{\sigma\geq x}e^{-\lambda_\circ\sigma}\sigma^{m_\circ-1}d\sigma\leq Ce^{-\lambda_\circ x^+}(1+x^+)^{m_\circ-1}$, so with $x:=L-\sum_{j\notin P}y_j$,
    \[
    I(L)\leq C(1+L)^{m_\circ-1}e^{-\lambda_\circ L}\int_{[0,\infty)^{P^c}}e^{-\sum_{j\notin P}(\lambda_j-\lambda_\circ)y_j}dy<\infty,
    \]
    since $\lambda_j>\lambda_\circ$ for $j\notin P$. This gives (a). For (b), let $F:= u^{2k}$ and $dm := u^h du$ on $[0,b]^n$, so that $\int e^{-tF}dm=\int_0^\infty te^{-ts}m(F\leq s) ds$. Bounding $m(F\leq s)$ by (a) for $s\leq s_0$ and by a constant otherwise, and substituting $s=x/t$, gives (b). Coordinates restricted to $[\a_j,\b_j]$ contribute factors bounded above and away from zero, which only changes the constants.
\end{proof}
 
For a point $u$ of a chart $M_\g$ with $\g\in\Gamma_*\cap\Gamma^+$, let $Z_\g(u):=\{j:u_j=0,k_{\g,j}>0\}$, $\lambda_\g(u):=\min_{j\in Z_\g(u)}(h_{\g,j}+1)/2k_{\g,j}$, and let $m_\g(u)$ be the number of minimizers, with $\lambda_\g(u):=\infty$ and $m_\g(u):=0$ if $Z_\g(u)=\emptyset$. Let $\lambda(u)$ be the minimum of $\lambda_\g(u)$ over the charts $\g\in\Gamma^+$ containing $u$, and $m(u)$ the largest $m_\g(u)$ among the charts attaining it. For $\t\in\Theta^*$, let
\[
\lambda^g(\t):=\min_{u\in g^{-1}(\t)}\lambda(u),\quad m^g(\t):=\max\{m(u):u\in g^{-1}(\t), \lambda(u)=\lambda^g(\t)\}.
\]
Some point of $g^{-1}(\t)$ lies in a chart of $\Gamma^+$, since points of the interior of $\Theta_\e$ near $\t$, off the null set where some $q_i$ vanishes or $g$ fails to be invertible, have preimages in charts of $\Gamma^+$, and a limit point of these preimages lies in $g^{-1}(\t)$ because the charts are compact. Hence $\lambda^g(\t)$ is finite.
 
\begin{proposition}\label{volumeequiv}
    Assume the conditions of Definition \ref{fc} and let $\t\in\Theta^*$. There is $r_0>0$ such that, for every $r\in(0,r_0]$, there are $0<c\leq C$ and $s_0>0$ with
    \[
    c\e^{\lambda^g(\t)}\left(\log\frac{1}{\e}\right)^{m^g(\t)-1}\leq\mu_0\left(\Theta\cap B(\t,r)\cap\{K\leq\e\}\right)\leq C\e^{\lambda^g(\t)}\left(\log\frac{1}{\e}\right)^{m^g(\t)-1}
    \]
    for all $\e\in (0,s_0]$. Thus, $\lambda(\t)=\lambda^g(\t)$, which therefore does not depend on the resolution. The learning coefficient takes finitely many values on $\Theta^*$, all strictly positive.
\end{proposition}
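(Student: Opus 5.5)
The plan is to pull the local volume back through the resolution of Theorem \ref{partitiontheta2} and apply Lemma \ref{monomialvolume} chart by chart. First choose $r_0$ so small that $B(\t,r_0)\cap\{K\leq\e\}\cap\Theta\subseteq\Theta_\e$ once $\e$ is below the renormalizability tolerance. Then $\Theta\cap B(\t,r)\cap\{K\leq s\}\subseteq\Theta_\e$ for $s\leq\e$. By Theorem \ref{partitiontheta2}(d) and Lemma \ref{image}, its $\mu_0$-measure equals, up to double counting bounded by the number of charts, the sum over charts $\g\in\Gamma^+$ of
\[
\int_{M_\g\cap g^{-1}(B(\t,r))}\1\{u^{2k_\g}\leq s\}\psi_\g(u)u^{h_\g}du .
\]
Charts not in $\Gamma^+$ contribute nothing. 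Charts in $\Gamma^+\setminus\Gamma_*$ have $K\circ g\geq\e_\g$, so they drop out for $s<\min_\g\e_\g$. Since $\psi_\g$ is bounded above and below, it can be replaced by constants.

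The central step is localization in $u$. Because $g$ is proper and continuous, $g^{-1}(\overline{B(\t,r)})$ shrinks to the compact fiber $g^{-1}(\t)$ as $r\downarrow0$. For the upper bound, cover $g^{-1}(\t)\cap M_\g$ by finitely many small boxes around fiber points $u^0$. In such a box, the coordinates $j\notin Z_\g(u^0)$ with $k_{\g,j}>0$ satisfy $u_j\geq u^0_j/2>0$, so they range over intervals bounded away from zero. The last clause of Lemma \ref{monomialvolume}(a) then bounds each box by $Cs^{\lambda_\g(u^0)}(\log\frac1s)^{m_\g(u^0)-1}$.

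Two points need care here. If $u^0_j>0$, the box must be a product of $[\a_j,\b_j]$ factors and $[0,b]$ factors, which is arranged by shrinking. In addition, $r_0$ must be small enough that $g^{-1}(B(\t,r))$ lies inside the union of these boxes; properness gives exactly this. Each box bound is at most a constant times $s^{\lambda^g(\t)}(\log\frac1s)^{m^g(\t)-1}$ by the definitions of $\lambda^g$ and $m^g$, since a term with a larger exponent, or the same exponent and a smaller log power, is dominated for small $s$. Summing finitely many boxes gives the upper bound.

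For the lower bound, pick $u^*\in g^{-1}(\t)$ and a chart $\g\in\Gamma^+$ attaining $\lambda^g(\t)$ and $m^g(\t)$. Since $g$ is continuous, a small box $\prod_j[u^*_j,u^*_j+\eta]$ lies in $M_\g\cap g^{-1}(B(\t,r))$ for every $r\in(0,r_0]$, with $\eta$ depending on $r$. On that box, coordinates with $u^*_j>0$ are again bounded away from zero. Lemma \ref{monomialvolume}(a) then gives the lower bound, and the identity of Theorem \ref{partitiontheta2}(d) applies since $g$ maps the box into $\Theta_\e$.

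Combining the two bounds, and noting that $\mu_0$ is comparable to Lebesgue measure under Assumption \ref{fc1}(iii) or that $\lambda$ is defined with $\mu_0$ in the general case, dividing logs by $\log\e$ shows that $\lambda_r(\t)=\lambda^g(\t)$ for all small $r$. Hence $\lambda(\t)=\lambda^g(\t)$. This equality holds for any resolution, so $\lambda^g$ does not depend on the resolution. Finiteness of the set of values follows because $\lambda^g$ only takes values $(h_{\g,j}+1)/2k_{\g,j}$ over finitely many charts and indices, and all of these are positive since $h\geq0$ and $k_{\g,j}>0$.

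I expect the main obstacle to be the localization step. It requires matching $r$-balls in $\Theta$ with neighborhoods of the fiber in $\M$ uniformly, handling fiber points that lie on several overlapping charts, and checking that boxes around points with some $u^0_j>0$ truly have the product form Lemma \ref{monomialvolume} needs. The fix I would try first is to shrink the boxes and use properness.
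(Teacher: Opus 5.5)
Your proposal is correct and follows essentially the same route as the paper's proof: pull back through the resolution, localize around the compact fiber $g^{-1}(\t)$ using compactness of the charts and properness of $g$, bound each small box by Lemma \ref{monomialvolume}(a) with the coordinates where $u^0_j>0$ confined to intervals away from zero, and obtain the lower bound from a single box at a fiber point and chart attaining $\lambda^g(\t)$ and $m^g(\t)$. The only slip is cosmetic. Take the lower-bound box to be $\{u\in M_\g:\|u-u^*\|_\infty\leq\eta\}$, as the paper does, rather than $\prod_j[u^*_j,u^*_j+\eta]$, because the latter leaves $M_\g$ when some $u^*_j=b_\g$.
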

\begin{proof}
    Let $G:=g^{-1}(\t)$, which is compact since $g$ is proper, and take $\e$ below the constant of Theorem \ref{partitiontheta2}, so that $\Theta\cap\{K\leq\e\}\subseteq\Theta_\e$. Every chart containing a point of $G$ lies in $\Gamma_*$.
 
    For the upper bound, for $u_0\in G$ and a chart $\g\in\Gamma^+$ containing $u_0$, choose $\d>0$ such that $u_j\geq u_{0,j}/2$ whenever $u_{0,j}>0$, on $B_\g(u_0):=\{u\in M_\g:\|u-u_0\|_\infty<\d\}$. By Lemma \ref{monomialvolume}(a) and $\psi_\g\leq C$,
    \[
    \int_{B_\g(u_0)}\1\{u^{2k_\g}\leq\e\}\psi_\g(u)u^{h_\g}du\leq C\e^{\lambda_\g(u_0)}\left(\log\frac{1}{\e}\right)^{m_\g(u_0)-1}\leq C'\e^{\lambda^g(\t)}\left(\log\frac{1}{\e}\right)^{m^g(\t)-1},
    \]
    because $\lambda_\g(u_0)\geq\lambda(u_0)\geq\lambda^g(\t)$, and $m_\g(u_0)\leq m^g(\t)$ when equality holds throughout. The charts are closed and finitely many, so $u_0$ has an open neighborhood $W(u_0)$ that intersects only the charts $M_\g$ containing $u_0$, and each only inside $B_\g(u_0)$. Finitely many of these neighborhoods cover $G$. Since $g$ is proper, there is $r_0>0$ such that $g^{-1}(\Theta_\e\cap\bar B(\t,r_0))$ is contained in their union. For $r\leq r_0$, Theorem \ref{partitiontheta2}(d) with $H:=\1\{K\leq\e\}\1_{B(\t,r)}$ bounds $\mu_0(\Theta\cap B(\t,r)\cap\{K\leq\e\})$ by the sum of the finitely many integrals above, since charts outside $\Gamma^+$ do not contribute, by Lemma \ref{image}.
 
    For the lower bound, choose $u_0\in G$ and $\g\in\Gamma^+$ containing $u_0$ with $\lambda_\g(u_0)=\lambda^g(\t)$ and $m_\g(u_0)=m^g(\t)$. Given $r>0$, choose $\d>0$ such that $g$ maps $S:=\{u\in M_\g:\|u-u_0\|_\infty\leq\d\}$ into $B(\t,r)$, and $u_j\geq u_{0,j}/2$ on $S$ whenever $u_{0,j}>0$. By Lemma \ref{image}, $g(S)\subseteq\Theta_\e$, and $K\circ g=u^{2k_\g}$ on $S$, so Theorem \ref{partitiontheta2}(d), $\psi_\g\geq c$, and Lemma \ref{monomialvolume}(a) give
    \[
    \mu_0\left(\Theta\cap B(\t,r)\cap\{K\leq\e\}\right)\geq c\int_S\1\{u^{2k_\g}\leq\e\}u^{h_\g}du\geq c'\e^{\lambda^g(\t)}\left(\log\frac{1}{\e}\right)^{m^g(\t)-1}.
    \]
 
    The two bounds give $\lambda(\t)=\lambda^g(\t)$. The left side of the display does not depend on $g$, so neither does $\lambda^g$. Finally, $\lambda^g$ takes values in the finite set of ratios $(h_{\g,j}+1)/2k_{\g,j}$, which are positive.
\end{proof}
 
\section{Proof of Theorem \ref{main1}}\label{supp:main1}
 
Let $\t_0\in\Theta^*$, $f(y,\t):=\log(Q_{\t_0}(y)/Q_\t(y))$, and $K_t(\t):=t^{-1}\sum_{s\leq t}f(y_s,\t)$, so that the posterior of a set $S$ is $\mu_t(S)=Z_t(S)/Z_t(\Theta)$ with $Z_t(S):=\int_S e^{-tK_t(\t)}\mu_0(\t)d\t$.
 
\begin{theorem}[Standard form]\label{sf}
    Assume the conditions of Definition \ref{fc}. For every $\g\in\Gamma_*\cap \Gamma^+$, there is a function $a_\g(y,u)$, real analytic in $u$ as an $L^2(P_*)$-valued map on a neighborhood of $M_\g$, such that $f(y,g(u))=a_\g(y,u)u^{\lceil k_\g\rceil }$ and $\E_{P_*}[a_\g(y,u)]=u^{\lfloor k_\g\rfloor }$ for $u\in M_\g$. Consequently,
    \[
    K_t(g(u))=u^{2k_\g}-t^{-1/2}u^{\lceil k_\g\rceil }\xi^\g_t(u),\quad\xi^\g_t(u):=t^{-1/2}\sum_{s\leq t}\left(u^{\lfloor k_\g\rfloor }-a_\g(y_s,u)\right).
    \]
\end{theorem}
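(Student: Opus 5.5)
This is the standard-form theorem of Watanabe, extended to the renormalizable (possibly misspecified) setting as in \citet{renorm}. I would prove it chart by chart, using the fact from Theorem \ref{partitiontheta2}(a) that $K(g(u))=u^{2k_\g}$ on each chart $\g\in\Gamma_*\cap\Gamma^+$.

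\textbf{Step 1: divide $f$ by $u^{\lceil k_\g\rceil}$.} Expand the $L^2(P_*)$-valued analytic map $u\mapsto f(\cdot,g(u))$ in a power series $\sum_\a b_\a(y)u^\a$ on a polydisc around each point of $M_\g$. Renormalizability (Assumption \ref{fc2}(iv)) gives
\[
\E_{P_*}[f(y,g(u))^2]\leq C^{-1}K(g(u))=C^{-1}u^{2k_\g}
\]
on $M_\g$. This holds on the whole chart after shrinking it so that $g(M_\g)\subseteq V_a$. The key claim is that every coefficient $b_\a$ with $\a\not\geq\lceil k_\g\rceil$ vanishes in $L^2(P_*)$.

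To show this, fix a coordinate $j$ and let $u_j\downarrow 0$ with the other coordinates fixed and positive. Let $\a_j$ be the lowest power of $u_j$ that appears. If $\a_j<\lceil k_{\g,j}\rceil$, then the squared $L^2$ norm is of order $u_j^{2\a_j}$. That exceeds $u_j^{2k_{\g,j}}$ and contradicts the bound, because $2\a_j\leq 2\lceil k_{\g,j}\rceil-2<2k_{\g,j}$ when $\a_j$ is an integer. Hence $f(y,g(u))=a_\g(y,u)u^{\lceil k_\g\rceil}$ with $a_\g$ analytic as an $L^2(P_*)$-valued map. The cleanest route is the division lemma used in \citet[Theorem~6.1]{watanabe}: an analytic function whose $L^2$ norm is bounded by $|u^{k}|$ is divisible by $u^{\lceil k\rceil}$ in the ring of $L^2$-valued convergent power series. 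I would apply it coordinate by coordinate. The local factorizations then glue across overlapping polydiscs by uniqueness of division by a monomial away from the coordinate hyperplanes.

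\textbf{Step 2: identify the mean.} Taking expectations gives $u^{\lceil k_\g\rceil}\E[a_\g(y,u)]=K(g(u))=u^{2k_\g}$. Now $2k_\g-\lceil k_\g\rceil=\lfloor k_\g\rfloor$ coordinatewise, since $k_\g\in(\tfrac12\Z_+)^d$. Dividing on the interior, where $u^{\lceil k_\g\rceil}\neq0$, and extending by continuity gives $\E[a_\g(y,u)]=u^{\lfloor k_\g\rfloor}$ on all of $M_\g$.

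\textbf{Step 3: the standard form.} Substitute into $K_t(g(u))=t^{-1}\sum_{s\leq t} a_\g(y_s,u)u^{\lceil k_\g\rceil}$. Add and subtract $u^{\lfloor k_\g\rfloor}$, and use $u^{\lceil k_\g\rceil}u^{\lfloor k_\g\rfloor}=u^{2k_\g}$. This gives the displayed identity with $\xi^\g_t$ as defined. The identity is algebraic once Step 1 holds.

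The main obstacle is Step 1. One difficulty is making the divisibility rigorous for $L^2$-valued analytic functions when several coordinates carry half-integer exponents at once. The other is ensuring the renormalizability inequality covers the whole chart, since it holds only on $V_a\cap\{K\leq\e\}$. Charts in $\Gamma^+$ already map into $\Theta_\e$ by Lemma \ref{image}. I would shrink $\e$, and hence the boxes, so that $\Theta_\e\subseteq V_a$. This is possible because $\Theta_\e$ decreases to $\Theta^*\subseteq V_a$ and these sets are compact.
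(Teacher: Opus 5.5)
Your proof is correct and follows the paper's argument. You expand $f(\cdot,g(u))$ as an $L^2(P_*)$-valued power series and use the renormalizability bound $\|f(\cdot,g(u))\|^2_{L^2(P_*)}\leq u^{2k_\g}/C$ to eliminate, coordinate by coordinate, every coefficient whose exponent has some entry below $\lceil k_{\g,i}\rceil$; then you take expectations and substitute. The paper makes your ``order $u_j^{2\a_j}$'' step rigorous by showing $u_i^{-n}f\to G_n(u_{-i})$ while the norm bound tends to zero, and then applying the identity theorem, which avoids having to pick $u_{-j}$ where the leading coefficient is nonzero. The remaining differences are cosmetic. The paper uses a single expansion at the corner $u=0$, converging on a polydisc of radius $r>b_\g$, so it needs no gluing. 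It also needs renormalizability only near $u=0$, where it holds automatically because $g(0)\in\Theta^*\subseteq V$, so your shrinking of $\e$ is unnecessary though harmless.
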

\begin{proof}
    This is the misspecified version of \citet[Main Theorem~6.1]{watanabe}, with renormalizability. Write $k:=k_\g$. Since $g$ is analytic, $f(\cdot,g(u))=\sum_\vartheta a_\vartheta u^\vartheta$ with $a_\vartheta\in L^2(P_*)$ and $\sum_\vartheta\|a_\vartheta\|_{L^2(P_*)}r^{|\vartheta|}<\infty$ for some $r>b_\g$. Since $\g\in \Gamma_*\cap \Gamma^+$, Theorem \ref{partitiontheta2}(a) gives $K(g(u)) = u^{2k}$ on $M_\g$ with $k\neq 0$, so $K(g(0)) = 0$, and $g(M_\g) \subseteq \Theta_\e$ by Lemma \ref{image}, so $g(0) \in \Theta^*\subseteq V$. Hence, for $u\in M_\g$ near 0, renormalizability applies at $g(u)$ and gives $\|f(\cdot,g(u))\|_{L^2(P_*)}^2\leq K(g(u))/C=u^{2k}/C$. Fix $i$ with $k_i>0$ and write $f(\cdot,g(u))=\sum_n u_i^nG_n(u_{-i})$. Let $n<k_i$ with $G_m \equiv 0$ for all $m<n$. Then $u_i^{-n}f(\cdot,g(u))\to G_n(u_{-i})$ in $L^2(P_*)$ as $u_i\downarrow 0$, while its norm is at most $C^{-1/2}u_i^{k_i-n}u_{-i}^{k_{-i}}\to 0$. Hence $G_n=0$ on an open set, so $G_n\equiv 0$ by the identity theorem. By induction, $G_n\equiv 0$ for every $n<k_i$, i.e., every $n<\lceil k_i\rceil $. Therefore $a_\vartheta=0$ unless $\vartheta\geq \lceil k\rceil $, and $a_\g(\cdot,u):=\sum_{\vartheta\geq \lceil k\rceil }a_\vartheta u^{\vartheta-\lceil k\rceil }$ satisfies $f(\cdot,g(u))=a_\g(\cdot,u)u^{\lceil k \rceil}$. Taking expectations, $u^{2k}=K(g(u))=u^{\lceil k\rceil }\E_{P_*}[a_\g(y,u)]$, and $2k-\lceil k\rceil = \lfloor k\rfloor$, so $\E_{P_*}[a_\g(y,u)]=u^{\lfloor k\rfloor}$ where $u^{\lceil k\rceil}\neq 0$, and everywhere by continuity.
\end{proof}
 
\begin{lemma}\label{lilbound}
    Assume the conditions of Definition \ref{fc} and fix $\e'>0$. There are an almost surely finite random variable $\Lambda$ and an almost surely finite random time $T_0$ such that, for all $t\geq T_0$,
    \[
    \max_{\g\in\Gamma_*\cap \Gamma^+}\|\xi^\g_t\|_\infty\leq\Lambda\sqrt{2\log\log t},\quad \sup_{\t\in\Theta:K(\t)\geq\e'}|\psi_t(\t)|\leq\Lambda\sqrt{2\log\log t},
    \]
    where $\psi_t(\t):=\sqrt{t}(K(\t)-K_t(\t))/\sqrt{K(\t)}$.
\end{lemma}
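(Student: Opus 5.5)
Both bounds will follow from a compact law of the iterated logarithm for empirical processes indexed by Banach-space-valued random elements. I will use the version of \citet{ledouxtalagrand}: if $X$ is a centered random element of a separable Banach space $\mathbb{B}$ with $\E\|X\|^2<\infty$ that satisfies the central limit theorem in $\mathbb{B}$, then $\limsup_t\|S_t\|/\sqrt{2t\log\log t}<\infty$ almost surely, where $S_t$ is the partial sum of $t$ i.i.d.\ copies of $X$. The plan is to realize each of the finitely many processes as normalized partial sums of a random element of $C(M)$ for a compact set $M$. I will then bound each $\limsup$ by a finite random constant and set $\Lambda$ to be the maximum of these constants plus one, with $T_0$ the time after which all the bounds hold. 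There are finitely many charts, so this is legitimate.

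For the first bound, fix $\g\in\Gamma_*\cap\Gamma^+$. By Theorem \ref{sf}, $\xi^\g_t(u)=t^{-1/2}\sum_{s\leq t}X_s(u)$ with $X_s(u):=u^{\lfloor k_\g\rfloor}-a_\g(y_s,u)$, and these are i.i.d.\ and centered at each $u$. Let $\tilde M_\g$ be a slightly larger closed polydisc-type neighborhood of the box $[0,b_\g]^d$ on which $a_\g(\cdot,u)$ is still an $L^2(P_*)$-valued analytic map. First, I will show that $u\mapsto a_\g(y,u)$ is almost surely continuous, and in fact analytic, on this neighborhood, using the same power-series and Tonelli argument as in Lemma \ref{regularity}. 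Then $X$ is a random element of $C([0,b_\g]^d)$. Second, I will verify the moment condition $\E\sup_u|X(u)|^2<\infty$ and the CLT in $C$. The coefficients satisfy $\sum_\vartheta\|a_\vartheta\|_{L^2}r^{|\vartheta|}<\infty$ for some $r>b_\g$, so $\sup_u|a_\g(y,u)|\leq\sum_\vartheta|a_\vartheta(y)|r^{|\vartheta|-|\lceil k_\g\rceil|}$, and this lies in $L^2(P_*)$ by Minkowski. For the CLT, I will use the Jain--Marcus criterion. By Cauchy's estimates for the Taylor expansion, $|X(u)-X(u')|\leq \bar M(y)\|u-u'\|$ with $\bar M\in L^2(P_*)$, and a Lipschitz process on a finite-dimensional box satisfies the metric entropy integral condition. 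Hence $X$ satisfies the CLT in $C$, and the compact LIL gives $\limsup_t\|\xi^\g_t\|_\infty/\sqrt{2\log\log t}<\infty$ almost surely.

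The second bound works the same way with the index set $\Theta_{\e'}:=\{\t\in\Theta:K(\t)\geq\e'\}$, which is compact because $K$ is continuous. Write $\psi_t(\t)=t^{-1/2}\sum_{s\leq t}\bigl(K(\t)-f(y_s,\t)\bigr)/\sqrt{K(\t)}$. The denominator satisfies $\sqrt{K}\geq\sqrt{\e'}$ and is smooth on $\Theta_{\e'}$. The summand is centered because $\E_{P_*}f(y,\t)=K(\t)$. Assumption \ref{fc2}(iii) gives an envelope $M(y)\in L^2(P_*)$ for $f$. Analyticity of $\t\mapsto f(\cdot,\t)$ on the complex neighborhood $\Theta^{(C)}$, together with Cauchy's estimates on a compact subneighborhood of $\Theta$, gives a Lipschitz bound in $\t$ with a square-integrable constant, again after passing to the analytic version from Lemma \ref{regularity}. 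The compact LIL then applies verbatim.

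The main obstacle will be the Lipschitz bound with a square-integrable Lipschitz constant. Cauchy estimates require sup-norm control on a complex neighborhood, but Assumption \ref{fc2}(iii)(c) only gives $M\in L^2$ for the supremum over $\Theta^{(C)}$. My first approach is to shrink to a compact polydisc neighborhood strictly inside $\Theta^{(C)}$, where the derivative is bounded by a constant times $M(y)$. For the charts, I will push this bound through the analytic map $g$ and the division by the monomial $u^{\lceil k_\g\rceil}$, for which I will use the coefficient-series bound above rather than $M$. If the entropy route causes trouble, the fallback is to note that each process is an absolutely summable series $\sum_\vartheta(\text{i.i.d. real LIL sums})\,u^\vartheta$. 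Applying the scalar LIL coefficientwise, with dominated convergence using $\E|a_\vartheta|^2$ summability, should then give the same bound.
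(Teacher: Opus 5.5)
Your proposal is correct and follows the paper's proof essentially step for step. You treat the summands as centered $C(M)$-valued random elements with a square-integrable envelope and a square-integrable Lipschitz constant taken from the absolutely convergent coefficient expansion (for $\psi_t$, from Cauchy estimates against $M$ and boundedness of $1/\sqrt{K}$ on $\{K\geq\e'\}$), then apply the Jain--Marcus CLT and the Ledoux--Talagrand compact LIL, and take $\Lambda$ as one plus the maximum of finitely many lim sups. The one caveat concerns your fallback, which you do not need: dominated convergence alone does not let you move $\limsup_t$ inside the infinite sum over $\vartheta$, since that requires a summable dominating sequence from a maximal inequality or a Hilbert-space LIL.
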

\begin{proof}
    Fix $\g\in\Gamma_*\cap \Gamma^+$. By the proof of Theorem \ref{sf}, $X(u):=u^{\lfloor k_\g\rfloor}-a_\g(y,u)$ is a mean-zero random element of $C(M_\g)$ with $\|X\|_\infty\leq F(y)$ and $|X(u)-X(u')|\leq M(y)\|u-u'\|_1$, where $F,M\in L^2(P_*)$ are obtained from the absolutely convergent expansion on a box slightly larger than $M_\g$. By the central limit theorem in $C(M_\g)$ for Lipschitz processes with a square-integrable Lipschitz constant \citep{jainmarcus}, the normalized partial sums are tight, and since $\E_{P_*}[F^2]<\infty$, the compact law of the iterated logarithm \citep{ledouxtalagrand} gives $\limsup_t\|\xi^\g_t\|_\infty/\sqrt{2\log\log t}<\infty$ almost surely. The same argument applies to $(K(\t)-f(y,\t))/\sqrt{K(\t)}$ on the compact set $\{K\geq\e'\}\cap\Theta$, which is covered by finitely many polydiscs on which $f(y,\cdot)$ has an absolutely convergent expansion with $L^2(P_*)$ coefficients (Lemma \ref{regularity}), and on which $1/\sqrt{K}$ is bounded and Lipschitz. Let $\Lambda$ be one plus the max of the finitely many lim sups.
\end{proof}
 
\begin{proof}[Proof of Theorem \ref{main1}]
    Recall that $\underline{\lambda}=\min_{\Theta^*}\lambda(\cdot)$. By Proposition \ref{volumeequiv}, $\underline{\lambda}=\min_{\Theta^*}\lambda^g$ and $\Theta^{**}=\{\t\in\Theta^*:\lambda^g(\t)=\underline{\lambda}\}$. Let $\e_{\min} := \min\{\e, \{\e_\g\}_{\g\in \Gamma^+\setminus \Gamma_*}\}$, and apply Lemma \ref{lilbound} with $\e'=\e_{\min}$. All statements below hold almost surely for all large $t$. By the AM-GM inequality, $tK_t(\t)\geq\frac{1}{2}(tK(\t)-\psi_t(\t)^2)$ when $K(\t)>0$, and $tK_t(g(u))\geq\frac{1}{2}(tu^{2k_\g}-\xi^\g_t(u)^2)$ on charts in $\Gamma_*\cap \Gamma^+$ since $u^{\lceil k_\g \rceil}\leq u^{k_\g}$ on $M_\g$ because $b_\g \leq 1$.
 
    Let $\g\in\Gamma_*\cap\Gamma^+$. If $k_{\g,j}>0$ and $u\in M_\g$ has $u_j=0$, then $g(u)\in\Theta^*$ by Lemma \ref{image}, so $\underline{\lambda}\leq\lambda^g(g(u))\leq\lambda_\g(u)\leq(h_{\g,j}+1)/2k_{\g,j}$. Hence every ratio $(h_{\g,j}+1)/2k_{\g,j}$ with $k_{\g,j}>0$ is at least $\underline{\lambda}$, and Lemma \ref{monomialvolume}(b) with $t/2$ instead of $t$ gives
    \begin{equation}\label{eq:chartupper}
    \int_{M_\g}e^{-tK_t(g(u))}\psi_\g(u)u^{h_\g}du\leq e^{\|\xi^\g_t\|_\infty^2/2}\int_{M_\g}e^{-\frac{t}{2}u^{2k_\g}}\psi_\g(u)u^{h_\g}du\leq C(\log t)^{d+\Lambda^2}t^{-\underline{\lambda}}.
    \end{equation}
 
    Fix $\bar\t\in\Theta^{**}$, and let $u_0$, $\g_0$, and the box $S$ be as in the lower bound of Proposition \ref{volumeequiv} for $\bar\t$ and some radius $r$. On $S\cap\{u^{2k_{\g_0}}\leq 1/t\}$, Theorem \ref{sf} and $u^{\lceil k_{\g_0}\rceil} \leq u^{k_{\g_0}}$ give $tK_t(g(u))\leq 1+X_t$ with $X_t:=\|\xi^{\g_0}_t\|_\infty$. By Theorem \ref{partitiontheta2}(d) and Lemma \ref{monomialvolume}(a),
    \begin{equation}\label{eq:evidence}
    Z_t(\Theta\cap B(\bar\t,r))\geq ce^{-1-X_t}\int_S \1\{u^{2k_{\g_0}} \leq 1/t\} u^{h_{\g_0}}du \geq c'e^{-X_t}t^{-\underline{\lambda}}.
    \end{equation}
    By Lemma \ref{lilbound}, $e^{-X_t}\geq(\log t)^{-1}$, so $Z_t(\Theta)\geq t^{-\underline{\lambda}-\eta}$ for every $\eta>0$.
 
    Now, we prove part (i). Let $U\supseteq\Theta^{**}$ be open and $V:=\Theta\setminus U$. By Theorem \ref{partitiontheta2}(d), $Z_t(V)$ is at most $\int_{V\cap\{K>\e\}}e^{-tK_t}\mu_0 d\t$ plus the sum over charts of $J_\g:=\int_{M_\g\cap g^{-1}(V\cap\Theta_\e)}e^{-tK_t(g(u))}\psi_\g(u)u^{h_\g}du$. The first term is at most $e^{-t\e/2}(\log t)^{\Lambda^2}$ by AM-GM and Lemma \ref{lilbound}. So is $J_\g$ for $\g\in \Gamma^+\setminus\Gamma_*$, with $\e_\g$ in place of $\e$, and $J_\g=0$ for $\g\notin \Gamma^+$ by Lemma \ref{image}. Let $\g\in\Gamma_*\cap\Gamma^+$ and $J^*:=\{j:k_{\g,j}>0, (h_{\g,j}+1)/2k_{\g,j}=\underline{\lambda}\}$. If $u\in M_\g$ has $u_j=0$ for some $j\in J^*$, the argument above gives $\lambda^g(g(u))=\underline{\lambda}$, so $g(u)\in\Theta^{**}$. Hence the compact sets $\{u\in M_\g:u_j=0\text{ for some }j\in J^*\}$ and $M_\g\cap g^{-1}(V)$ are disjoint, and $u_j\geq\d_\g>0$ for $j\in J^*$ on $M_\g\cap g^{-1}(V)$. On this set, $u^{2k_\g}$ is at least a positive constant times the monomial in the coordinates outside $J^*$, whose ratios are greater than $\underline{\lambda}$. Lemma \ref{monomialvolume}(b), applied as in \eqref{eq:chartupper} to those coordinates, gives $J_\g\leq C(\log t)^{d+\Lambda^2}t^{-\lambda'_\g}$ for some $\lambda'_\g>\underline{\lambda}$. If every $j$ with $k_{\g,j}>0$ lies in $J^*$, then $K(g(u))=u^{2k_\g}\geq\d_\g^{2|k_\g|}$ on $M_\g\cap g^{-1}(V)$, so $J_\g$ decays exponentially in $t$ and the same bound holds with $\lambda'_\g:=\underline{\lambda}+1$. Let $\lambda'$ be the smallest $\lambda'_\g$. Then, with $\eta:=(\lambda'-\underline{\lambda})/4$,
    \[
    \mu_t(V)=\frac{Z_t(V)}{Z_t(\Theta)}\leq t^{\underline{\lambda}+\eta}\left(C(\log t)^{d+\Lambda^2}t^{-\lambda'}+Ce^{-t\e_{\min}/2}(\log t)^{\Lambda^2}\right)\leq t^{-(\lambda'-\underline{\lambda})/2},
    \]
    which proves (i) with $\e(U):=(\lambda'-\underline{\lambda})/2$.
 
    For part (ii), suppose $F$ satisfies the hypothesis and $\bar\t\in\Theta^{**}\setminus F$. Choose $r>0$ with $\bar B(\bar\t,r)\cap F=\emptyset$, and let $U:=\Theta\setminus\bar B(\bar\t,r)$, an open set containing $F$. Summing \eqref{eq:chartupper} over the charts, and bounding the remaining terms as in part (i), gives $Z_t(\Theta)\leq C(\log t)^{d+\Lambda^2}t^{-\underline{\lambda}}$. Combined with \eqref{eq:evidence}, $\mu_t(\Theta\setminus U)\geq c(\log t)^{-d-\Lambda^2-1}$ for all large $t$ almost surely, which contradicts $\mu_t(\Theta\setminus U)\leq t^{-\e}$.
\end{proof}
 
\section{Narratives}\label{supp:narr}
 
\begin{proof}[Proof of Proposition \ref{narrdecomp}]
    (a) By Proposition \ref{volumeequiv}, $\lambda$ takes finitely many values $\ell_1<\cdots<\ell_N$ on $\Theta^*$, and $\lambda(\t)=\min_{u\in g^{-1}(\t)}\lambda(u)$. On each chart, $\lambda_\g(u)$ depends on $u$ only through the set of its vanishing coordinates, so $B_{\leq\ell}:=\{u\in g^{-1}(\Theta^*):\lambda(u)\leq\ell\}$ is semianalytic and relatively compact, and $L_{\ell_k}=g(B_{\leq\ell_k})\setminus g(B_{\leq\ell_{k-1}})$, with $B_{\leq\ell_0}:=\emptyset$. Images of relatively compact semianalytic sets under proper real analytic maps are subanalytic, and so are differences of subanalytic sets, so each $L_{\ell_k}$ is subanalytic. Every subanalytic subset of $\Theta^{(R)}$ can be partitioned into connected real analytic submanifolds of $\Theta^{(R)}$ such that every point of $\Theta^{(R)}$ has a neighborhood meeting only finitely many of them. Partition each $L_{\ell_k}$. Since $L_{\ell_k}$ lies in the compact set $\Theta$, which is covered by finitely many such neighborhoods, the partition is finite. Each piece is a connected real analytic submanifold contained in $\Theta^*$ on which $\lambda$ equals $\ell_k$, so it is a narrative. Hence $\Theta^*$ is a finite disjoint union of narratives. Among all such unions, choose one with the fewest elements. If a union of two or more of its elements were a narrative, merging them would give a union with fewer elements. The last claim of (a) holds because $\lambda$ is constant on each $I_i$ and $\Theta^{**}=L_{\ell_1}$.
 
    (b) Let $\{I_1,\hdots,I_n\}$ satisfy (a), and let $C$ be a connected component of some $L_\ell$. Each $I_i$ is connected and lies in a single level set of $\lambda$, so each $I_i$ is either contained in $C$ or disjoint from $C$. The $I_i$ cover $\Theta^*$, so $C$ is the union of the $I_i$ contained in $C$. Among the narratives contained in $C$, choose $J$ containing as many of the $I_i$ as possible. The narrative $J$ contains at least one $I_i$, since each $I_i\subseteq C$ is a narrative. Suppose some $I_i\subseteq C$ is not contained in $J$. Since $C$ is connected, the union of the $I_i$ contained in $J$ and the union of the other $I_i$ contained in $C$ cannot both be closed relative to $C$, so some $I_j\subseteq C$ not contained in $J$ satisfies $\overline{I_j}\cap\overline{J}\neq\emptyset$. By the hypothesis of (b), some narrative $N$ contains $J\cup I_j$. Since $N$ is connected and $\lambda=\ell$ on $N$, $N\subseteq C$, and $N$ contains more of the $I_i$ than $J$. This contradicts. Hence $J$ contains every $I_i\subseteq C$, so $J=C$ and $C$ is a narrative. By (a), $C$ therefore contains exactly one $I_i$. Hence $\{I_1,\hdots,I_n\}$ is the set of connected components of the sets $L_\ell$, which does not depend on the choice in (a).
\end{proof}
 
\begin{proposition}[Existence of robustness profiles]\label{profileexist}
    Fix $a\in A$ and suppress it. Let $I$ be a narrative with complexity $c$.
    \begin{enumerate}[(a)]
        \item If $k>0$ and $(k,\hdots,k)$ satisfies Definition \ref{robustdef}(a)(i) for some coincidences of $I$, then $(k,\hdots,k)$ is a lower robustness profile of $I$. In particular, $(\frac{1}{2},\hdots,\frac{1}{2})$ is a lower robustness profile of $I$. If $K\geq 0$ on an open set $W\subseteq \Theta^{(R)}$ with $W\cap I\neq \emptyset$, then $(1,\hdots, 1)$ is a lower robustness profile.
        \item An upper robustness profile of $I$ exists if and only if $\emptyset\neq W\cap\Theta^*\subseteq I$ for some open $W$. In this case, $(k,\hdots,k)$ is an upper robustness profile of $I$ for some $k>0$.
    \end{enumerate}
\end{proposition}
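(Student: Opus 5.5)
The plan is to treat (a) and (b) separately. In both parts we work in the local chart $\chi=(\phi,\psi)$ built in the proof of Theorem~\ref{twosided} around a point $\t_0\in I$. In that chart, $I$ becomes $\{\phi=0\}\cap\chi(\Theta)$ near $\t_0$, and $N^r_\d((k,\hdots,k))$ contains a box $[-\d^{1/k},\d^{1/k}]^c\times[-\eta,\eta]^{d-c}$ and is comparable to it.

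For (a), the analytic part (i) is given, so the work is condition (ii): $\Theta$ must fill a fixed fraction of the tube $\{|\phi_\ell|\le w\}$, $w:=\d^{1/k}$, near $\t_0$. I would use convexity of $\Theta$.
\begin{itemize}
\item Since $\Theta$ is convex with nonempty interior, it contains a ball $B(p,s)$.
\item $I\subseteq\Theta$ is the closure of its relative interior in $M$, so it contains a relatively open $(d-c)$-dimensional piece $I'$ near $\t_0$.
\item For $x\in I'$ and $t\in[0,1]$, convexity gives $B\bigl(x+t(p-x),ts\bigr)\subseteq\Theta$.
\item Take $t=\a w$ with $\a$ small (depending only on $s$, $|p-x|$ and the Lipschitz constants of $\chi$). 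Then each such ball lies inside the tube and has volume $\asymp w^d$.
\item Letting $x$ range over a $w$-separated net of $I'$ (about $w^{-(d-c)}$ points) gives disjoint balls of total volume $\asymp w^c\asymp\Leb(N^r_\d)$. This yields a positive $\liminf$ of the ratio, uniformly for small $r$.
\end{itemize}
For the two specific profiles:
\begin{itemize}
\item $(\frac12,\hdots,\frac12)$: $K$ is analytic and vanishes on $\{\phi=0\}$ near $\t_0$. Writing $K\circ\chi^{-1}(\phi,\psi)-K\circ\chi^{-1}(0,\psi)$ via the mean value theorem gives $K\le C\sum_\ell|\phi_\ell|$, which is (i) with $k=\frac12$.
\item $(1,\hdots,1)$, when $K\ge0$ on an open $W$ meeting $I$: take $\t_0\in W\cap I$. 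On $W$, $K$ attains its minimum $0$ along $I$, so $\nabla_\phi K=0$ on $\{\phi=0\}$. Second-order Taylor expansion in $\phi$ then gives $K\le C\sum_\ell\phi_\ell^2$.
\end{itemize}

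For (b), necessity is direct. If $K\ge c_0\sum_\ell|\phi_\ell|^{2\rho_\ell}$ on $\Theta$ near $\t_0$, then every zero of $K$ in $\Theta\cap W$, for a small ball $W$ around $\t_0$, satisfies $\phi=0$ and hence lies in $I$. Also $\t_0\in W\cap\Theta^*$, so $\emptyset\neq W\cap\Theta^*\subseteq I$.

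For sufficiency, suppose $\emptyset\neq W\cap\Theta^*\subseteq I$.
\begin{enumerate}
\item Pick $\t_0\in W\cap I$ in the relative interior of $I$, with coincidences $\phi$ there.
\item Shrink to a compact ball $\bar B\subseteq W$ around $\t_0$ on which $I\cap\bar B=\{\phi=0\}\cap\Theta\cap\bar B$.
\item On the compact semianalytic set $\Theta\cap\bar B$ (semianalytic by Assumption~\ref{fc1}(ii)), the zero set of the analytic function $K$ is contained in the zero set of $g:=\sum_\ell\phi_\ell^2$.
\item The Łojasiewicz inequality for continuous subanalytic functions on a compact subanalytic set then gives $N\in\N$ and $C$ with $g^N\le CK$ there.
\item Since $\sum_\ell|\phi_\ell|^{2N}\le c^{N-1}g^N$, this is an upper robustness profile $(k,\hdots,k)$ with $k=N$.
\end{enumerate}

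The main obstacle is invoking the correct form of Łojasiewicz's inequality. The statement needed is the subanalytic one, on a compact domain with boundary coming from $\Theta$, comparing two functions with nested zero sets. A second delicate point is checking that $\t_0$ can be chosen so that the local description $\{\phi=0\}\cap\Theta$ of $I$ holds on a whole neighbourhood, rather than only on $M$.
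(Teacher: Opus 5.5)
Your proposal is correct and follows essentially the same route as the paper. For (a) it uses the convexity-plus-packing argument for condition (ii) and Lipschitz/Taylor bounds for the profiles $(\frac12,\dots,\frac12)$ and $(1,\dots,1)$; for (b) it uses the \Lstroke{}ojasiewicz inequality. The only difference is cosmetic: in (b) you apply the nested-zero-set form of \Lstroke{}ojasiewicz directly to $K$ and $\sum_\ell\phi_\ell^2$, whereas the paper compares $K$ with $\dist(\cdot,F)$ and then bounds $|\phi_\ell|$ by $L\dist(\cdot,F)$.
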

\begin{proof}
    (a) Let $\t_1$ be a point of the interior of $I$ relative to its manifold where the coincidences $\phi$ and the bound of Definition \ref{robustdef}(a)(i) are defined, and let $\chi=(\phi,\psi)$ be the local coordinates of Appendix \ref{app:twosided}. Since $\Theta$ is convex with nonempty interior, $\Theta$ contains a ball $B(p,\eta)$, and convexity gives $B(x+s(p-x),s\eta)\subseteq\Theta$ for every $x\in\Theta$ and $s\in(0,1]$. Taking $s:=\varrho/(2D)$, where $D$ is the diameter of $\Theta$, shows that for every $x\in I$ some ball of radius $\eta\varrho/(2D)$ is contained in $\Theta\cap B(x,\varrho)$. Since $I$ is a manifold of dimension $d-c$ near $\t_1$, the set $I\cap B(\t_1,r/2)$ contains points $x_1,\hdots,x_M$ with $M\geq C_1\varrho^{-(d-c)}$ and $\|x_i-x_j\|_2\geq 2\varrho$ for all $i\neq j$. The corresponding balls are disjoint, lie in $\Theta\cap B(\t_1,r)$ for small $\varrho$, and consist of points within $\varrho$ of $I$, so
    \[
    \Leb(\{\t'\in\Theta\cap B(\t_1,r):\dist(\t',I)<\varrho\})\geq C_1\varrho^{-(d-c)}v_d\left(\frac{\eta\varrho}{2D}\right)^d=C\varrho^c,
    \]
    where $v_d$ is the volume of the unit ball in $\R^d$. If $\dist(\t',I)<\varrho$, then $|\phi_\ell(\t')|\leq L\varrho$ for a Lipschitz constant $L$ of $\phi$. Taking $\varrho:=\d^{1/k}/L$, the set above lies in $\Theta\cap N^r_\d(k,\hdots,k)$, while $\Leb(N^r_\d(k,\hdots,k))\leq C'\d^{c/k}$, so the ratio in Definition \ref{robustdef}(a)(ii) is bounded below. For the last claims, let $\t'$ be near $\t_1$ and $\t'' := \chi^{-1}(0,\psi(\t')) \in I$, so that $\|\t'-\t''\|_2 \leq L'\|\phi(\t')\|_2$. Since $K$ is real analytic, it is Lipschitz near $\t_1$, and $K=0$ on $I$, so $K(\t') \leq C\|\t'-\t''\|_2\leq CL'\sum_\ell |\phi_\ell(\t')|$, which is Definition \ref{robustdef}(a)(i) for $(\frac{1}{2},\hdots, \frac{1}{2})$. If $K\geq 0$ on an open set $W$ intersecting $I$, take $\t_1\in W$. Every point of $W\cap I$ minimizes $K$ on $W$, so $\nabla K = 0$ on $W\cap I$, and Taylor's theorem gives $K(\t')\leq C\|\t'-\t''\|_2^2\leq CL'^2\|\phi(\t')\|_2^2$, which is Definition \ref{robustdef}(a)(i) for $(1,\hdots, 1)$.
    
    (b) If $\bar\rho$ is an upper robustness profile at $\t_0$, then $K>0$ wherever $\phi\neq 0$ near $\t_0$, so $\Theta^*\subseteq I$ near $\t_0$. Conversely, suppose $\emptyset\neq W\cap\Theta^*\subseteq I$, fix $\t\in W\cap\Theta^*$, and let $F:=\bar B(\t,r)\cap\Theta^*$ for small $r$. The functions $K$ and $\dist(\cdot,F)$ are continuous and subanalytic on $\bar B(\t,r)\cap\Theta$ and vanish exactly on $F$, so the \Lstroke{}ojasiewicz inequality gives $K\geq c_0\dist(\cdot,F)^{\a'}$ there for some $c_0,\a'>0$. Since $F\subseteq I$, $\phi$ vanishes on $F$, so $\max_\ell|\phi_\ell(\t')|\leq L\dist(\t',F)$ near $\t$, and $(\a'/2,\hdots,\a'/2)$ is an upper robustness profile.
\end{proof}
 
\section{Proof of Lemma \ref{dp}}\label{supp:dp}
 
For a bounded weakly continuous $\Phi:\Delta(\Theta)\to\R$, let $B_\Phi(a,\mu):=(1-\b)\bar u(a,\mu)+\b\int_Y\Phi(\mu_{a,y})dQ^a_\mu(y)$ and $(T\Phi)(\mu):=\max_aB_\Phi(a,\mu)$.
 
First, $T\Phi$ is weakly continuous. If $\mu_n\to\mu$ weakly, the densities $Q^a_{\mu_n}(y)=\int_\Theta Q^a_\t(y)d\mu_n(\t)$ converge pointwise, since $\t\mapsto Q^a_\t(y)$ is continuous on the compact set $\Theta$ (Lemma \ref{regularity}), and hence in $L^1$ by Scheff\'e's lemma. For $Q^a_\mu$-almost every $y$, the posteriors $(\mu_n)_{a,y}$ converge weakly to $\mu_{a,y}$, so dominated convergence gives $B_\Phi(a,\mu_n)\to B_\Phi(a,\mu)$, and $\bar u(a,\cdot)$ is weakly continuous by Lemma \ref{regularity}. Second, $T$ is a $\b$-contraction in the sup norm, so it has a unique fixed point $W$ with $\|W\|_\infty\leq\|\bar u\|_\infty$, and $B(a,\cdot):=B_W(a,\cdot)$ and $G_a$ are weakly continuous. This proves (b).
 
For (a), let $P^s_{\sigma,\mu}:=\int_\Theta P_{\sigma,\t}d\mu(\t)$ be the agent's subjective law over histories under a policy $\sigma$, where $P_{\sigma,\t}$ draws $y_{t+1}$ from $Q^{a_{t+1}}_\t$. Under $P^s_{\sigma,\mu}$, the conditional law of $\t$ given the history is $\mu_t$, so the conditional law of $y_{t+1}$ is $Q^{a_{t+1}}_{\mu_t}$. Applying $W(\mu_t)\geq B(a_{t+1},\mu_t)$ at every $t$ and telescoping gives
\[
W(\mu)=V(\sigma,\mu)+\sum_{t\geq 0}\b^t\E_{\sigma,\mu}\left[G_{a_{t+1}}(\mu_t)\right]\geq V(\sigma,\mu),
\]
with equality for the measurable selection $\pi^W(\mu):=\min\arg\max_aB(a,\mu)$, which is optimal at every $\mu$. Since $P_{\sigma,\t}$ does not depend on $\mu$, $V(\sigma,\mu)=\int_\Theta V(\sigma,\d_\t)d\mu(\t)$.
 
For (c), the same identity, started at $t$, shows that $\pi$ is optimal for $\mu_0$ if and only if $G_{\pi(\mu_t)}(\mu_t)=0$ for all $t$, $P^s_{\pi,\mu_0}$-almost surely. The objective law $\P^{\mu_0}_\pi$ is absolutely continuous with respect to $P^s_{\pi,\mu_0}$ on the history up to every $t$. The subjective density $\int_\Theta\prod_{s\leq t}Q^{a_s}_\t(y_s)\mu_0(\t)d\t$ vanishes only where the objective density $\prod_{s\leq t}P^{a_s}_*(y_s)$ does, by common support and Lemma \ref{regularity}(a). Hence the event in (c) also has $\P^{\mu_0}_\pi$-probability one.

\section{Multiple KL-Minimizing Distributions}\label{supp:weakid}
\renewcommand{\theassumption}{S.\arabic{assumption}}\setcounter{assumption}{0}
\renewcommand{\theHassumption}{supp.assumption.\arabic{assumption}}
 
This section drops weak identification (Assumption \ref{fc2}(i)), so that finitely many distinct subjective distributions may minimize the KL divergence. Theorems \ref{main1} and \ref{main2} extend class by class, where a class collects the parameters that induce the same KL-minimizing distribution. As in the main text, the outcome space $Y$ is finite. We maintain Assumption \ref{fc1} and replace Assumption \ref{fc2} with the following.
 
\begin{assumption}\label{fc2multi}
    For each $a\in A$:
    \begin{enumerate}[(i)]
        \item (Finitely many KL-minimizing distributions): The set $\{Q^a_\t:\t\in\Theta^*(a)\}$ is finite. Write $Q^a_1,\hdots,Q^a_{n_a}$ for its distinct elements, and for $i\leq n_a$ let
        \[
        \Theta^*_i(a):=\{\t\in\Theta^*(a):Q^a_\t=Q^a_i\},\quad f^a_i(y,\t):=\log\frac{Q^a_i(y)}{Q^a_\t(y)}.
        \]
        We call $\Theta^*_i(a)$ the $i$-th \textbf{class} of $\Theta^*(a)$.
        \item (Richness): $K^a\not\equiv 0$ on $\Theta^{(R)}$.
        \item (Analyticity): Assumption \ref{fc2}(iii) holds with $f^a_1$ in place of $f^a$.
        \item (Renormalizability within classes): There is $\e>0$ such that, for every $i\leq n_a$, there are an open set $V_{a,i}\supseteq\Theta^*_i(a)$ and a constant $C>0$ with $\E_{P^a_*}[f^a_i(y,\t)]\geq C\E_{P^a_*}[f^a_i(y,\t)^2]$ for all $\t\in V_{a,i}\cap\{\t\in \Theta:K^a\leq\e\}$.
    \end{enumerate}
\end{assumption}
 
Under weak identification, $n_a=1$ and Assumption \ref{fc2multi} reduces to Assumption \ref{fc2}. Every $Q^a_i$ attains the minimal KL divergence, so $\E_{P^a_*}[f^a_i(y,\t)]=K^a(\t)$ for every $i$. The local learning coefficient is defined as in Definition \ref{concentrationset} and does not involve the subjective distributions, so only the concentration set changes.
 
\begin{defn}[Class concentration sets]\label{concmulti}
    For $a\in A$ and $i\leq n_a$, let $\underline{\lambda}_i(a):=\min_{\t\in\Theta^*_i(a)}\lambda(\t\mid a)$ and
    \[
    \Theta^{**}_i(a):=\{\t\in\Theta^*_i(a):\lambda(\t\mid a)=\underline{\lambda}_i(a)\}.
    \]
    The \textbf{concentration set} is $\Theta^{**}(a):=\bigcup_{i=1}^{n_a}\Theta^{**}_i(a)$.
\end{defn}
 
Within each class, the concentration set consists of the parameters with the smallest learning coefficient. The minimal coefficients $\underline{\lambda}_i(a)$ may differ across classes, and every class contributes to $\Theta^{**}(a)$ regardless.
 
\begin{theorem}[Exogenous convergence without weak identification]\label{main1multi}
    Consider the exogenous learning framework under Assumptions \ref{fc1} and \ref{fc2multi}. For every open set $U\supseteq\Theta^{**}$, there is $\e(U)>0$ such that almost surely $\mu_t(\Theta\setminus U)\leq t^{-\e(U)}$ for all large $t$.
\end{theorem}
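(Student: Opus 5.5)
The approach is to reduce Theorem \ref{main1multi} to the single-class argument of Supplemental Appendix \ref{supp:main1}, applied class by class. The key point is that posterior odds \emph{between} classes never need to be controlled. It suffices to compare, within each class $i$, the evidence of the part of $\Theta\setminus U$ near $\Theta^*_i$ with the evidence of a small ball around a point of $\Theta^{**}_i$. The odds between classes behave like a mean-zero random walk, since $\E_{P_*}[\log(Q_i/Q_j)]=0$, and so fluctuate at scale $\sqrt{t\log\log t}$. That is why every class survives in $\Theta^{**}$, and also why no bound uniform across classes can be hoped for.

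\textbf{Step 1 (separating classes).} The classes $\Theta^*_i$ are compact. They are pairwise disjoint, because $\t\mapsto Q_\t$ is continuous and the $Q_i$ are distinct. Choose open sets $W_1,\hdots,W_n$ with $\Theta^*_i\subseteq W_i$ and pairwise disjoint closures, small enough that $W_i\cap\{K\leq\e\}\subseteq V_i$, where $V_i$ is the renormalizability neighborhood of Assumption \ref{fc2multi}(iv). On the compact set $\Theta\setminus\bigcup_iW_i$, the divergence satisfies $K\geq\e'>0$.

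\textbf{Step 2 (reference change).} Fix $i$. For every $\t$ we have $\sum_{s\leq t}\log(Q_1(y_s)/Q_\t(y_s))=\sum_{s\leq t}f_i(y_s,\t)+R^i_t$, where $R^i_t$ does not depend on $\t$. Hence, for sets $S,S'$, the ratio $Z_t(S)/Z_t(S')$ is unchanged if $K_t$ is computed with reference $Q_i$. Analyticity of $f_i$ follows from that of $f_1$, since the two differ by a function of $y$ alone. I then rerun Theorem \ref{partitiontheta2}, Proposition \ref{volumeequiv}, Theorem \ref{sf} and Lemma \ref{lilbound} on the compact set $\overline{W_i}\cap\Theta_\e$, using $f_i$ in place of $f$. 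Renormalizability within the class yields the standard form $K^i_t(g(u))=u^{2k_\g}-t^{-1/2}u^{\lceil k_\g\rceil}\xi^{\g,i}_t(u)$ on charts meeting $g^{-1}(\Theta^*_i)$.

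\textbf{Step 3 (the within-class bound).} With these inputs, the proof of Theorem \ref{main1}(i) gives the following estimates almost surely, with $\lambda'_i>\underline\lambda_i$. For a ball $B_i$ around a point of $\Theta^{**}_i$,
\[
Z_t(B_i)\geq t^{-\underline\lambda_i-\eta}e^{-R^i_t}.
\]
For the part of $\Theta\setminus U$ near the class,
\[
Z_t((\Theta\setminus U)\cap W_i)\leq C(\log t)^{d+\Lambda^2}t^{-\lambda'_i}e^{-R^i_t}.
\]
The point is that on charts over $W_i$, the ratios attaining $\underline\lambda_i$ vanish only on $g^{-1}(\Theta^{**}_i)$, exactly as before. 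Since $Z_t(\Theta)\geq Z_t(B_i)$, the posterior mass of $(\Theta\setminus U)\cap W_i$ is then at most $t^{-(\lambda'_i-\underline\lambda_i)/2}$. For the far region, I use any single reference $i$. The AM--GM and law-of-iterated-logarithm bound of Lemma \ref{lilbound} gives $Z_t(\Theta\setminus\bigcup_jW_j)\leq e^{-t\e'/2}(\log t)^{\Lambda^2}e^{-R^i_t}$, which is negligible against $Z_t(B_i)$. Summing over the finitely many classes gives the claim with $\e(U):=\min_i(\lambda'_i-\underline\lambda_i)/2$, lowered slightly to absorb the sum.

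\textbf{Main obstacle.} The delicate step is Step 2's localized resolution. Charts over $W_i$ must meet $g^{-1}(\Theta^*)$ only in $g^{-1}(\Theta^*_i)$, so that the standard form with $f_i$ (which needs renormalizability relative to $Q_i$) applies on every such chart. I would first try applying Theorem \ref{ht} to the product $\Phi$ at zeros in $\overline{W_i}$ only, and then shrinking the boxes so that each image lies in $W_i$. This is possible because the $\overline{W_j}$ are disjoint and $g$ is proper.
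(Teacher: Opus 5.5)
Your proposal is correct and follows essentially the paper's route. You change the reference from $Q_1$ to $Q_i$ (the paper's factorization $Z_t(S)=e^{L^{(i)}_t}Z^{(i)}_t(S)$), bound each class's share of $\Theta\setminus U$ by the single-class argument against $Z_t(\Theta)\geq Z_t(\text{region near class } i)$, and dispose of the region where $K$ is bounded away from zero by the AM--GM/LIL bound. The only difference is how the classes are separated. The paper cuts $\{K\leq\e_0\}$ with polynomials $p_i$, so that each $\Theta_i$ is defined by analytic inequalities, and then cites Theorem \ref{main1}(i) as a black box on the restricted environment $\EE_i$, with Lemma \ref{classes}(b) replacing convexity. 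You instead keep $\Theta$ and sort localized resolution charts by class, which amounts to rerunning that proof rather than citing it.
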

 
\begin{corollary}\label{tightmulti}
    Consider the exogenous learning framework under Assumptions \ref{fc1} and \ref{fc2multi}, and let $N_\d(S):=\{\t\in\Theta:\dist(\t,S)<\d\}$ for $S\subseteq\Theta$.
    \begin{enumerate}[(a)]
        \item For every class $i$, almost surely there is a sequence $t_k\to\infty$ with $\mu_{t_k}(N_\d(\Theta^{**}_i))\to 1$ for every $\d>0$. If there are at least two classes, then for every class $i$ and all small $\d>0$, almost surely $\liminf_t\mu_t(N_\d(\Theta^{**}_i))=0$, and $\mu_t$ does not converge weakly.
        \item If $F\subseteq\Theta$ is closed and, for every open $U\supseteq F$, there is $\e>0$ such that almost surely $\mu_t(\Theta\setminus U)\leq t^{-\e}$ for all large $t$, then $F\supseteq\Theta^{**}$.
    \end{enumerate}
\end{corollary}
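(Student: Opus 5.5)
The plan is to prove Corollary \ref{tightmulti} by running the chart-by-chart evidence comparison from the proof of Theorem \ref{main1} separately on each class, and then tracking how the total evidence of each class fluctuates with the data. I take Theorem \ref{main1multi} as given. Fix a class $i$ and a representative $\t_i\in\Theta^*_i$. Write the posterior as $Z_t(S)/Z_t(\Theta)$ with $Z_t(S):=\int_S\prod_{s\le t}(Q_\t(y_s)/Q_1(y_s))\mu_0\,d\t$, using a common reference distribution $Q_1$. Near class $i$, the likelihood ratio factors as
\[
\frac{Q_\t}{Q_1}=\frac{Q_i}{Q_1}\cdot\frac{Q_\t}{Q_i}.
\]
So $Z_t(N_\d(\Theta^*_i))=e^{S^i_t}\,\tilde Z^i_t$, where $S^i_t:=\sum_{s\le t}\log(Q_i(y_s)/Q_1(y_s))$ and $\tilde Z^i_t$ is the class-$i$ evidence built from $f_i$.

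Renormalizability within classes gives the standard form of Theorem \ref{sf} on resolution charts around $\Theta^*_i$. The arguments behind \eqref{eq:chartupper} and \eqref{eq:evidence} then yield, almost surely for large $t$,
\[
(\log t)^{-C}t^{-\underline{\lambda}_i}\le\tilde Z^i_t\le(\log t)^{C}t^{-\underline{\lambda}_i},
\]
with the lower bound holding already on any small ball around a point of $\Theta^{**}_i$. For $\d$ small the neighborhoods $N_\d(\Theta^*_i)$ are disjoint, and the mass outside $\bigcup_i N_\d(\Theta^*_i)$ decays exponentially, as in part (i) of Theorem \ref{main1}.

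For part (a), note that $\E_{P_*}\log(Q_i/Q_1)=0$, since all the $Q_i$ attain the minimal KL divergence. Also $Q_i\ne Q_1$ with common support, so $\log(Q_i/Q_1)$ is a nondegenerate mean-zero variable. With at least two classes, the vector $(S^j_t-S^i_t)_j$ is a nondegenerate mean-zero random walk, possibly degenerate in some directions. The key step, and the main obstacle, is to show that almost surely
\[
S^i_t-\max_{j\ne i}S^j_t\ \ge\ \sqrt{t}\ \text{infinitely often, and}\ \le-\sqrt{t}\ \text{infinitely often}.
\]
Even $\gg\log t$ would suffice, since the polynomial factors $t^{\underline{\lambda}_j-\underline{\lambda}_i}(\log t)^{C}$ are only $e^{O(\log t)}$.

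I would try to get this from the multivariate law of the iterated logarithm, or Strassen's functional LIL. The cluster set of $S_t/\sqrt{2t\log\log t}$ is the ellipsoid given by the covariance, and it contains points where coordinate $i$ strictly exceeds all the others. The delicate case is two classes with $\log(Q_j/Q_1)$ collinear to $\log(Q_i/Q_1)$. Then the walk is one-dimensional, and the ordinary LIL gives oscillation of order $\sqrt{t\log\log t}$ in both directions.

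Along the times where class $i$ dominates, combine this with the class-$i$ part of Theorem \ref{main1}, i.e. the polynomial decay away from $\Theta^{**}_i$ within the class. This gives $\mu_{t_k}(N_\d(\Theta^{**}_i))\to1$ for every $\d$, by a diagonal argument over rational $\d$. Along the times where another class dominates, the mass of $N_\d(\Theta^{**}_i)\subseteq N_\d(\Theta^*_i)$ tends to $0$, so the $\liminf$ is zero. Since two classes have disjoint compact concentration sets and each is approached along some subsequence, $\mu_t$ has two distinct weak limit points and does not converge.

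For part (b), suppose $\bar\t\in\Theta^{**}_i\setminus F$. Choose $r$ with $\bar B(\bar\t,r)\cap F=\emptyset$ and set $U:=\Theta\setminus\bar B(\bar\t,r)$. By part (a), along $t_k$ the posterior puts mass tending to one near $\Theta^{**}_i$. Because every small ball around a point of $\Theta^{**}_i$ carries evidence within a logarithmic factor of the class total, the ball around $\bar\t$ gets $\mu_{t_k}(B(\bar\t,r))\ge(\log t_k)^{-C}$. This contradicts $\mu_t(\Theta\setminus U)\le t^{-\e}$ for all large $t$.
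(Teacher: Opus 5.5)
\textbf{Overall.} Your outline has the same architecture as the paper's proof:
\begin{itemize}
\item the exact factorization of the evidence through $e^{S^i_t}$;
\item class-wise evidence bounds of order $t^{-\underline{\lambda}_i}$ up to logarithmic factors;
\item Theorem \ref{main1multi} for decay within a class;
\item essentially the same argument for (b).
\end{itemize}
You differ in how you show that class $i$ leads by more than $O(\log t)$ infinitely often. You invoke the multivariate law of the iterated logarithm, whose almost sure cluster set is the ellipsoid $\Sigma^{1/2}\{x:\|x\|_2\le 1\}$ in the range $\V$ of $\Sigma$. This gives the almost sure statement pathwise, in one step. The paper instead uses the central limit theorem to get $\liminf_t\P(H_t)\ge p_i>0$ on a high-probability event, then the reverse Fatou lemma to get positive probability, then Kolmogorov's zero-one law to upgrade to probability one. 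Your route is valid and somewhat shorter, but it rests on a claim you assert without proof.

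\textbf{The gap.} You assert that the ellipsoid ``contains points where coordinate $i$ strictly exceeds all the others.'' The ellipsoid is a convex body in $\V$ with $0$ in its relative interior. So the claim is equivalent to $\V$ meeting the open cone $\{v: v_i>v_j\text{ for all }j\neq i\}$, and this does not follow from nondegeneracy of the walk. For example, if $\log Q_i$ were pointwise on $Y_*$ the average of $\log Q_j$ and $\log Q_k$, then $v_i=\frac{1}{2}(v_j+v_k)$ on all of $\V$, and class $i$ could never strictly lead.

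So the real difficulty appears with three or more classes. It is not the two-class collinear case you single out: with two classes one coordinate is $S^1_t\equiv 0$, and the ordinary LIL suffices.

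\textbf{How to close it.} What excludes the obstruction is that the $Q_j$ are distinct probability distributions with common support. Suppose no such $v$ exists. Gordan's alternative then gives convex weights $w$ with $\log Q_i=\sum_{j\neq i}w_j\log Q_j$ on $Y_*$. H\"older's inequality gives $1=\sum_{y\in Y_*}\prod_{j\neq i}Q_j(y)^{w_j}\le 1$. Its equality case forces every $Q_j$ with $w_j>0$ to equal $Q_i$, contradicting distinctness. This is the paper's Lemma \ref{nodominated}, and you need it, or an equivalent argument, at exactly this step. With it, your LIL argument goes through, and the trailing half follows by replacing $v$ with $-v$, using the symmetry of the ellipsoid.

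\textbf{A smaller technical point.} The class-wise evidence bounds require resolving a region that satisfies two conditions:
\begin{itemize}
\item it is cut out by analytic inequalities;
\item on it only class $i$ minimizes $K$ and renormalizability for $f_i$ holds.
\end{itemize}
$N_\d(\Theta^*_i)$ is not of that form. This is why the paper separates the classes by polynomials $p_i$ and works in the restricted environments $\EE_i$ of Lemma \ref{classes}.
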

 
Theorem \ref{main1multi} and Corollary \ref{tightmulti}(b) restate Theorem \ref{main1} with the concentration set of Definition \ref{concmulti}. Corollary \ref{tightmulti}(a) shows that the posterior returns to every class infinitely often and, with two or more classes, never settles on one. An action that the agent plays forever must therefore be justified by a belief on each class.
 
\begin{defn}[Uniform singular equilibrium]\label{semulti}
    Under Assumptions \ref{fc1} and \ref{fc2multi}, an action $a\in A$ is a \textbf{uniform singular equilibrium} if for every class $i\leq n_a$ there is a belief $\mu_i\in\Delta(\Theta^{**}_i(a))$ with $a\in BR(\mu_i)$.
\end{defn}
 
\begin{theorem}[Endogenous convergence without weak identification]\label{main2multi}
    Under Assumptions \ref{fc1} and \ref{fc2multi}, every limit action is a uniform singular equilibrium.
\end{theorem}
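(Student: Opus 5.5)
The plan follows the proof of Theorem \ref{main2} in Appendix \ref{app:main2}, applied class by class, with Corollary \ref{tightmulti}(a) replacing Theorem \ref{main1}. Let $a$ be a limit action under an optimal belief-based policy $\pi$. Using Lemma \ref{coupling}, fix $T$ with $\P(C_T)>0$, where $C_T=\{a_t=a \text{ for all } t>T\}$.

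\textbf{Reduction to an exogenous problem.} Conditional on $\G_T$, the continuation posterior $\hat\mu^T_s$ is the exogenous posterior from the prior $\mu^c_T$. This prior has a smooth, strictly positive density, so it satisfies Assumptions \ref{fc1} and \ref{fc2multi}. By Proposition \ref{invariance}, the local learning coefficient does not depend on the prior, and the classes $\Theta^*_i(a)$ are defined without reference to the prior. The class concentration sets $\Theta^{**}_i(a)$ are therefore unchanged.

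\textbf{Accumulation on each class.} Fix a class $i$. By Corollary \ref{tightmulti}(a), applied conditionally on $\G_T$, almost surely there is a sequence $t_k\to\infty$ with $\hat\mu^T_{t_k}(N_\d(\Theta^{**}_i(a)))\to 1$ for every $\d>0$. On $C_T$, Lemma \ref{coupling} gives $\mu^c_{T+t_k}=\hat\mu^T_{t_k}$. By weak compactness, pass to a further subsequence along which $\mu^c_{T+t_k}$ converges weakly to some $\mu_i$. The portmanteau theorem and closedness of $\Theta^{**}_i(a)$ then give $\mu_i\in\Delta(\Theta^{**}_i(a))$. Closedness holds because $\Theta^*_i(a)$ is closed (by continuity of $\t\mapsto Q^a_\t$) and $\lambda$ is lower semicontinuous (Lemma \ref{lambdaprops}(b)). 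Lemma \ref{dp}(c) gives $G_a(\mu^c_t)=0$ for all $t>T$ on $C_T$, so continuity (Lemma \ref{dp}(b)) yields $G_a(\mu_i)=0$. Since there are finitely many classes, we can intersect the almost-sure events over $i$ and choose a single outcome in $C_T$ that works for every class.

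\textbf{From zero Bellman loss to myopic optimality.} It remains to conclude $a\in BR(\mu_i)$ via Lemma \ref{bridge}. That lemma used weak identification only to know that $Q^a_\t$ is constant on the support of the belief. Here $\mu_i$ is supported on a single class, where $Q^a_\t=Q^a_i$ is constant by definition. So $\mu_{a,y}=\mu_i$ for $Q^a_{\mu_i}$-almost every $y$, and the argument of Lemma \ref{bridge} goes through verbatim to give $a\in BR(\mu_i)$. Doing this for every $i$ shows that $a$ is a uniform singular equilibrium.

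The main load is carried by Corollary \ref{tightmulti}(a), which is assumed. The delicate point I expect is the closedness of $\Theta^{**}_i(a)$, needed to conclude that the weak limit is supported on it. This should follow from two facts: $\Theta^*_i(a)=\Theta^*(a)\cap\{\t: Q^a_\t=Q^a_i\}$ is compact, and a lower semicontinuous function attains its minimum on a compact set with a closed set of minimizers, exactly as in Lemma \ref{lambdaprops}(c).
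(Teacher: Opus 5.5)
Your proposal is correct and follows the paper's proof essentially step for step. You reduce to the exogenous environment with prior $\mu^c_T$, use Corollary \ref{tightmulti}(a) to get for each class a subsequence of posteriors accumulating on $\Theta^{**}_i(a)$, pass to a weak limit supported on that closed set with zero Bellman loss, and run the argument of Lemma \ref{bridge} within a single class. Your justification of closedness via compactness of the class and lower semicontinuity of $\lambda$ is exactly what the paper's remark that Lemma \ref{lambdaprops} holds class by class supplies.
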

 
Under weak identification, a uniform singular equilibrium is a singular equilibrium, and Theorem \ref{main2multi} is Theorem \ref{main2}. The remaining results extend as follows. Propositions \ref{invariance} and \ref{narrdecomp}, Theorem \ref{twosided}, and Proposition \ref{profileexist} concern only $\lambda$, which depends on $K^a$ and $\Theta$ alone, and hold without change. Every narrative lies in a single class, since the classes are disjoint and compact (Lemma \ref{classes}). Theorem \ref{main3} holds with $\Theta^{**}(a)$ as in Definition \ref{concmulti}, by the proof in Appendix \ref{app:main3} with Theorem \ref{main1multi} in place of Theorem \ref{main1}.
 
It is important to note that the learning coefficient does not select across classes. Across classes, the log-likelihood ratio between two KL-minimizing distributions is a random walk with mean-zero increments, of order $\sqrt{t}$, while differences in learning coefficients contribute terms of order $\log t$. Thus, Corollary \ref{tightmulti}(a) holds regardless of the ordering of $\underline{\lambda}_1(a),\hdots,\underline{\lambda}_{n_a}(a)$. We conjecture that the learning coefficient can select between classes if the agent updates her beliefs using fractional posteriors, but we do not pursue this here.
 
For the proofs, fix $a\in A$, suppress it, and write $n:=n_a$. Lemma \ref{lambdaprops} holds under Assumption \ref{fc2multi} with the same proof, applied class by class, so each $\Theta^{**}_i$ is nonempty and compact.
 
\begin{lemma}[Classes]\label{classes}
    The classes $\Theta^*_1,\hdots,\Theta^*_n$ are compact and pairwise disjoint. There are $\e_0\in(0,\e]$, $\d_1>0$, and polynomials $p_1,\hdots,p_n$ such that the sets $\Theta_i:=\{\t\in\Theta:K(\t)\leq\e_0, p_i(\t)\geq 0\}$ satisfy:
    \begin{enumerate}[(a)]
        \item the set $\{\t\in\Theta:K(\t)\leq\e_0\}$ is the disjoint union of $\Theta_1,\hdots,\Theta_n$, and $\Theta_i\cap\Theta^*=\Theta^*_i$;
        \item the set $\Theta_i$ contains $N_{\d_1}(\Theta^*_i)$, and every point of $\Theta^*_i$ is a limit of interior points of $\Theta_i$;
        \item the \textbf{restricted environment} $\EE_i$, with parameter space $\Theta_i$, prior $\mu_0/\mu_0(\Theta_i)$ on $\Theta_i$, and $f_i$ in place of $f$, satisfies the hypotheses of Theorem \ref{main1}, with convexity of the parameter space replaced by (b). The local learning coefficient of $\EE_i$ agrees with $\lambda$ on $\Theta^*_i$, and the concentration set of $\EE_i$ is $\Theta^{**}_i$.
    \end{enumerate}
\end{lemma}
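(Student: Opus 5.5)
Compactness and disjointness come first. Each class $\Theta^*_i=\{\t\in\Theta^*:Q_\t=Q_i\}$ is closed, since $\t\mapsto Q_\t(y)$ is continuous for each of the finitely many $y$, and $\Theta^*$ is compact. Distinct classes are disjoint because $Q_i\neq Q_j$. Let $D:=\min_{i\neq j}\dist(\Theta^*_i,\Theta^*_j)>0$. I would choose $\d_1<D/4$ and then define each separator $p_i$ so that the sets $\{p_i\geq0\}$ restricted to a neighborhood of $\{K\leq\e_0\}$ are pairwise disjoint and each contains $N_{2\d_1}(\Theta^*_i)$. To build $p_i$, I would take a continuous function equal to $1$ on $\overline{N_{2\d_1}(\Theta^*_i)}$ and $-1$ on $\overline{N_{2\d_1}(\bigcup_{j\neq i}\Theta^*_j)}$, then approximate it uniformly on the compact $\Theta$ within $1/2$ by a polynomial (Stone--Weierstrass). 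This gives $p_i>1/2$ near class $i$ and $p_i<-1/2$ near the other classes. The remaining risk is that some point with $K\leq\e_0$ far from all classes satisfies $p_i\geq0$ for two indices $i$, or for none. By compactness, $\{K\leq\e_0\}\subseteq N_{\d_1}(\Theta^*)$ once $\e_0$ is small, because $K$ is bounded below off $N_{\d_1}(\Theta^*)$. On $N_{\d_1}(\Theta^*)$, exactly one $p_i$ is positive. This yields (a), including $\Theta_i\cap\Theta^*=\Theta^*_i$. For (b), note that $N_{\d_1}(\Theta^*_i)\cap\{K\leq\e_0\}\subseteq\Theta_i$. The statement as written asks for all of $N_{\d_1}(\Theta^*_i)$, which I read relative to $\{K\le \e_0\}$; otherwise I would need to shrink $\d_1$ using continuity of $K$, though that alone does not give containment. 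Interior points approaching $\Theta^*_i$ exist because $\Theta$ is convex with nonempty interior, and near $\Theta^*_i$ the set $\Theta_i$ coincides with $\Theta\cap\{K\leq\e_0\}$ (open-ness from $K<\e_0$ near points of $\Theta^*$).

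For (c), I would check the hypotheses of Theorem \ref{main1} for $\EE_i$. Common support and analyticity are inherited. The analyticity of $f_i$ follows from that of $f_1$ because $f_i=f_1+\log(Q_i/Q_1)$, a constant shift in $\t$. $\Theta_i$ is cut out by the analytic inequalities $q_\ell\geq0$, $\e_0-K\geq0$, and $p_i\geq0$. The prior restricted to $\Theta_i$ remains smooth and positive. In $\EE_i$ the KL minimizers are exactly $\Theta^*_i$, and they all induce $Q_i$, so weak identification holds. Renormalizability is Assumption \ref{fc2multi}(iv) on $V_{a,i}$, and richness holds since $K\not\equiv0$. Where the proof of Theorem \ref{main1} uses convexity of $\Theta$, it does so only to get volume lower bounds near minimizers, namely that $\Leb(\Theta\cap B(\t,s))\gtrsim s^d$ and that some point of $g^{-1}(\t)$ lies in a chart of $\Gamma^+$. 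Both follow from (b). The first holds because near $\Theta^*_i$ the set $\Theta_i$ agrees with $\Theta$ locally, and the second because $\t$ is a limit of interior points.

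The local learning coefficient is defined through $V_\t(r,\e)=\Leb(\Theta\cap B(\t,r)\cap\{K\leq\e\})$. For $\t\in\Theta^*_i$, $r<\d_1$, and $\e\le\e_0$, this set lies in $N_{\d_1}(\Theta^*_i)\cap\{K\leq\e_0\}\subseteq\Theta_i$. The volumes therefore coincide for $\Theta$ and $\Theta_i$, so the two coefficients agree on $\Theta^*_i$, and the concentration set of $\EE_i$ is $\Theta^{**}_i$ by Definition \ref{concmulti}.

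The main obstacle I expect is the bookkeeping in (c): making sure every step of Supplemental Appendix \ref{supp:main1} that uses convexity, including the lower bound in Lemma \ref{lambdaprops}(a), really does use only the local property (b). Constructing the separating polynomials is routine.
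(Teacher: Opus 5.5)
Your argument is correct. It differs from the paper's only in the order in which the separators and $\epsilon_0$ are chosen.

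\textbf{Comparison with the paper.} The paper separates the classes themselves, with polynomials satisfying $p_i\geq 1$ on $\Theta^*_i$ and $p_i\leq -1$ on the other classes. It then notes that the compact sets $\Theta\cap\{p_i\geq 0\}\cap\{p_j\geq 0\}$ for $i\neq j$, and $\Theta\setminus\bigcup_i\{p_i>0\}$, miss $\Theta^*$, and takes $\epsilon_0$ below the minimum of $K$ on their union. Finally it gets $\delta_1$ because the relatively open set $\{K<\epsilon_0,\ p_i>0\}$ contains the compact set $\Theta^*_i$.

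You instead separate the $2\delta_1$-neighborhoods of the classes, with $\delta_1<D/4$. You then take $\epsilon_0$ small enough that $\{K\leq\epsilon_0\}\subseteq N_{\delta_1}(\Theta^*)$. Your version makes the separation quantitative in the inter-class distance. The paper's is slightly shorter and gets both halves of (b) from a single relative-openness argument. Both are fine.

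\textbf{The hedge in (b) is unnecessary.} You need not read the containment relative to $\{K\leq\epsilon_0\}$, and your remark that continuity of $K$ does not give containment is mistaken. Since $K$ is uniformly continuous on the compact set $\Theta$ and vanishes on $\Theta^*$, there is $\delta'>0$ with $K<\epsilon_0$ on $N_{\delta'}(\Theta^*_i)$ for every $i$. You also have $p_i>1/2$ on $N_{2\delta_1}(\Theta^*_i)$. Hence $N_{\min\{\delta_1,\delta'\}}(\Theta^*_i)\subseteq\Theta_i$.

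Replacing $\delta_1$ by this minimum after $\epsilon_0$ is fixed disturbs neither (a) nor the choice of $\epsilon_0$, so there is no circularity. Your proof in fact needs the full containment: it is the same fact you invoke for the interior-point claim, and in (c) when you say that near $\Theta^*_i$ the set $\Theta_i$ agrees with $\Theta$ locally. It is also exactly how the paper uses (b) in (c) and in Corollary~\ref{tightmulti}(b).

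\textbf{Two minor points.} When $n=1$, $D$ is undefined, and $p_1\equiv 1$ works. For analyticity of $f_i$, note that the shift $\log(Q_i/Q_1)$ is bounded on the support of $P_*$, by finiteness of $Y$ and common support. This keeps the envelope $M$ in $L^2(P_*)$.
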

\begin{proof}
    The map $\t\mapsto Q_\t$ is continuous on $\Theta$ (Lemma \ref{regularity}(c), whose proof does not use weak identification), and each class is the preimage of a single distribution under this map restricted to the compact set $\Theta^*$, so the classes are compact. The classes are disjoint because $Q_1,\hdots,Q_n$ are distinct. By Tietze's extension theorem and the Weierstrass approximation theorem, there are polynomials $p_i$ with $p_i\geq 1$ on $\Theta^*_i$ and $p_i\leq -1$ on $\Theta^*_j$ for $j\neq i$. The compact sets $\Theta\cap\{p_i\geq 0\}\cap\{p_j\geq 0\}$ for $i\neq j$ and $\Theta\setminus\bigcup_i\{p_i>0\}$ do not intersect $\Theta^*$, so $K\geq c$ on their union for some $c>0$. Take $\e_0<\min\{\e,c\}$. Every $\t\in\Theta$ with $K(\t)\leq\e_0$ then has $p_i(\t)>0$ for exactly one $i$ and $p_j(\t)<0$ for all $j\neq i$, which gives (a).
 
    For (b), the set $O_i:=\{\t\in\Theta:K(\t)<\e_0, p_i(\t)>0\}$ is relatively open in $\Theta$, lies in $\Theta_i$, and contains the compact set $\Theta^*_i$, so $O_i$ contains $N_{\d_1}(\Theta^*_i)$ for some $\d_1>0$, which can be taken equal for all $i$. Since $\Theta$ is convex with nonempty interior, every point of $\Theta^*_i$ is a limit of interior points of $\Theta$ that lie in $O_i$, and these are interior points of $\Theta_i$.
 
    For (c), $\EE_i$ satisfies weak identification because $\Theta_i\cap\Theta^*=\Theta^*_i$, renormalizability by Assumption \ref{fc2multi}(iv), and analyticity because $f_i-f_1=\log(Q_i/Q_1)$ depends only on $y$ and is bounded on the support of $P_*$ (Assumption \ref{fc1}(i)). The parameter space $\Theta_i$ is defined by the analytic inequalities $q_1\geq 0,\hdots,q_k\geq 0$, $\e_0-K\geq 0$, and $p_i\geq 0$, so Theorem \ref{partitiontheta2} and Lemma \ref{image} apply with $\e_0$ in place of $\e$ and with $p_i$ added to $\F'$, to the units in Theorem \ref{partitiontheta2}(a), and to the units that define $\Gamma^+$. The proof of Theorem \ref{main1} uses convexity of the parameter space only in Proposition \ref{volumeequiv}, to find interior points of the parameter space near each point of $\Theta^*$, which (b) provides. For $\t\in\Theta^*_i$ and $r<\d_1$, (b) gives $\Theta\cap B(\t,r)\subseteq\Theta_i$, so $\t$ has the same near-explanations, and hence the same learning coefficient, in $\EE_i$ as in the original environment. The concentration set of $\EE_i$ is therefore $\Theta^{**}_i$.
\end{proof}
 
\begin{lemma}\label{nodominated}
    Let $Y_*:=\{y\in Y:P_*(y)>0\}$, let $\ell(y):=(\log(Q_j(y)/Q_1(y)))_{j\leq n}\in\R^n$ for $y\in Y_*$, and let $\V$ be the linear span of $\{\ell(y):y\in Y_*\}$. For every $i\leq n$, some $v\in\V$ satisfies $v_i>v_j$ for all $j\neq i$.
\end{lemma}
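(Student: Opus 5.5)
The plan is to reduce the claim to a strict linear-separation problem in $\R^{Y_*}$ and use Gordan's theorem of the alternative. The obstruction will then be ruled out by the weighted AM--GM inequality together with the fact that $Q_1,\hdots,Q_n$ are distinct probability distributions. By common support (Assumption \ref{fc1}(i)), every $Q_j$ is strictly positive on $Y_*$ and vanishes off $Y_*$. Hence $\ell$ is well defined, each $Q_j$ is a probability vector on $Y_*$, and $\sum_{y\in Y_*}Q_j(y)=1$.

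First, rewrite the span. Every element of $\V$ has the form $v(\alpha):=\sum_{y\in Y_*}\alpha(y)\ell(y)$ for some $\alpha\in\R^{Y_*}$. Fix $i$ and, for $j\neq i$, let $w_j\in\R^{Y_*}$ be the vector $w_j(y):=\ell_i(y)-\ell_j(y)=\log(Q_i(y)/Q_j(y))$. Then $v(\alpha)_i-v(\alpha)_j=\langle\alpha,w_j\rangle$. So it suffices to find $\alpha$ with $\langle\alpha,w_j\rangle>0$ for every $j\neq i$. If $n=1$ the claim is vacuous (take $v=0$).

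By Gordan's theorem, if no such $\alpha$ exists, then there are weights $\lambda_j\geq 0$ for $j\neq i$, not all zero, with $\sum_{j\neq i}\lambda_jw_j=0$. Normalize so that $\sum_j\lambda_j=1$. Then for every $y\in Y_*$ we have $\log Q_i(y)=\sum_{j\neq i}\lambda_j\log Q_j(y)$, that is, $Q_i(y)=\prod_{j\neq i}Q_j(y)^{\lambda_j}$.

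The main step, and the only non-formal one, is to derive a contradiction from this geometric-mixture identity. By the weighted AM--GM inequality, $\prod_jQ_j(y)^{\lambda_j}\leq\sum_j\lambda_jQ_j(y)$ for each $y$. Summing over $Y_*$ gives
\[
1=\sum_{y\in Y_*}Q_i(y)\leq\sum_j\lambda_j\sum_{y\in Y_*}Q_j(y)=1.
\]
So equality holds in AM--GM at every $y\in Y_*$. By strict concavity of $\log$, equality forces the values $Q_j(y)$, for $j$ with $\lambda_j>0$, to coincide at each $y$. Their common value then equals the geometric mean, namely $Q_i(y)$. Choosing any $j\neq i$ with $\lambda_j>0$ yields $Q_j=Q_i$ on $Y_*$, and hence on $Y$, contradicting the distinctness in Assumption \ref{fc2multi}(i).

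Hence the required $\alpha$ exists, and $v(\alpha)\in\V$ satisfies $v_i>v_j$ for all $j\neq i$. This argument does not use the equality of KL divergences across classes; it uses only positivity on $Y_*$ and distinctness.
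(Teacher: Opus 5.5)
Your proposal is correct and follows essentially the same route as the paper. Both use Gordan's theorem to turn the failure of the claim into the identity $Q_i=\prod_{j\neq i}Q_j^{\lambda_j}$ on $Y_*$, and then sum over $Y_*$ and use an equality case to force $Q_j=Q_i$; the only difference is that you use pointwise weighted AM--GM where the paper uses H\"older's inequality, which gives equal values directly instead of proportionality.
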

\begin{proof}
    Suppose no such $v$ exists for some $i$. By Gordan's theorem of the alternative, applied to the linear map $v\mapsto(v_i-v_j)_{j\neq i}$ on $\V$, there exist weights $w_j\geq 0$, $j\neq i$, summing to one, with $\sum_{j\neq i}w_j(v_i-v_j)=0$ for all $v\in\V$. Taking $v=\ell(y)$ gives $\log Q_i(y)=\sum_{j\neq i}w_j\log Q_j(y)$ for all $y\in Y_*$. Every $Q_j$ is positive exactly on $Y_*$ (Assumption \ref{fc1}(i)), so H\"older's inequality gives
    \[
    1=\sum_{y\in Y_*}Q_i(y)=\sum_{y\in Y_*}\prod_{j\neq i}Q_j(y)^{w_j}\leq\prod_{j\neq i}\left(\sum_{y\in Y_*}Q_j(y)\right)^{w_j}=1.
    \]
    Equality in H\"older's inequality forces the $Q_j$ with $w_j>0$ to be proportional on $Y_*$, hence equal, and then equal to $Q_i$. This contradicts distinctness of $Q_1,\hdots,Q_n$.
\end{proof}
 
\begin{proof}[Proof of Theorem \ref{main1multi}]
    Let $\e_0$ and $\Theta_1,\hdots,\Theta_n$ be as in Lemma \ref{classes}, and fix $\t_0\in\Theta^*_1$. For $i\leq n$, let $L^{(i)}_t:=\sum_{s\leq t}\log(Q_i(y_s)/Q_1(y_s))$ and $Z^{(i)}_t(S):=\int_Se^{-\sum_{s\leq t}f_i(y_s,\t)}\mu_0(\t)d\t$ for Borel $S\subseteq\Theta$. Since $tK_t(\t)=\sum_{s\leq t}f_i(y_s,\t)-L^{(i)}_t$,
    \begin{equation}\label{eq:factor}
    Z_t(S)=e^{L^{(i)}_t}Z^{(i)}_t(S)
    \end{equation}
    for every Borel $S\subseteq\Theta$ and every $i\leq n$. In particular, $Z^{(i)}_t(S\cap\Theta_i)/Z^{(i)}_t(\Theta_i)=\mu_t(S\cap\Theta_i)/\mu_t(\Theta_i)$ is the posterior probability of $S\cap\Theta_i$ in $\EE_i$. Let $U\supseteq\Theta^{**}$ be open and $V:=\Theta\setminus U$. By Lemma \ref{classes}(a), $\mu_t(V)\leq\mu_t(\{K>\e_0\})+\sum_{i=1}^n\mu_t(V\cap\Theta_i)$.
 
    Since $Z_t(\Theta)\geq Z_t(\Theta_i)$, \eqref{eq:factor} gives $\mu_t(V\cap\Theta_i)\leq Z^{(i)}_t(V\cap\Theta_i)/Z^{(i)}_t(\Theta_i)$, the posterior probability of $\Theta_i\setminus U$ in $\EE_i$. The set $U\cap\Theta_i$ is relatively open in $\Theta_i$ and contains $\Theta^{**}_i$, the concentration set of $\EE_i$. Theorem \ref{main1}(i), applied in $\EE_i$ (Lemma \ref{classes}(c)), gives $\e_i>0$ with $\mu_t(V\cap\Theta_i)\leq t^{-\e_i}$ for all large $t$, almost surely.
 
    The second bound of Lemma \ref{lilbound}, with $\e'=\e_0$, uses neither weak identification nor renormalizability, so the AM-GM inequality gives $Z_t(\{K>\e_0\})\leq e^{-t\e_0/2}(\log t)^{\Lambda^2}$ for all large $t$, almost surely. By \eqref{eq:factor} with $i=1$ and \eqref{eq:evidence} applied in $\EE_1$, we have $Z_t(\Theta)\geq Z^{(1)}_t(\Theta_1)\geq t^{-\underline{\lambda}_1-1}$ for all large $t$. Hence $\mu_t(\{K>\e_0\})\leq t^{-1}$ for all large $t$, almost surely.
 
    With $\e(U):=\frac{1}{2}\min\{\e_1,\hdots,\e_n,1\}$, almost surely $\mu_t(V)\leq(n+1)t^{-2\e(U)}\leq t^{-\e(U)}$ for all large $t$.
\end{proof}
 
\begin{proof}[Proof of Corollary \ref{tightmulti}]
    Keep the notation of the previous proof, and let $\d_1$ be as in Lemma \ref{classes}(b). Fix $i$ and $\d\in(0,\d_1]$. A point of $\Theta_i$ within $\d$ of $\Theta^{**}_j$ for some $j\neq i$ would lie in $\Theta_j$ by Lemma \ref{classes}(b), and $\Theta_j$ is disjoint from $\Theta_i$. Hence $\Theta_i\setminus N_\d(\Theta^{**}_i)\subseteq\Theta\setminus N_\d(\Theta^{**})$, and Theorem \ref{main1multi}, applied to $U:=N_\d(\Theta^{**})$, gives $\e_\d>0$ with
    \begin{equation}\label{eq:reduce}
    \mu_t(N_\d(\Theta^{**}_i))\geq\mu_t(\Theta_i)-t^{-\e_\d}
    \end{equation}
    for all large $t$ almost surely. By \eqref{eq:reduce} with $\d=\d_1/m$ for every $m\in\N$, the first claim of (a) follows once the event $A_i:=\{\limsup_t\mu_t(\Theta_i)=1\}$ has probability one: along a sequence $t_k$ with $\mu_{t_k}(\Theta_i)\to 1$, \eqref{eq:reduce} gives $\mu_{t_k}(N_{\d_1/m}(\Theta^{**}_i))\to 1$ for every $m$, and hence $\mu_{t_k}(N_\d(\Theta^{**}_i))\to 1$ for every $\d>0$.
 
    Replacing finitely many observations $y_1,\hdots,y_k$ by other points of $Y_*$ multiplies the unnormalized posterior density at every date $t\geq k$ by the same function of $\t$, which is continuous and positive on the compact set $\Theta$ and therefore bounded above and away from zero. The replacement therefore changes $1-\mu_t(\Theta_i)$ by at most a constant factor, so membership in $A_i$ does not depend on any finite number of observations. By Kolmogorov's zero-one law, $\P(A_i)\in\{0,1\}$.
 
    It suffices to show $\P(A_i) >0 $. Write $\underline{\lambda}_j$ for the learning coefficient of $\EE_j$. The lower bound \eqref{eq:evidence} and the upper bound in the proof of Theorem \ref{main1}(ii), applied in each $\EE_j$, give an almost surely finite random variable $\bar C$ with $(\log t)^{-\bar C}t^{-\underline{\lambda}_j}\leq Z^{(j)}_t(\Theta_j)\leq(\log t)^{\bar C}t^{-\underline{\lambda}_j}$ for all $j$ and all large $t$, almost surely. Fix $\eta>0$. There are a constant $\bar c$ and $T_1\in \N$ such that, with probability at least $1-\eta$, these bounds hold with $\bar C\leq\bar c$, and $Z_t(\{K>\e_0\})\leq e^{-t\e_0/4}$, for all $t\geq T_1$. Call this event $\Omega_\eta$. Let $\kappa:=\max_j|\underline{\lambda}_i-\underline{\lambda}_j|+2$ and $H_t:=\{L^{(i)}_t-L^{(j)}_t\geq\kappa\log t\text{ for all }j\neq i\}$. On $\Omega_\eta\cap H_t$, \eqref{eq:factor} gives, for $j\neq i$,
    \[
    \frac{Z_t(\Theta_j)}{Z_t(\Theta_i)}=e^{L^{(j)}_t-L^{(i)}_t}\frac{Z^{(j)}_t(\Theta_j)}{Z^{(i)}_t(\Theta_i)}\leq\frac{(\log t)^{2\bar c}}{t^2},\quad\frac{Z_t(\{K>\e_0\})}{Z_t(\Theta_i)}\leq e^{-t\e_0/4}(\log t)^{\bar c}t^{\underline{\lambda}_i},
    \]
    where the second bound uses $L^{(i)}_t=L^{(i)}_t-L^{(1)}_t\geq 0$ on $H_t$. Since $Z_t(\Theta)=Z_t(\{K>\e_0\})+\sum_jZ_t(\Theta_j)$, for every $\d'\in(0,1)$ there is a date $T_2$ such that $\mu_t(\Theta_i)>1-\d'$ on $\Omega_\eta\cap H_t$ for all $t\geq\max\{T_1,T_2\}$. Hence $\P(\mu_t(\Theta_i)>1-\d')\geq\P(H_t)-\eta$ for these $t$.
 
    The vectors $\ell(y_s)$ of Lemma \ref{nodominated} are i.i.d., bounded, and have mean zero, since every $Q_j$ attains the minimal KL divergence. Let $\Sigma$ be their covariance matrix. A vector $u$ satisfies $u^\top\Sigma u=\Var_{P_*}(u^\top\ell(y))=0$ when $u$ is orthogonal to $\V$, so the range of $\Sigma$ is $\V$. By the central limit theorem, $(L^{(j)}_t)_{j\leq n}/\sqrt{t}$ converges in distribution to $G\sim N(0,\Sigma)$, whose law has a positive density on $\V$. Since $\kappa\log t/\sqrt{t}\to 0$, the portmanteau theorem, applied to the open sets $\{v:v_i-v_j>\eta'\text{ for all }j\neq i\}$ with $\eta'\downarrow 0$, gives $\liminf_t\P(H_t)\geq p_i:=\P(G_i>G_j\text{ for all }j\neq i)$. By Lemma \ref{nodominated}, the set $\{v\in\V:v_i>v_j\text{ for all }j\neq i\}$ is a nonempty open subset of $\V$, so $p_i>0$. As $\eta$ was arbitrary, $\liminf_t\P(\mu_t(\Theta_i)>1-\d')\geq p_i$ for every $\d'\in(0,1)$. By the reverse Fatou lemma, $\mu_t(\Theta_i)>1-\d'$ for infinitely many $t$ with probability at least $p_i$, and intersecting these events over $\d'=1/k$, $k\in\N$, gives $\P(A_i)\geq p_i>0$. Hence $\P(A_i)=1$, which proves the first claim of (a).
 
    Let $n\geq 2$, $\d\in(0,\d_1]$, and $j\neq i$. By Lemma \ref{classes}, the sets $N_\d(\Theta^{**}_i)\subseteq\Theta_i$ and $N_\d(\Theta^{**}_j)\subseteq\Theta_j$ are disjoint, so $\mu_t(N_\d(\Theta^{**}_i))\leq 1-\mu_t(N_\d(\Theta^{**}_j))$, and the first claim of (a) for $j$ gives $\liminf_t\mu_t(N_\d(\Theta^{**}_i))=0$ almost surely. If $\mu_t$ converged weakly to some $\nu$ on an outcome in $A_i\cap A_j$, the portmanteau theorem would give $\nu(\overline{N_\d(\Theta^{**}_i)})=1$ and $\nu(\overline{N_\d(\Theta^{**}_j)})=1$ for all small $\d$, so $\nu(\Theta^{**}_i)=\nu(\Theta^{**}_j)=1$, which contradicts $\Theta^{**}_i$ and $\Theta^{**}_j$ disjoint.
 
    To prove (b), suppose some $\bar\t\in\Theta^{**}_i$ lies outside $F$. Choose $r\in(0,\d_1)$ with $\bar B(\bar\t,r)\cap F=\emptyset$, so that $\Theta\cap\bar B(\bar\t,r)\subseteq\Theta_i$. By the proof of Theorem \ref{main1}(ii) applied in $\EE_i$, the posterior probability of $\bar B(\bar\t,r)$ in $\EE_i$ is at least $(\log t)^{-C'}$ for all large $t$, almost surely, for some almost surely finite $C'$. Hence $\mu_t(\bar B(\bar\t,r))\geq(\log t)^{-C'}\mu_t(\Theta_i)\geq\frac{1}{2}(\log t)^{-C'}$ for infinitely many $t$, by (a). The open set $U:=\Theta\setminus\bar B(\bar\t,r)$ contains $F$, so by hypothesis $\mu_t(\bar B(\bar\t,r))\leq t^{-\e}$ for all large $t$, almost surely, which contradicts. Hence $\Theta^{**}\subseteq F$.
\end{proof}
 
\begin{proof}[Proof of Theorem \ref{main2multi}]
    Let $\pi$ be an optimal belief-based policy with $\P^{\mu_0}_\pi(\exists T:a_t=a\text{ for all }t>T)>0$, and construct the history as in Lemma \ref{coupling} with $\mu=\mu_0$. As in Appendix \ref{app:main2}, there is $T$ with $\P(C_T)>0$, where $C_T:=\{a_t=a\text{ for all }t>T\}$. On $C_T$, $\mu^c_{T+s}=\hat\mu^T_s$ for all $s\geq 0$, and $G_a(\mu^c_t)=0$ for all $t\geq T$ almost surely, by Lemmas \ref{coupling} and \ref{dp}(c).
 
    The prior $\mu^c_T$ has a smooth and strictly positive density with respect to $\mu_0$, so the exogenous environment $(\Theta,\{Q^a_\t\}_{\t\in\Theta},\mu^c_T,P^a_*)$ satisfies Assumptions \ref{fc1} and \ref{fc2multi} with the same classes and class concentration sets, which do not depend on the prior. Conditional on $\G_T$, the process $(\hat\mu^T_s)_{s\geq 0}$ is the posterior of this environment under i.i.d. data, so by Corollary \ref{tightmulti}(a), almost surely, for every class $i$ there is a sequence $s_k\to\infty$ with $\hat\mu^T_{s_k}(N_\d(\Theta^{**}_i(a)))\to 1$ for every $\d>0$.
 
    Fix an outcome in $C_T$ at which all of these properties hold, and fix a class $i$. Since $\Delta(\Theta)$ is weakly compact, a subsequence of $(\mu^c_{T+s_k})_k$ converges weakly to some $\mu_i\in\Delta(\Theta)$. By the portmanteau theorem, $\mu_i(\overline{N_\d(\Theta^{**}_i(a))})=1$ for every $\d>0$, and since $\Theta^{**}_i(a)$ is closed, $\mu_i\in\Delta(\Theta^{**}_i(a))$. By Lemma \ref{dp}(b), $G_a(\mu_i)=0$. The proof of Lemma \ref{bridge} uses only that the parameters in the support of the belief induce the same distribution under $a$, and every parameter in $\Theta^*_i(a)$ induces $Q^a_i$, so $a\in BR(\mu_i)$. Since $i$ was arbitrary, $a$ is a uniform singular equilibrium.
\end{proof}

\section{Tightest Convergence}\label{supp:tightest}

Theorem \ref{main1} identifies $\Theta^{**}$ as the smallest closed set to which the posterior concentrates almost surely at a polynomial rate. In this section, we drop the rate. We work in the exogenous learning framework, use the notation of Sections \ref{supp:volume} and \ref{supp:main1}, and recall that $\underline{\lambda}=\min_{\Theta^*}\lambda(\cdot)$.

By Proposition \ref{volumeequiv}, the near-explanations of $\t\in\Theta^*$ have volume of order $\e^{\lambda(\t)}\left(\log\frac{1}{\e}\right)^{m^g(\t)-1}$, so $m(\t):=m^g(\t)$ does not depend on the resolution. We call $m(\t)$ the \textbf{multiplicity} of $\t$, as in \citet{watanabe}.

\begin{defn}[Concurrence set and tightest set]\label{tightestdef}
    Let $\bar m:=\max_{\t\in\Theta^{**}}m(\t)$. The \textbf{concurrence set} is $\Theta^{**}_m:=\{\t\in\Theta^{**}:m(\t)=\bar m\}$. For $\t\in\Theta^{**}$, let
    \[
    \sigma^2(\t):=\lim_{\rho\downarrow 0}\sup\left\{\frac{\Var_{P_*}(f(y,\t'))}{K(\t')}:\t'\in\Theta\cap B(\t,\rho), K(\t')>0\right\}
    \]
    be the local ratio of the variance of the log-likelihood ratio to its mean, which is finite by Assumption \ref{fc2}(iv). A closed set $F\subseteq\Theta$ \textbf{attracts the posterior} if $\mu_t(U)\to 1$ almost surely for every open $U\supseteq F$. The \textbf{tightest set} $\Theta^\dag$ is the intersection of all closed sets that attract the posterior.
\end{defn}

\begin{proposition}[Tightest set]\label{tightestprop}
    Consider the exogenous learning framework.
    \begin{enumerate}[(i)]
        \item $\Theta^\dag$ is nonempty and attracts the posterior.
        \item $\Theta^{**}_m\subseteq\Theta^\dag\subseteq\Theta^{**}$.
        \item $\Theta^\dag\subseteq\{\t\in\Theta^{**}:m(\t)+\frac{1}{2}\sigma^2(\t)\geq\bar m\}$. In particular, if $m(\t)+\frac{1}{2}\sigma^2(\t)<\bar m$ for every $\t\in\Theta^{**}\setminus\Theta^{**}_m$, then $\Theta^\dag=\Theta^{**}_m$.
    \end{enumerate}
\end{proposition}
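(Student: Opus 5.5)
For (i), I would first show that the family of closed sets that attract the posterior is closed under intersection. Take two such sets $F_1,F_2$ and an open $U\supseteq F_1\cap F_2$. The compact sets $F_1\setminus U$ and $F_2\setminus U$ are disjoint, so there are open $U_1\supseteq F_1$ and $U_2\supseteq F_2$ with $U_1\cap U_2\subseteq U$. Then $\mu_t(U)\geq\mu_t(U_1)+\mu_t(U_2)-1\to 1$ almost surely.

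For an arbitrary family and an open $U\supseteq\Theta^\dag$, the compact set $\Theta\setminus U$ is covered by the open complements of the members of the family. A finite subcover gives finitely many attracting sets whose intersection lies in $U$, and the finite case applies. Since the definition is quantified separately for each $U$, no uncountable union of null events arises. Hence $\Theta^\dag$ attracts the posterior. It is nonempty, because otherwise $U=\emptyset$ would give $\mu_t(\emptyset)\to 1$. By Theorem \ref{main1}(i), $\Theta^{**}$ attracts the posterior, which gives $\Theta^\dag\subseteq\Theta^{**}$, the upper half of (ii).

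For $\Theta^{**}_m\subseteq\Theta^\dag$, fix $\bar\t\in\Theta^{**}_m$ and $r>0$. I would show that $\limsup_t\mu_t(B(\bar\t,r))>0$ almost surely, so that no closed set avoiding $\bar\t$ attracts the posterior.
\begin{itemize}
\item Lower bound on the evidence of the ball: rerun \eqref{eq:evidence} keeping the logarithm from Lemma \ref{monomialvolume}(a), which gives $Z_t(B(\bar\t,r))\geq ce^{-X_t}t^{-\underline{\lambda}}(\log t)^{\bar m-1}$.
\item Upper bound on the total evidence: rerun \eqref{eq:chartupper} chart by chart, which gives $Z_t(\Theta)\leq Ce^{\max_\g\|\xi^\g_t\|_\infty^2/2}t^{-\underline{\lambda}}(\log t)^{\bar m-1}$ plus exponentially small terms.
\item Comparing the two bounds requires the fluctuations to be bounded rather than of law-of-iterated-logarithm size. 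By the central limit theorem in $C(M_\g)$ used in Lemma \ref{lilbound}, the vector $(\|\xi^\g_t\|_\infty)_\g$ is tight. Hence $\P(\mu_t(B(\bar\t,r))\geq c_0)\geq p>0$ for all large $t$.
\item The reverse Fatou lemma then gives $\mu_t(B(\bar\t,r))\geq c_0$ infinitely often with probability at least $p$.
\item To upgrade this to probability one, use the zero-one argument of Corollary \ref{tightmulti}: changing finitely many observations multiplies the posterior density by a function bounded above and away from zero, so the event is a tail event.
\end{itemize}

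For (iii), fix $\t\in\Theta^{**}$ with $m(\t)+\frac{1}{2}\sigma^2(\t)<\bar m$. I would show that $\mu_t(B(\t,\rho))\to 0$ almost surely for small $\rho$. The resulting closed set, $\Theta^{**}$ minus small balls around finitely many such points, then attracts the posterior by compactness.

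The numerator is $Z_t(B(\t,\rho))$. Write $tK_t=tK-\sqrt{tK}\,\psi_t$. By the compact law of the iterated logarithm, applied with the local variance ratio, eventually $|\psi_t|\leq(\sigma(\t)+\eta)\sqrt{2\log\log t}$ uniformly on the ball, including near $\Theta^*$. Near $\Theta^*$ the process $\psi_t$ is handled through $\xi^\g_t/\sqrt{\cdot}$ via the standard form, Theorem \ref{sf}.

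I would then split the ball into shells according to the size of $tK$.
\begin{itemize}
\item Where $tK\geq M\log\log t$, the term $tK$ dominates the fluctuation.
\item Where $tK<M\log\log t$, bound the exponent by $tK-\sqrt{tK}\,|\psi_t|\geq -\frac{1}{4}\psi_t^2\cdot(\text{const})$. Completing the square, this gives $e^{-tK_t}\leq e^{-(1-\d)tK}(\log t)^{\sigma^2/2+\eta}$, after choosing the completing-square weight against the shell size.
\end{itemize}
Integrating with Proposition \ref{volumeequiv} yields $Z_t(B(\t,\rho))\lesssim t^{-\underline{\lambda}}(\log t)^{m(\t)-1+\sigma^2/2+\eta}$.

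For the denominator, the concurrence set gives $Z_t(\Theta)\gtrsim e^{-X_t}t^{-\underline{\lambda}}(\log t)^{\bar m-1}$, with $e^{-X_t}\geq e^{-\Lambda\sqrt{2\log\log t}}$, which is subpolynomial in $\log t$. The gap $\bar m-m(\t)-\sigma^2/2>0$ then forces the ratio to $0$.

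The main obstacle is this last step: obtaining the sharp exponent $\frac{1}{2}\sigma^2$ rather than the crude $\Lambda^2$ used in the proof of Theorem \ref{main1}. I would first try to handle it by a dyadic decomposition in $tK$ and a pointwise law of the iterated logarithm with local variance. If that fails, I would fall back on exponential moment bounds, $\E[e^{-f}]\approx e^{-K(1-\sigma^2/2)}$, combined with Markov's inequality and Borel--Cantelli along geometric times.
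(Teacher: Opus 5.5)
Your proposal is correct and follows the paper's proof essentially step for step. It uses the compactness argument for (i) and tightness of the empirical processes plus the reverse Fatou lemma for (ii); there a positive probability already contradicts almost-sure attraction, so the zero-one upgrade is unnecessary. For (iii) it uses the compact law of the iterated logarithm for $\psi_t\circ g=u^{\lceil k\rceil-k}\xi^\g_t$ on the chart boxes, whose normalized $\limsup$ is the largest local standard deviation, followed by completing the square.

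The obstacle you flag dissolves without shells. Your displayed bound is valid exactly when $1-\d$ is small, that is, $tK-\sqrt{tK}\,\psi_t\geq\eta tK-\psi_t^2/(4(1-\eta))$ with a small fixed $\eta$. Integrating $e^{-\eta tK}$ then changes only constants, and this is precisely what the paper does.
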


Part (i) says that $\Theta^\dag$ is the smallest closed set to which the posterior concentrates almost surely. Since a change of prior changes the posterior probability of every set by at most a constant factor, $\Theta^\dag$ does not depend on the prior. Part (ii) says that the smallest set that attracts the posterior is bracketed by $\Theta^{**}_m$ and $\Theta^{**}$. Part (iii) says that, provided fluctuations of the data are small enough, the concurrence set is the smallest set that attracts the posterior. The concurrence set has a clear economic interpretation, which the following example illustrates.

\begin{example}[Concurrence set]
    Consider the exogenous setting of Example \ref{staticexo}, but suppose the agent observes only her output, which she models as ability times circumstances, $Q_{(q,c)}=N(qc,2)$, with $P_*=N(0,1)$ on $\Theta=[0,1]^2$, so she overstates the variance of the noise. Then $K=\frac{1}{4}q^2c^2$, and self-blame $\{q=0\}$ and the narrative $\{c=0\}$, that her circumstances are zero, each require one coincidence with robustness $1$. Hence $\Theta^{**}=\Theta^*$, and Theorem \ref{main1} does not select between the two narratives. At the origin, where both narratives hold, the near-explanations $\{qc\leq 2\sqrt{\e}\}$ have area of order $\sqrt{\e}\log\frac{1}{\e}$ rather than $\sqrt{\e}$, so $m=2$ there and $m=1$ elsewhere on $\Theta^*$. The log-likelihood ratio $f=\frac{1}{4}(q^2c^2-2qcy)$ has variance $\Var_{P_*}(f)=\frac{1}{4}q^2c^2=K$, so $\sigma^2=1$ and $m+\frac{1}{2}\sigma^2=\frac{3}{2}<2=\bar m$ off the origin. By Proposition \ref{tightestprop}(iii), $\Theta^\dag=\Theta^{**}_m=\{(0,0)\}$, and the concurrence set selects both narratives at once, that her ability and her circumstances are both zero.\footnote{In the correctly specified model $Q_{(q,c)}=N(qc,1)$, $\sigma^2=2$, so part (iii) does not apply, but the conclusion still holds. The likelihood depends on $(q,c)$ only through $x=qc$, whose posterior puts mass tending to one on $[0,(\log t)/\sqrt{t}]$ by the law of the iterated logarithm, and given $x$, the prior of $q$ has density proportional to $\mu_0(q,x/q)/q$ on $[x,1]$. Hence $\mu_t(\{q>\rho\})\leq C\log(1/\rho)/\log t$ for all large $t$, and likewise for $c$.} Intuitively, the concurrence set breaks ties between narratives, so that the agent comes to believe that both narratives hold simultaneously, at their intersection.
\end{example}

\begin{proof}[Proof of Proposition \ref{tightestprop}]
    (i) $\Theta^\dag$ is closed as an intersection of closed sets, and $\Theta^{**}$ attracts the posterior by Theorem \ref{main1}. Let $U\supseteq\Theta^\dag$ be open. Every $\t$ in the compact set $\Theta\setminus U$ lies outside some closed set $F_\t$ that attracts the posterior. Choose $\rho_\t\in(0,\dist(\t,F_\t))$. The open set $\R^d\setminus\bar B(\t,\rho_\t)$ contains $F_\t$, so $\mu_t(\bar B(\t,\rho_\t))\to 0$ almost surely. Finitely many of the balls $B(\t,\rho_\t)$ cover $\Theta\setminus U$, so $\mu_t(\Theta\setminus U)\to 0$ almost surely, and $\Theta^\dag$ attracts the posterior. If $\Theta^\dag$ were empty, $U=\emptyset$ would give $\mu_t(\emptyset)\to 1$, which contradicts.

    (ii) Since $\Theta^{**}$ is closed (Lemma \ref{lambdaprops}) and attracts the posterior, $\Theta^\dag\subseteq\Theta^{**}$. For the lower inclusion, fix $\bar\t\in\Theta^{**}_m$ and $r>0$. Let $u_0$, $\g_0$, and $S$ be as in the lower bound of Proposition \ref{volumeequiv} for $\bar\t$, so that $m_{\g_0}(u_0)=\bar m$. Then Lemma \ref{monomialvolume}(a) keeps the factor $(\log t)^{\bar m-1}$ in \eqref{eq:evidence}, and
    \begin{equation}\label{eq:conclower}
    Z_t(\Theta\cap B(\bar\t,r))\geq ce^{-X_t}t^{-\underline{\lambda}}(\log t)^{\bar m-1}.
    \end{equation}
    Let $\g\in\Gamma_*\cap\Gamma^+$. Since $g(0)\in\Theta$ by Lemma \ref{image} and $K(g(0))=0$, $g(0)\in\Theta^*$, so every ratio $(h_{\g,j}+1)/2k_{\g,j}$ with $k_{\g,j}>0$ is at least $\lambda(g(0))\geq\underline{\lambda}$, and if $\underline{\lambda}$ is attained, the number of $j$ attaining it is $m_\g(0)\leq m(g(0))\leq\bar m$. Hence Lemma \ref{monomialvolume}(b) gives \eqref{eq:chartupper} with $e^{Y_t^2/2}(\log t)^{\bar m-1}$ in place of $(\log t)^{d+\Lambda^2}$, where $Y_t:=\max_{\g\in\Gamma_*\cap\Gamma^+}\|\xi^\g_t\|_\infty$. Bounding the remaining terms as in the proof of Theorem \ref{main1}(i),
    \[
    Z_t(\Theta)\leq C\left(e^{Y_t^2/2}t^{-\underline{\lambda}}(\log t)^{\bar m-1}+e^{(Y_t'^2-t\e_{\min})/2}\right),\quad Y_t':=\sup_{\t\in\Theta:K(\t)\geq\e_{\min}}|\psi_t(\t)|.
    \]
    By the central limit theorem in the proof of Lemma \ref{lilbound}, the processes $\xi^\g_t$ and $\psi_t$ are tight, so for every $\eta>0$ there is $M$ with $\P(Y_t\leq M, Y_t'\leq M)\geq 1-\eta$ for all $t$. Since $X_t\leq Y_t$, the two displays give $\mu_t(B(\bar\t,r))\geq c_M:=ce^{-M-M^2/2}/(2C)$ on this event, for all large $t$. By the reverse Fatou lemma, $\mu_t(B(\bar\t,r))\geq c_M$ for infinitely many $t$ with probability at least $1-\eta$. If $\bar\t\notin\Theta^\dag$, then for $r$ with $\bar B(\bar\t,r)\cap\Theta^\dag=\emptyset$, part (i) gives $\mu_t(B(\bar\t,r))\to 0$ almost surely, a contradiction.

    (iii) Let $\t\in\Theta^{**}$ with $m(\t)+\frac{1}{2}\sigma^2(\t)<\bar m$, and choose $\d,\eta\in(0,1)$ with $m(\t)+\d+(\sigma^2(\t)+\d)/(2(1-\eta))<\bar m$. By the proof of the upper bound of Proposition \ref{volumeequiv}, for small $r$, the preimage of $\Theta\cap B(\t,r)$ in the charts of $\Gamma^+$ is contained in finitely many compact boxes $S_i\subseteq M_{\g_i}$ around points $u_i\in g^{-1}(\t)$, with $\g_i\in\Gamma_*\cap\Gamma^+$, $\lambda_{\g_i}(u_i)\geq\underline{\lambda}$, and $m_{\g_i}(u_i)\leq m(\t)$ if equality holds. Shrinking the boxes, $g(S_i)\subseteq B(\t,\rho)$ for a given $\rho>0$. Fix $i$, write $k:=k_{\g_i}$, and let $\tilde\xi_t(u):=u^{\lceil k\rceil-k}\xi^{\g_i}_t(u)$, so that, by Theorem \ref{sf} and the AM-GM inequality,
    \[
    tK_t(g(u))=tu^{2k}-\sqrt{t}u^k\tilde\xi_t(u)\geq\eta tu^{2k}-\frac{\tilde\xi_t(u)^2}{4(1-\eta)}.
    \]
    Where $u^{2k}>0$, the summands of $\tilde\xi_t$ have variance $\Var_{P_*}(f(y,g(u)))/K(g(u))$, so for small $\rho$, their variance is at most $\sigma^2(\t)+\d/2$ on $S_i$, by continuity. Multiplication by $u^{\lceil k\rceil-k}$ is a bounded linear operator on $C(S_i)$, so $\tilde\xi_t$ also satisfies the compact law of the iterated logarithm, and $\limsup_t\|\tilde\xi_t\|_\infty/\sqrt{2\log\log t}$ equals the largest standard deviation of its summands on $S_i$ \citep{ledouxtalagrand}. Hence, almost surely for all large $t$, $e^{\|\tilde\xi_t\|_\infty^2/(4(1-\eta))}\leq(\log t)^{(\sigma^2(\t)+\d)/(2(1-\eta))}$, and Theorem \ref{partitiontheta2}(d) and Lemma \ref{monomialvolume}(b) give
    \[
    Z_t(\Theta\cap B(\t,r))\leq Ct^{-\underline{\lambda}}(\log t)^{m(\t)-1+(\sigma^2(\t)+\d)/(2(1-\eta))}.
    \]
    By \eqref{eq:conclower} and Lemma \ref{lilbound}, $Z_t(\Theta)\geq ct^{-\underline{\lambda}}(\log t)^{\bar m-1-\d}$ for all large $t$, almost surely. By the choice of $\d$ and $\eta$, $\mu_t(B(\t,r))\to 0$ almost surely. Then the closed set $\Theta^\dag\setminus B(\t,r)$ attracts the posterior, since $U\cup B(\t,r)$ is open and contains $\Theta^\dag$ whenever $U\supseteq\Theta^\dag\setminus B(\t,r)$ is open. Hence $\t\notin\Theta^\dag$. The last claim follows from (ii).
\end{proof}
 
\section{Joint Coincidences}\label{supp:joint}
 
Definition \ref{robustdef} treats the coincidences of a narrative one at a time. Each term $|\phi_\ell|^{2\rho_\ell}$ involves a single coincidence, so a violation of any one coincidence increases the KL divergence regardless of whether the others hold. We now replace each term by a \textit{test} $\prod_{\ell=1}^c|\phi_\ell|^{2\tau_\ell}$, which may involve several coincidences, and show that Theorem \ref{twosided} extends with $\frac{1}{2}\sum_{\ell=1}^c1/\rho_\ell$ replaced by the value of a linear program.
 
\begin{defn}[Joint robustness]\label{jointrobustdef}
    Fix $a\in A$ and suppress it. Let $I$ be a narrative with coincidences $\phi_1(\t)=0,\hdots,\phi_c(\t)=0$ at $\t_0\in I$. A \textbf{test} is a nonzero vector $\tau=(\tau_1,\hdots,\tau_c)\geq 0$, and we write $|\phi(\t)|^{2\tau}:=\prod_{\ell=1}^c|\phi_\ell(\t)|^{2\tau_\ell}$. For a nonempty finite set $\T$ of tests, a vector $\rho\in(0,\infty]^c$ is \textbf{compatible} with $\T$ if $\sum_{\ell=1}^c\tau_\ell/\rho_\ell\geq 1$ for every $\tau\in\T$, where $1/\infty:=0$ and $\d^{1/\infty}:=1$ in $N^r_\d(\rho)$. Let $v(\T):=\min\{\sum_{\ell=1}^c1/\rho_\ell:\rho\text{ is compatible with }\T\}$, which is attained and positive.
    \begin{enumerate}[(a)]
        \item $\T$ is a \textbf{lower test set} of $I$ if, for some such $\t_0$ and $\phi$, (i) $K\leq C\sum_{\tau\in\T}|\phi|^{2\tau}$ on $\Theta$ near $\t_0$, and (ii) for some $\rho$ compatible with $\T$ that attains $v(\T)$, $\liminf_{\d\downarrow 0}\Leb(\Theta\cap N^r_\d(\rho))/\Leb(N^r_\d(\rho))>0$ for all small $r>0$.
        \item $\T$ is an \textbf{upper test set} of $I$ if, for some such $\t_0$ and $\phi$, $K\geq c_0\sum_{\tau\in\T}|\phi|^{2\tau}$ on $\Theta$ near $\t_0$.
    \end{enumerate}
    If $\T$ is both a lower and an upper test set with respect to the same coincidences, then $I$ is \textbf{jointly robust} with tests $\T$.
\end{defn}
 
A vector $\rho$ is a lower (upper) robustness profile of $I$ if and only if $\{\rho_\ell e_\ell:\ell\leq c\}$ is a lower (upper) test set of $I$, and $v(\{\rho_\ell e_\ell:\ell\leq c\})=\sum_\ell1/\rho_\ell$, so Definition \ref{jointrobustdef} subsumes Definition \ref{robustdef}.
 
\begin{theorem}\label{jointtwosided}
    Fix $a\in A$ and suppress it. Let $I$ be a narrative. For any lower test set $\underline{\T}$ and any upper test set $\bar\T$ of $I$, $\frac{1}{2}v(\bar\T)\leq\lambda(I)\leq\frac{1}{2}v(\underline{\T})$. In particular, if $I$ is jointly robust with tests $\T$, then $\lambda(I)=\frac{1}{2}v(\T)$.
\end{theorem}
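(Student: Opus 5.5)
The plan is to follow the proof of Theorem \ref{twosided} in Appendix \ref{app:twosided} step for step, with the single coordinate box replaced by a polyhedral region in logarithmic coordinates. As there, $\lambda$ is constant on $I$, so it suffices to bound $\lambda(\t_0)$ at the point $\t_0$ of Definition \ref{jointrobustdef}. I would complete $\phi$ by an affine $\psi$ to local analytic coordinates $\chi=(\phi,\psi)$ with $J^{-1}\leq|\det D\chi^{-1}|\leq J$ on $[-\eta_0,\eta_0]^d$, taking $\eta_0\leq 1$.

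\textbf{Upper bound.} Let $\rho$ be compatible with $\underline{\T}$, attain $v(\underline{\T})$, and satisfy (a)(ii). On $N^r_\d(\rho)$ with $\d\leq 1$, each test obeys
\[
|\phi|^{2\tau}\leq \d^{2\sum_\ell\tau_\ell/\rho_\ell}\leq \d^2,
\]
by compatibility. Hence $K\leq C|\underline{\T}|\d^2$ on $\Theta\cap N^r_\d(\rho)$. The set $N^r_\d(\rho)$ contains the $\chi^{-1}$-image of a box whose sides are:
\begin{itemize}
\item $2\d^{1/\rho_\ell}$ for each $\ell$ with $\rho_\ell<\infty$;
\item $2\eta$ for each $\ell$ with $\rho_\ell=\infty$ (using the convention $\d^{1/\infty}=1$ once $\eta\leq 1$);
\item $2\eta$ in each $\psi$-direction.
\end{itemize}
This box has volume $\asymp \d^{v(\underline{\T})}$. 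Condition (a)(ii) then gives $V_{\t_0}(r,\e)\geq c\,\e^{v(\underline{\T})/2}$ with $\d:=(\e/C|\underline{\T}|)^{1/2}$. So $\lambda(I)\leq\frac12 v(\underline{\T})$, exactly as before.

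\textbf{Lower bound.} The upper test set gives $K\geq c_0\sum_{\tau\in\bar\T}|\phi|^{2\tau}$ near $\t_0$. So every near-explanation satisfies $|\phi|^{2\tau}\leq \e/c_0$ for all $\tau\in\bar\T$. Passing to $\chi$-coordinates,
\[
V_{\t_0}(r,\e)\leq C\,\Leb\left(\{x\in[-\eta_0,\eta_0]^c:\ \textstyle\prod_\ell |x_\ell|^{2\tau_\ell}\leq \e/c_0\ \forall\tau\in\bar\T\}\right).
\]
I would substitute $|x_\ell|=\eta_0e^{-y_\ell}$ with $y_\ell\geq 0$, so that $dx_\ell$ becomes $2\eta_0e^{-y_\ell}dy_\ell$. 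With $L:=\log(1/\e)+\text{const}$, the region becomes the polyhedron
\[
P_L:=\{y\geq 0:\tau\cdot y\geq L/2\ \ \forall \tau\in\bar\T\},
\]
and the volume is $\asymp\int_{P_L}e^{-\sum_\ell y_\ell}dy$. The key observation is the LP identity
\[
\min_{y\in P_L}\textstyle\sum_\ell y_\ell=\tfrac L2\, v(\bar\T).
\]
It follows by scaling: $y=\frac L2 s$ with $s_\ell=1/\rho_\ell\in[0,\infty)$, and compatibility of $\rho$ is exactly $\tau\cdot s\geq 1$. Hence
\[
\int_{P_L}e^{-\sum y}\,dy\leq\int_{L v/2}^\infty e^{-\sigma}\,\Leb_{c-1}\{y\geq0:\textstyle\sum y=\sigma\}\,d\sigma\leq C(1+L)^{c-1}e^{-Lv(\bar\T)/2}.
\]
This gives $V_{\t_0}(r,\e)\leq C\e^{v(\bar\T)/2}(\log\frac1\e)^{c-1}$. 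The logarithmic factor disappears after dividing by $\log\e$, so $\lambda_r(\t_0)\geq\frac12v(\bar\T)$ for small $r$, and Lemma \ref{lambdaprops}(a) concludes.

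The main obstacle is this lower-bound step. It requires the exact identification of the exponent of the polyhedral integral with the LP value $v(\bar\T)$, including coordinates with $\rho_\ell=\infty$ (that is, $s_\ell=0$) and the constant shifts coming from $\eta_0$ and $c_0$. The crude slicing bound only loses polylogarithmic factors, which are harmless here. The equality claim for jointly robust $I$ then follows by taking $\underline{\T}=\bar\T=\T$.
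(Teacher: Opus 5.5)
Your proof is correct, and the upper bound is essentially the paper's argument verbatim.

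For the lower bound you take a different, though closely related, route.

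\emph{The paper's route.} The paper uses LP duality. It takes dual multipliers $y_\tau\geq 0$ with $\sum_\tau y_\tau=v(\bar\T)$ and $\sum_\tau y_\tau\tau_\ell\leq 1$ for every $\ell$. The weighted AM--GM inequality then gives the pointwise bound $\sum_{\tau\in\bar\T}|\phi|^{2\tau}\geq\prod_\ell|\phi_\ell|^{2/v(\bar\T)}$ when $|\phi_\ell|\leq 1$. This places the near-explanations inside the single monomial region $\{\prod_\ell|\phi_\ell|\leq(\e/c_0)^{v(\bar\T)/2}\}$. Its volume is read off from Lemma \ref{monomialvolume}(a) with $k=(1,\hdots,1)$ and $h=0$.

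\emph{Your route.} You instead pass to logarithmic coordinates and integrate over the polyhedron $P_L$ directly, which amounts to inlining the proof of Lemma \ref{monomialvolume}. In those coordinates, the paper's AM--GM step is precisely the inclusion $P_L\subseteq\{\sum_\ell y_\ell\geq\frac{L}{2}v(\bar\T)\}$, certified by the dual multipliers. You obtain the same inclusion from the primal definition of $v(\bar\T)$ by rescaling $y=\frac{L}{2}s$, so your version needs no duality at all.

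Only this inequality is used, not the equality $\min_{P_L}\sum_\ell y_\ell=\frac{L}{2}v(\bar\T)$. So the step you flag as the main obstacle is easier than you suggest. The one point to state carefully is that the shifts $\frac{1}{2}\log c_0+(\sum_\ell\tau_\ell)\log\eta_0$ depend on $\tau$. You should therefore take $L=\log(1/\e)-C$ with $C$ large enough that the true region lies inside $P_L$.

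The paper's argument is shorter and reuses an existing lemma. Yours is self-contained and makes transparent why the LP value is the exponent.
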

\begin{proof}
    Use the coordinates $\chi=(\phi,\psi)$ of Appendix \ref{app:twosided}. For the upper bound, let $\rho$ be as in Definition \ref{jointrobustdef}(a)(ii) for $\underline{\T}$. On $\Theta\cap N^r_\d(\rho)$, every test satisfies $|\phi|^{2\tau}\leq\d^{2\sum_\ell\tau_\ell/\rho_\ell}\leq\d^2$, so $K\leq C|\underline{\T}|\d^2$, and $\Leb(N^r_\d(\rho))\geq c\d^{v(\underline{\T})}$. The argument of Appendix \ref{app:twosided} then gives $\lambda(I)\leq\frac{1}{2}v(\underline{\T})$. For the lower bound, write $v:=v(\bar\T)$. By linear programming duality, there are $y_\tau\geq 0$ with $\sum_\tau y_\tau=v$ and $\sum_\tau y_\tau\tau_\ell\leq 1$ for every $\ell$. With weights $p_\tau:=y_\tau/v$, the weighted AM-GM inequality gives, for $|\phi_\ell|\leq 1$,
    \[
    \sum_{\tau\in\bar\T}|\phi|^{2\tau}\geq\prod_{\tau\in\bar\T}\left(|\phi|^{2\tau}\right)^{p_\tau}=\prod_{\ell=1}^c|\phi_\ell|^{2\sum_\tau p_\tau\tau_\ell}\geq\prod_{\ell=1}^c|\phi_\ell|^{2/v}.
    \]
    Hence the near-explanations of $\t_0$ lie in $\{\prod_\ell|\phi_\ell|\leq(\e/c_0)^{v/2}\}$, whose volume is at most $C\e^{v/2}\left(\log\frac{1}{\e}\right)^{c-1}$ by Lemma \ref{monomialvolume}(a) with $k=(1,\hdots,1)$ and $h=0$. Taking log, dividing by $\log\e$ and letting $\e\downarrow 0$ gives $\lambda(I)\geq v/2$.
\end{proof}
 
\begin{example}[Joint coincidences]\label{jointex}
    Consider the exogenous setting of Example \ref{staticexo}, but suppose the agent observes a single outcome, her recognition, which she models as $Q_{(q,c)}=N(q(c-d)(q-c),1)$, with $P_*=N(0,1)$, on $\Theta=[0,1]^2$. Then $K(q,c)=\frac{1}{2}q^2(c-d)^2(q-c)^2$ and $\Theta^*=\{q=0\}\cup\{c=d\}\cup\{q=c\}$, three lines on each of which $\lambda=1/2$ away from the intersections. At $(d,d)$, with coincidences $\phi_1:=c-d$ and $\phi_2:=q-c$, the factor $q^2$ is bounded away from zero, so $K\asymp|\phi_1|^2|\phi_2|^2$. The two coincidences are tested only jointly, with $\T=\{(1,1)\}$. A vector $\rho$ is compatible with $\T$ if and only if $1/\rho_1+1/\rho_2\geq 1$, so $v(\T)=1$ and Theorem \ref{jointtwosided} gives $\lambda((d,d))=1/2$. Theorem \ref{twosided} gives only $\lambda\leq 1/2$, since $(d,d)$ is arbitrarily close to other narratives and has no upper robustness profile. In Example \ref{staticexo}, the two outcomes test the coincidences of circumstance-blame separately, so $\lambda=1$. Since recognition depends on the product, violating either coincidence alone leaves the fit unchanged.
\end{example}

\section{Competing Narratives}\label{supp:competing}

\citet{ba} studies an agent who holds two models, uses one at a time, and switches to the other when its Bayes factor exceeds a threshold. \citet{ba} characterizes the models that persist by their \textit{asymptotic accuracy} in predicting the equilibrium distribution of outcomes and the \textit{tightness} of their priors around such equilibria. In the main analysis of \citet{ba}, every model has finitely many parameters and every prior has full support, so a correctly specified model places positive prior probability on the parameters that induce the true distribution. With a continuum of parameters, these parameters form a Lebesgue-null set by Assumption \ref{fc2}(ii), and prior tightness is zero for every model. We show that, of two correctly specified models in the exogenous learning framework, the model with the smaller learning coefficient persists and the other does not, whatever the priors.

\begin{defn}[Competing models]\label{competingdef}
    Consider the exogenous learning framework. A \textbf{model} is a triple $\M_i=(\Theta_i,\{Q^i_\t\}_{\t\in\Theta_i},\mu^i_0)$ such that $(\Theta_i,\{Q^i_\t\}_{\t\in\Theta_i},\mu^i_0,P_*)$ is a correctly specified environment that satisfies Assumptions \ref{fc1} and \ref{fc2}, and $\underline{\lambda}_i$ is its learning coefficient $\underline{\lambda}$ of Definition \ref{concentrationset}. The \textbf{marginal likelihood} of $\M_i$ and the \textbf{Bayes factor} of $\M_j$ against $\M_i$ after $y^t$ are
    \[
    \ell_t(i):=\int_{\Theta_i}\prod_{s\leq t}Q^i_\t(y_s)\mu^i_0(\t)d\t,\quad B_t(j\mid i):=\frac{\ell_t(j)}{\ell_t(i)}.
    \]
    Given models $\M_i$ and $\M_j$ and a switching threshold $\a\geq 1$, the agent starts with $\iota_0\in \{i, j\}$ and uses model $\iota_t\in\{i,j\}$ at date $t$, where
    \begin{equation}\label{eq:switching}
    \iota_{t+1}:=\begin{cases}
    j & \text{if }\iota_t=i\text{ and }B_t(j\mid i)>\a,\\
    i & \text{if }\iota_t=j\text{ and }B_t(j\mid i)<1/\a,\\
    \iota_t & \text{otherwise.}
    \end{cases}
    \end{equation}
    Model $\M_i$ \textbf{persists} against $\M_j$ if $\P(\exists T:\iota_t=i\text{ for all }t\geq T)>0$.
\end{defn}

With $\a>1$, the switching rule \eqref{eq:switching} and persistence are those of \citet{ba}. A misspecified model never persists against a correctly specified one, since the logarithm of the Bayes factor then grows linearly in $t$.

\begin{proposition}[Persistence]\label{persistence}
    Let $\M_i$ and $\M_j$ be models with $\underline{\lambda}_i\neq\underline{\lambda}_j$, and let $\a\geq 1$. Model $\M_i$ persists against $\M_j$ if and only if $\underline{\lambda}_i<\underline{\lambda}_j$. In this case, almost surely, $B_t(j\mid i)\leq t^{-\d}$ for all large $t$, for every $\d<\underline{\lambda}_j-\underline{\lambda}_i$, and the agent uses $\M_i$ at all large $t$, regardless of initial model.
\end{proposition}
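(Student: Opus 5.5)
The plan is to show that the marginal likelihood of each model, normalized by the true likelihood, behaves like $t^{-\underline{\lambda}_i}$ up to logarithmic factors, almost surely. Both models are correctly specified, so for each $\M_i$ we may take $\t_0^i\in\Theta_i^*$ with $Q^i_{\t_0^i}=P_*$. Then
\[
\ell_t(i)=\prod_{s\leq t}P_*(y_s)\cdot Z^i_t(\Theta_i),
\]
where $Z^i_t$ is the evidence of Section \ref{supp:main1} computed with $f(y,\t)=\log(P_*(y)/Q^i_\t(y))$. The common factor $\prod_{s\leq t}P_*(y_s)$ cancels in the Bayes factor, so
\[
B_t(j\mid i)=\frac{Z^j_t(\Theta_j)}{Z^i_t(\Theta_i)}.
\]
Both models share the data $y_1,y_2,\hdots$, so all almost-sure bounds can be intersected on one probability space.

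Next, I would extract two-sided evidence bounds from the proof of Theorem \ref{main1}, applied to each model separately. The lower bound \eqref{eq:evidence} together with Lemma \ref{lilbound} gives, almost surely for all large $t$,
\[
Z^i_t(\Theta_i)\geq c\,t^{-\underline{\lambda}_i}(\log t)^{-1}.
\]
The upper bound from the proof of part (ii) sums \eqref{eq:chartupper} over charts and adds the exponentially small terms from $\{K>\e\}$ and from charts outside $\Gamma_*$. It gives
\[
Z^j_t(\Theta_j)\leq C(\log t)^{d_j+\Lambda^2}t^{-\underline{\lambda}_j}.
\]
Here $\Lambda$ is almost surely finite. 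Combining the two, almost surely for all large $t$,
\[
B_t(j\mid i)\leq C'(\log t)^{d_j+\Lambda^2+1}\,t^{-(\underline{\lambda}_j-\underline{\lambda}_i)}\leq t^{-\d}
\]
for every $\d<\underline{\lambda}_j-\underline{\lambda}_i$. This is the rate claimed when $\underline{\lambda}_i<\underline{\lambda}_j$. The main obstacle is verifying that these bounds hold with an almost surely finite random constant uniformly in $t$, rather than only in probability. The constant $\Lambda$ in Lemma \ref{lilbound} provides exactly this, so I expect the step to go through by careful bookkeeping of the constants.

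Then I would translate the rate into switching behavior. If $\underline{\lambda}_i<\underline{\lambda}_j$, then $B_t(j\mid i)\to 0$ almost surely, so eventually $B_t(j\mid i)<1/\a\leq 1\leq\a$. From that date on, the agent never switches from $i$ to $j$. If she is using $j$ at that date, she switches to $i$ at the next date and stays there. Hence she uses $\M_i$ at all large $t$ almost surely, whatever $\iota_0$ is, and in particular $\M_i$ persists.

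Conversely, suppose $\underline{\lambda}_i>\underline{\lambda}_j$. Exchanging the roles of $i$ and $j$ gives $B_t(j\mid i)\to\infty$ almost surely. Then the agent eventually uses $\M_j$ forever, so $\P(\exists T:\iota_t=i\text{ for all }t\geq T)=0$ and $\M_i$ does not persist. Since $\underline{\lambda}_i\neq\underline{\lambda}_j$, these two cases are exhaustive, which gives the equivalence.
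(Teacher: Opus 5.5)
Your proposal is correct and follows essentially the paper's route: the paper factors $B_t(j\mid i)=Z^j_t(\Theta_j)/Z^i_t(\Theta_i)$, bounds each evidence by \eqref{eq:evidence}, Lemma \ref{lilbound}, and the upper bound from the proof of Theorem \ref{main1}(ii) (packaged as the two-sided Lemma \ref{bfbounds}), and then reads off the switching behavior exactly as you do. The only cosmetic difference is in the converse. The paper observes that staying with $\M_i$ forces $B_t(j\mid i)\leq\a$ and invokes $B_t(j\mid i)\to\infty$. You instead swap the roles of $i$ and $j$, which is valid because the switching rule is symmetric under $B_t(i\mid j)=1/B_t(j\mid i)$.
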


The hypothesis $\underline{\lambda}_i\neq\underline{\lambda}_j$ cannot be dropped. If $\M_i$ and $\M_j$ differ only in their priors and have a single KL-minimizing parameter $\t^*$, then $B_t(j\mid i)$, the posterior mean of $\mu^j_0/\mu^i_0$ under $\M_i$, converges almost surely to $\mu^j_0(\t^*)/\mu^i_0(\t^*)$ by Theorem \ref{main1}, and the priors and the threshold determine persistence, as in \citet{ba}.

By Proposition \ref{persistence}, no model is \textit{globally robust} in the sense of \citet{ba}, that is, at a given prior, no model persists against every competing model at every prior of the competing model. Every model has a positive learning coefficient (Proposition \ref{volumeequiv}), and the learning coefficient of a competing model can be arbitrarily small.\footnote{For $k\in\N$ and $|g|\leq 1/2$ with $\E_{P_*}[g]=0$ and $P_*(g\neq 0)>0$, the model with parameter space $[0,1]$ and $Q_\t:=(1+\t^kg)P_*$ has $K\asymp\t^{2k}$ and learning coefficient $1/(2k)$.}

\begin{example}[Competing narratives]\label{competingex}
    Consider Example \ref{staticexo}, where $\Theta^*=I_1\cup I_2$ and $\lambda(I_1)=1/2<1=\lambda(I_2)$ (Example \ref{resex}). The agent holds two models with the subjective distributions of Example \ref{staticexo}, the \textit{low-ability} model $\M_1$ with $\Theta_1=[0,\bar q]\times[0,1]$ and the \textit{high-ability} model $\M_2$ with $\Theta_2=[\bar q,1]\times[0,1]$, where $\bar q\in(0,d)$. Since $\bar q<d$, $\Theta_1\cap\Theta^*=I_1$ and $\Theta_2\cap\Theta^*=I_2$, so the low-ability model explains the data only by self-blame, and the high-ability model only by circumstance-blame. Every point of $\Theta$ within distance $\bar q$ of $I_1$ lies in $\Theta_1$, and $(d,d)$ is interior to $\Theta_2$, so the near-explanations of each narrative are as in Example \ref{resex}, and $\underline{\lambda}_1=1/2<1=\underline{\lambda}_2$. By Proposition \ref{persistence}, at every switching threshold and whichever model she starts with, the agent almost surely uses the low-ability model at all large $t$, and $B_t(2\mid 1)\leq t^{-\d}$ for all large $t$, for every $\d<1/2$. She selects the same narrative as the agent of Example \ref{resex}, whose model contains both narratives. With the subjective distributions of Example \ref{robustex}, $\underline{\lambda}_2=1/k$, so the conclusion reverses for $k\geq 3$, and Proposition \ref{persistence} is silent for $k=2$.
\end{example}

Let $K^i_t$ and $Z^i_t$ be as in Section \ref{supp:main1} for the environment of $\M_i$, with $\t_0\in\Theta^*_i$. Since $\M_i$ is correctly specified, $Q^i_{\t_0}=P_*$ and $\ell_t(i)=Z^i_t(\Theta_i)\prod_{s\leq t}P_*(y_s)$, so
\begin{equation}\label{eq:bfdecomp}
B_t(j\mid i)=\frac{Z^j_t(\Theta_j)}{Z^i_t(\Theta_i)}.
\end{equation}

\begin{lemma}[Bayes factor bounds]\label{bfbounds}
    Let $\M_i$ and $\M_j$ be models. There is an almost surely finite random variable $C$ such that, almost surely, for all large $t$,
    \[
    (\log t)^{-C}t^{\underline{\lambda}_i-\underline{\lambda}_j}\leq B_t(j\mid i)\leq(\log t)^Ct^{\underline{\lambda}_i-\underline{\lambda}_j}.
    \]
\end{lemma}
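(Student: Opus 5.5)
The plan is to bound numerator and denominator of \eqref{eq:bfdecomp} separately. For each model, the aim is to show that the normalized evidence $Z^\iota_t(\Theta_\iota)$, $\iota\in\{i,j\}$, is of order $t^{-\underline{\lambda}_\iota}$ up to a random power of $\log t$. Both bounds come from the proof of Theorem \ref{main1} in Supplemental Appendix \ref{supp:main1}, applied to the environment $(\Theta_\iota,\{Q^\iota_\t\},\mu^\iota_0,P_*)$. This environment satisfies Assumptions \ref{fc1} and \ref{fc2} by Definition \ref{competingdef}. Since it is correctly specified, we may take $\t_0\in\Theta^*_\iota$ with $Q^\iota_{\t_0}=P_*$, so $K^\iota_t$ is exactly the empirical KL divergence used there.

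\textbf{Step 1: lower bound.} Fix $\bar\t\in\Theta^{**}_\iota$. By \eqref{eq:evidence}, almost surely for all large $t$,
\[
Z^\iota_t(\Theta_\iota)\geq Z^\iota_t(\Theta_\iota\cap B(\bar\t,r))\geq c'_\iota e^{-X^\iota_t}t^{-\underline{\lambda}_\iota}.
\]
By Lemma \ref{lilbound}, $X^\iota_t\leq\Lambda_\iota\sqrt{2\log\log t}$, and this is at most $\log\log t$ for all large $t$. Absorbing $c'_\iota$ as well, we get $Z^\iota_t(\Theta_\iota)\geq(\log t)^{-2}t^{-\underline{\lambda}_\iota}$ eventually.

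\textbf{Step 2: upper bound.} As in the proof of Theorem \ref{main1}(ii), sum \eqref{eq:chartupper} over the charts in $\Gamma_*\cap\Gamma^+$. The remaining pieces are handled as in part (i):
\begin{itemize}
\item the region $\{K>\e\}$ and the charts in $\Gamma^+\setminus\Gamma_*$ contribute at most $e^{-t\e_{\min}/2}(\log t)^{\Lambda_\iota^2}$, by AM-GM and the second bound of Lemma \ref{lilbound};
\item the charts outside $\Gamma^+$ contribute nothing, by Lemma \ref{image}.
\end{itemize}
This gives $Z^\iota_t(\Theta_\iota)\leq C_\iota(\log t)^{d+\Lambda_\iota^2}t^{-\underline{\lambda}_\iota}$ almost surely for all large $t$, because the exponentially small terms are eventually dominated by $t^{-\underline{\lambda}_\iota}$. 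The deterministic constant $C_\iota$ can then be absorbed into one more power of $\log t$.

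\textbf{Step 3: combine.} Both models are driven by the same sample $y_1,y_2,\hdots$, so intersect the two probability-one events. Set
\[
C:=d+3+\Lambda_i^2+\Lambda_j^2,
\]
which is almost surely finite. Dividing the upper bound for $Z^j_t$ by the lower bound for $Z^i_t$ in \eqref{eq:bfdecomp} gives $B_t(j\mid i)\leq(\log t)^{C}t^{\underline{\lambda}_i-\underline{\lambda}_j}$. Symmetrically, dividing the lower bound for $Z^j_t$ by the upper bound for $Z^i_t$ gives the lower bound.

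The main obstacle is checking that the bounds from the proof of Theorem \ref{main1} apply to the full evidence $Z^\iota_t(\Theta_\iota)$ and not only to a neighborhood or its complement. For the upper bound this means re-deriving that the charts not meeting $g^{-1}(\Theta^*)$ decay exponentially with the random constant $\Lambda_\iota$. I would also confirm that Lemma \ref{lilbound} is applied with $\e'=\e_{\min}$ for each model separately, so that its random time $T_0$ and constant $\Lambda_\iota$ are well defined per model before taking maxima.
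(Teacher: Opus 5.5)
Your proposal is correct and follows the paper's proof essentially step for step: the lower bound on each $Z^\iota_t(\Theta_\iota)$ from \eqref{eq:evidence} and Lemma \ref{lilbound}, the upper bound from the proof of Theorem \ref{main1}(ii), absorption of the deterministic constants into extra powers of $\log t$, and division via \eqref{eq:bfdecomp}. The only slip is notational: the two models may have different dimensions $d_i$ and $d_j$, so the exponent in your upper bound for $Z^\iota_t$ should be $d_\iota+\Lambda_\iota^2$, and $C$ should use $\max\{d_i,d_j\}$ (or $d_i+d_j$, as in the paper) in place of $d$.
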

\begin{proof}
    Apply the proof of Theorem \ref{main1} to the environment of $\M_i$, with $\Theta_i\subseteq\R^{d_i}$. By \eqref{eq:evidence} and Lemma \ref{lilbound}, $Z^i_t(\Theta_i)\geq c_i(\log t)^{-1}t^{-\underline{\lambda}_i}$, and by the proof of part (ii), $Z^i_t(\Theta_i)\leq C_i(\log t)^{d_i+\Lambda_i^2}t^{-\underline{\lambda}_i}$, almost surely for all large $t$, where $c_i,C_i>0$ are constants and $\Lambda_i$ is the random variable of Lemma \ref{lilbound}. Hence $(\log t)^{-2}t^{-\underline{\lambda}_i}\leq Z^i_t(\Theta_i)\leq(\log t)^{d_i+\Lambda_i^2+1}t^{-\underline{\lambda}_i}$ once $\log t\geq\max\{C_i,1/c_i\}$. The same bounds hold for $\M_j$, and \eqref{eq:bfdecomp} gives the claim with $C:=d_i+d_j+\Lambda_i^2+\Lambda_j^2+3$.
\end{proof}

\begin{proof}[Proof of Proposition \ref{persistence}]
   First note that if $B_t(j\mid i)<1/\a$ for all $t\geq T$, then $\iota_t=i$ for all $t\geq T+1$, regardless of $\iota_0$. At any $t\geq T$, if $\iota_t=i$, then $B_t(j\mid i)<1/\a\leq\a$, so $\iota_{t+1}=i$, and if $\iota_t=j$, then $B_t(j\mid i)<1/\a$, so $\iota_{t+1}=i$. Also note that on the event $\{\iota_t=i\text{ for all }t\geq T\}$, $B_t(j\mid i)\leq\a$ for all $t\geq T$, since otherwise the rule would prescribe $\M_j$.

    Suppose $\underline{\lambda}_i<\underline{\lambda}_j$, and let $\d<\underline{\lambda}_j-\underline{\lambda}_i$. By Lemma \ref{bfbounds}, almost surely $B_t(j\mid i)\leq(\log t)^Ct^{-(\underline{\lambda}_j-\underline{\lambda}_i)}\leq t^{-\d}$ for all large $t$. Hence, almost surely, there is $T$ with $B_t(j\mid i)<1/\a$ for all $t\geq T$, so $\iota_t=i$ for all $t\geq T+1$ by the first observation, regardless of $\iota_0$, and $\M_i$ persists against $\M_j$. Conversely, suppose $\underline{\lambda}_i>\underline{\lambda}_j$. By Lemma \ref{bfbounds}, $B_t(j\mid i)\to\infty$ almost surely, so for every $T$, almost surely $B_t(j\mid i)>\a$ for some $t\geq T$. By the second observation, $\P(\iota_t=i\text{ for all }t\geq T)=0$ for every $T$, and $\M_i$ does not persist against $\M_j$.
\end{proof}
\end{document}